\documentclass[11pt,a4paper]{article}

\usepackage{amsmath,amssymb,amsfonts,amsthm}
\newtheorem{theorem}{Theorem}

\newtheorem{remark}{Remark}
\newtheorem{proposition}{Proposition} 

\usepackage[margin=1.25in]{geometry}
\usepackage[authoryear,round]{natbib}
\usepackage{graphicx} 
\usepackage{epstopdf}
\usepackage[dvipsnames]{xcolor}
\usepackage[colorlinks=true,linkcolor=red,citecolor=red,urlcolor=blue]{hyperref}
\usepackage[title]{appendix}
\usepackage{setspace}
\usepackage{booktabs}
\usepackage[inline]{enumitem}
\usepackage{multirow}
\usepackage{subcaption}
\theoremstyle{definition}

\title{\textbf{Robust Interpolated Quantile Estimators: Asymptotic Theory and Efficiency}}
\usepackage{authblk}

\author[1]{Sa\"{i}d Maanan\thanks{Corresponding author: \texttt{maanan.said@gmail.com}}}
\author[2]{Azzouz Dermoune}
\author[1]{Ahmed El Ghini}
\author[3]{Daoud Ounaissi}

\affil[1]{LEAM, Mohammed V University in Rabat,  Morocco}
\affil[2]{Laboratoire Paul Painlevé, USTL–CNRS–UMR 8524, Lille, France}
\affil[3]{ESA, Angers, France}

\date{}
\newcommand{\keywords}[1]{\par\noindent\textbf{Keywords:} #1}

\begin{document}
\maketitle
\begin{abstract}
This paper introduces a unified family of interpolated quantile estimators obtained by augmenting the check loss with quadratic, Huber, or Tukey's bisquare regularization. The estimators are indexed by the quantile level $\tau$ and an interpolation parameter $h$. They reduce to the classical empirical quantile when $h=0$, while increasing $h$ continuously shifts the effective probability level toward the center of the distribution.
A complete asymptotic theory is developed. For the quadratic interpolation, the effective quantile level is characterized by an interpolation equation yielding a closed-form parametrization of neighboring quantiles. Asymptotic normality is established for all three interpolated estimators via M-estimation, and a decomposition of the asymptotic variance explains how efficiency depends on the underlying distribution.
Numerical experiments show that the quadratic interpolated estimator can reduce asymptotic variance by up to 36\% for light-tailed distributions and up to 57\% for heavy-tailed or asymmetric distributions for suitable interpolation strength. The framework is extended to linear quantile regression, where Monte Carlo experiments show that Huber interpolation is beneficial only in a narrow neighborhood of the median, while ordinary quantile regression remains preferable elsewhere. An application to daily log-returns illustrates the practical relevance of the proposed methodology for tail estimation under heavy tails and asymmetry.
\end{abstract}
\keywords{Quantile estimation, robust M-estimation, Huber loss, bisquare loss,
regularization, asymptotic efficiency, generalized quantiles}
\par\noindent\textbf{MSC2020 subject classifications:} 62G30, 62G20, 62G35, 62F35, 62J05.

\clearpage
\section{Introduction}

Quantile estimation is a central problem in statistics, with applications ranging from risk measurement in finance to robust inference in econometrics and survival analysis. The classical empirical quantile estimator, defined as an order statistic, is simple and distribution-free, but its discontinuous influence function can induce substantial finite-sample variability, particularly in the tails of the distribution or under heavy-tailed noise. This has motivated a long line of work on smoothed, regularized, and robust alternatives.

The quantile regression estimator of \citet{Koenker1978} extends scalar quantile estimation to the linear model and has become the standard tool for conditional quantile inference. Under the asymmetric Laplace model, it coincides with the maximum likelihood estimator and attains the Cramér--Rao lower bound asymptotically \citep{Yu2001, Komunjer2005}. Despite this optimality, the non-differentiability of the check loss can cause instability in finite samples, a phenomenon well documented in the robust statistics literature \citep{Hampel2011}. Smoothed versions of the check loss have been studied as a remedy \citep{Fernandes2019, Kaplan2016}, as have kernel-based and bandwidth-regularized quantile estimators \citep{Chaudhuri1991, Yu1998}.

A parallel line of research concerns M-estimators that interpolate between different loss functions. The Huber estimator \citep{Huber1964} provides a canonical example, bridging least squares and least absolute deviations through a quadratic-linear loss. In the quantile regression context, similar ideas have appeared in proposals that blend the check loss with quadratic or other smooth penalties \citep{Belloni2011}, often motivated by computational, robustness, or shrinkage considerations. In contrast, our focus is on the induced deformation of the effective quantile level and its probabilistic interpretation.

The present paper takes a different and more unified approach. We introduce three families of interpolated quantile estimators obtained by adding a robust regularization term to the check loss: a quadratic penalty, a Huber penalty, and Tukey's bisquare penalty \citep{Beaton1974}. Rather than treating the regularization parameter purely as a tuning device, we show that it induces a well-defined and monotone deformation of the effective quantile level, so that each interpolated estimator may be interpreted as a generalized empirical quantile at an implicitly defined probability level. This probability-space perspective unifies the three constructions and connects them to the classical theory of empirical quantile estimation.

Our main contributions are the following. First, we establish an asymptotic theory for the quadratic interpolation estimator, including a characterization of the effective quantile level through an interpolation equation, asymptotic normality via M-estimation theory, an explicit decomposition of the asymptotic variance identifying sufficient conditions under which the interpolated estimator is asymptotically more efficient than the empirical quantile estimator. Second, we derive the asymptotic distribution of the Huber interpolation estimator and characterize its probability-space deformation, highlighting the role of bounded influence in limiting the shift of the effective quantile level. Third, we study the bisquare interpolation estimator, establishing existence of solutions and asymptotic normality, and discussing its redescending influence function as a mechanism for discarding the contribution of extreme residuals. 
Finally, we extend the framework to linear quantile regression and investigate its finite-sample behavior through a comprehensive Monte Carlo study across the full range of quantile levels. The experiments show that the finite-sample benefit of Huber interpolation is confined to a narrow neighborhood of the median quantile, while the classical quantile regression estimator remains preferable over most of the quantile range. The methodology is further illustrated through an application to CAC-40 daily log-returns, a setting exhibiting the heavy tails and asymmetry that motivate the proposed framework.

The remainder of the paper is organized as follows. Section~\ref{subsec:inter_estim} introduces the generalized empirical quantile framework and defines the three interpolated estimators. Section~\ref{sec:one} develops the complete theory for quadratic interpolation. Sections~\ref{sec:two} and~\ref{sec:three} treat the Huber and bisquare interpolations respectively. Section~\ref{sec:reg} extends the framework to linear regression. Section~\ref{sec:cac40} provides real-data illustrations using CAC-40 returns.

\section{Generalized empirical quantiles}
\label{subsec:inter_estim}

We begin by recalling the classical empirical quantile and then extend the concept to a broader class of estimators.
Let
\(y_{(1)} \le \cdots \le y_{(n)}\)
denote the order statistics and
let
\[
F_n(x)
=
\frac1n
\sum_{i=1}^n
\mathbf 1_{\{y_i\le x\}}
\]
be the empirical distribution function.
The classical empirical quantile estimator of order $\tau$ is given by the order statistic $y_{(k)}$, where the integer $k$ satisfies
\(
\frac{k-1}{n}
<
\tau
\le
\frac{k}{n}.
\)
Equivalently,
\(
F_n(y_{(k)})
=
\frac{k}{n}.
\)
Under standard regularity assumptions,
\(
F_n(y_{(k)})
\to
\tau
\) and
\(
y_{(k)}
\to
q(\tau),
\)
where $q(\tau)$ denotes the theoretical quantile of order $\tau$.
This characterization suggests a broader interpretation of quantile estimation.
More generally, any sequence of estimators \(q_n^*\) satisfying
\(q_n^*\to q(p)\) and \(F_n(q_n^*)\to p\)
may be interpreted as estimating the quantile of order \(p\), 
even when the level \(p\) is not fixed a priori through an order statistic.

\subsection{Generalized empirical quantiles and interpolated estimators}

The objective of this paper is to investigate three families of interpolated quantile estimators obtained by combining the check loss with a robust regularization term.
The quantile check loss is given by \(\rho_\tau(u)=u\bigl(\tau-\mathbf 1_{\{u<0\}}\bigr),\)
while \(\rho_H\) and \(\rho_B\) denote the Huber and Tukey bisquare loss functions, respectively, whose explicit expressions are introduced in the corresponding sections below.
For a sample $(y_1,\ldots,y_n)$, the first estimator is the quadratic interpolation estimator
\[
\widehat q_Q(\tau,h)
=
\arg\min_{q\in\mathbb R}
\left\{
\sum_{i=1}^n
\rho_\tau(y_i-q)
+
\frac{h}{2}
\sum_{i=1}^n
(y_i-q)^2
\right\}.
\]
The second estimator is obtained by replacing the quadratic penalty with the Huber loss,
\[
\widehat q_H(\tau,h)
=
\arg\min_{q\in\mathbb R}
\left\{
\sum_{i=1}^n
\rho_\tau(y_i-q)
+
h
\sum_{i=1}^n
\rho_H(y_i-q)
\right\}.
\]
The third estimator uses Tukey's bisquare loss \citep{Beaton1974},
\[
\widehat q_B(\tau,h)
=
\arg\min_{q\in\mathbb R}
\left\{
\sum_{i=1}^n
\rho_\tau(y_i-q)
+
h
\sum_{i=1}^n
\rho_B(y_i-q)
\right\}.
\]
Each estimator depends on a quantile level $\tau\in(0,1)$ and an interpolation parameter $h\ge0$.
When $h=0$, all three estimators reduce to the classical empirical quantile.
As $h$ increases, the regularization term progressively modifies the target quantile, yielding a class of interpolated estimators whose theoretical properties require appropriate existence, uniqueness, and regularity conditions. Indeed, although these estimators are obtained through penalized optimization problems rather than direct order statistics, they remain associated with empirical probability levels through the empirical distribution function.

For the interpolated estimators, the empirical probability level is no longer fixed a priori.
Instead, each regularization parameter $h$ induces an effective empirical quantile order through the empirical distribution function. For example, for the quadratic interpolation, we define
\[
\hat p_Q(\tau,h)
=
F_n(\hat q_Q(\tau,h)).
\]
Similarly, the Huber and bisquare interpolations induce the empirical levels
\[
\hat p_H(\tau,h)
=
F_n(\hat q_H(\tau,h)),
\]
and
\[
\hat p_B(\tau,h)
=
F_n(\hat q_B(\tau,h)).
\]
When $h=0$, all three interpolated estimators coincide with the classical empirical quantile, so that
\[
\hat p_Q(\tau,0)
=
\hat p_H(\tau,0)
=
\hat p_B(\tau,0)
=
F_n(y_{(k)})
=
\frac{k}{n},
\]
with
\[
\frac{k-1}{n}<\tau\le\frac{k}{n}.
\]
In the quadratic interpolation case, as \(h\to\infty\),
\(
\hat q_Q(\tau,h)\to\bar y,
\)
so that
\(
\hat p_Q(\tau,h)\to F_n(\bar y).
\)
Similarly, for the Huber interpolation,
\(
\hat q_H(\tau,h)
\to
\hat\theta_H
\)
as
\(
h\to\infty,
\)
where $\hat\theta_H$ is the Huber location estimator, defined as the minimizer of
\(
\sum_{i=1}^n
\rho_H(y_i-\theta).
\)
Hence, as \(h\to\infty\),
\(
\hat p_H(\tau,h)
\to
F_n(\hat\theta_H).
\)
Consequently, under appropriate existence, uniqueness, and regularity conditions, the interpolated estimators considered in this paper may be viewed as defining a new family of generalized empirical quantile estimators combining geometric regularization, robustness, and quantile-type asymptotic behavior.
The remainder of the paper studies these three interpolated quantile estimators separately.
We begin with the quadratic interpolation estimator, for which a complete asymptotic theory is established.
We then investigate the Huber interpolation estimator and finally the Tukey bisquare interpolation estimator.

\section{Quadratic interpolation}
\label{sec:one}
\subsection{Interpolation equation and parametrization}   
\label{subsec:oneone}

Let $q(\tau)=F^{-1}(\tau)$
denote the theoretical quantile of order
$\tau\in(0,1),$
where \(F\) is the distribution function of a real-valued random variable \(Y\).
The classical quantile identity is
$F(q(\tau))=\tau.$
Let
$m=\mathbb E(Y)$
denote the population mean.
Given a reference probability level
$\tau\in(0,1),$
we seek to parametrize neighboring probability levels through a deformation parameter linking quantiles to the central location \(m\).
More precisely, for every probability level $\tau'$ located between $\tau$ and $F(m),$ we define the interpolation parameter
\[
h(\tau,\tau')
=
\frac{\tau-\tau'}
{q(\tau')-m},
\]
provided
$q(\tau')\neq m.$
The parameter
$h(\tau,\tau')$
measures the ratio between the probability displacement
$\tau-\tau'$
and the corresponding geometric displacement
$q(\tau')-m$
relative to the mean.
Using the quantile identity
$F(q(\tau'))=\tau',$
we obtain
\[
F(q(\tau'))
+
h(\tau,\tau')
\bigl(q(\tau')-m\bigr)
=
\tau.
\]
This naturally leads to the interpolation equation
\[
F(q)+h(q-m)=\tau.
\]
The following proposition shows that the interpolation parameter
$h(\tau,\tau')$
provides an exact parametrization of probability levels through the interpolation equation.

\begin{proposition}
	\label{prop:parametrization}
	
	Assume that the distribution function \(F\) is strictly increasing on the interval
	$[\min\{q(\tau'),m\},\max\{q(\tau'),m\}].$
	Let
	$\tau\in(0,1),$
	and let
	$\tau'$
	belong to the interval between
	$\tau$
	and
	$F(m).$
	Define
	\[
	h(\tau,\tau')
	=
	\frac{\tau-\tau'}
	{q(\tau')-m}.
	\]
	Then
	$q(\tau')$
	is the unique solution of
	$F(q)+h(\tau,\tau')(q-m)=\tau.$
	Consequently, if
	$q(\tau,h)$
	denotes the solution of the interpolation equation
	$F(q)+h(q-m)=\tau,$
	then
	$q(\tau,h(\tau,\tau'))=q(\tau')$
	and
	$F(q(\tau,h(\tau,\tau')))=\tau'.$
\end{proposition}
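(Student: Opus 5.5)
Existence is immediate: since $F(q(\tau'))=\tau'$ by the quantile identity (valid because $F$ is continuous and strictly increasing near $q(\tau')$), substituting $q=q(\tau')$ into the equation gives $\tau'+h(\tau,\tau')(q(\tau')-m)=\tau'+(\tau-\tau')=\tau$. So $q(\tau')$ solves $F(q)+h(\tau,\tau')(q-m)=\tau$. The substance is uniqueness, and I would obtain it from monotonicity of $G(q)=F(q)+h(q-m)$ with $h=h(\tau,\tau')$.

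The first step is to show $h(\tau,\tau')\ge 0$. Because $\tau'$ lies between $\tau$ and $F(m)$, there are two cases. If $F(m)\le\tau'\le\tau$, then $\tau-\tau'\ge0$, and $\tau'\ge F(m)$ with $F$ strictly increasing on the relevant interval gives $q(\tau')\ge m$; since $q(\tau')\neq m$ is assumed, $q(\tau')-m>0$. If $\tau\le\tau'\le F(m)$, both numerator and denominator are $\le 0$, with the denominator strictly negative. In either case $h\ge0$, and $h=0$ exactly when $\tau'=\tau$.

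With $h\ge0$, $G$ is nondecreasing on $\mathbb R$, as a sum of nondecreasing functions. If $h>0$, $G$ is strictly increasing, so the root is unique. If $h=0$, the equation reduces to $F(q)=\tau=\tau'$, and uniqueness needs strict monotonicity of $F$ near $q(\tau')$. I expect this to be the main obstacle, since the hypothesis only asserts strict increase on $[\min\{q(\tau'),m\},\max\{q(\tau'),m\}]$. I would handle the case $h=0$ by taking $q(\tau')=F^{-1}(\tau')$ as the generalized inverse, so that $q(\tau')$ is the solution singled out by the definition of the quantile, and note that this is the degenerate endpoint. Alternatively, I would simply restrict to $\tau'\neq\tau$, where $h>0$ and strict monotonicity of $G$ gives uniqueness on all of $\mathbb R$.

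For the consequence, $q(\tau,h)$ is well defined as the unique solution whenever $h>0$, since $G$ is then strictly increasing. The uniqueness just proved gives $q(\tau,h(\tau,\tau'))=q(\tau')$, and applying $F$ together with $F(q(\tau'))=\tau'$ yields $F(q(\tau,h(\tau,\tau')))=\tau'$.
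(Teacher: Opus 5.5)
Your proof is correct and follows the paper's route. Existence comes from substituting $q(\tau')$ and using $F(q(\tau'))=\tau'$. Uniqueness comes from strict monotonicity of $G(q)=F(q)+h(q-m)$.

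It is more careful than the paper's proof in two places. The paper asserts that $G$ is strictly increasing directly from ``$F$ strictly increasing''. It never checks that $h(\tau,\tau')\ge 0$, and your sign analysis supplies exactly that missing step. It also tacitly uses strict increase of $F$ on all of $\mathbb R$, not just on the stated interval. Your remark that for $h>0$ mere monotonicity of $F$ suffices is what makes the local hypothesis enough.

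Your worry about $h=0$, i.e.\ $\tau'=\tau$, is justified.
\begin{itemize}
\item When $\tau<F(m)$, taking $q$ as the generalized inverse does give uniqueness. Left of $q(\tau)$ one has $F<\tau$ by definition of the infimum. Right of it, $F>\tau$ by strict increase on $[q(\tau),m]$ and monotonicity beyond $m$.
\item When $\tau>F(m)$, uniqueness can genuinely fail. Take a $0.9/0.1$ mixture of uniforms on $[0,1]$ and $[2,3]$ with $\tau=0.9$. Then $m=0.7$ and $q(0.9)=1$, and $F$ is strictly increasing on $[0.7,1]$. Yet every $q\in[1,2]$ solves $F(q)=0.9$.
\end{itemize}
So your first suggestion, reading $q(\tau')$ as the generalized inverse and calling it the distinguished solution, does not by itself deliver uniqueness. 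The clean resolutions are your alternative of excluding $\tau'=\tau$, or assuming $F$ strictly increasing everywhere, as the paper's proof effectively does.
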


\begin{proof}
	
	Since
	$F(q(\tau'))=\tau',$
	one has
	\[
	F(q(\tau'))
	+
	h(\tau,\tau')
	\bigl(q(\tau')-m\bigr)
	=
	\tau'
	+
	\frac{\tau-\tau'}
	{q(\tau')-m}
	\bigl(q(\tau')-m\bigr).
	\]
	Hence
	\[
	F(q(\tau'))
	+
	h(\tau,\tau')
	\bigl(q(\tau')-m\bigr)
	=
	\tau.
	\]
	Therefore,
	$q(\tau')$
	satisfies the interpolation equation
	$F(q)+h(\tau,\tau')(q-m)=\tau.$
	Since \(F\) is strictly increasing, the function
	$q\mapsto F(q)+h(q-m)$
	is also strictly increasing.
	Hence the solution is unique, implying
	$q(\tau,h(\tau,\tau'))=q(\tau').$
	Applying \(F\) to both sides yields
	$F(q(\tau,h(\tau,\tau')))=\tau'.$
\end{proof}

\subsection{Variational formulation}                      

The probability-space parametrization introduced in subsection~\ref{subsec:oneone} admits a natural variational interpretation based on penalized quantile regression.
For a fixed quantile level
$\tau\in(0,1),$
define the quantile check loss
$\rho_\tau(u)=u\bigl(\tau-\mathbf 1_{\{u<0\}}\bigr).$
For every interpolation parameter
$h\ge0,$
consider the population optimization problem
\[
q_Q(\tau,h)
=
\arg\min_{q\in\mathbb R}
\left\{
\mathbb E[\rho_\tau(Y-q)]
+
\frac h2
\mathbb E[(Y-q)^2]
\right\}.
\]
The following proposition shows that the minimizer of this variational problem coincides with the solution of the interpolation equation introduced previously.

\begin{proposition}
	\label{prop:variational-characterization}
	
	Assume that
	$\mathbb E(Y^2)<\infty,$
	and that the distribution function \(F\) is continuous and strictly increasing.
	Then the minimizer
	$q_Q(\tau,h)$
	satisfies the first-order optimality condition
	$F(q)+h(q-m)=\tau,$
	where
	$m=\mathbb E(Y).$
	Consequently,
	$q_Q(\tau,h)$
	coincides with the interpolated quantile defined in subsection~\ref{subsec:oneone}.
	In particular, if
	\[
	h=h(\tau,\tau')
	=
	\frac{\tau-\tau'}
	{q(\tau')-m},
	\]
	then
	$q_Q(\tau,h(\tau,\tau'))=q(\tau')$
	and
	$F(q_Q(\tau,h(\tau,\tau')))=\tau'.$
\end{proposition}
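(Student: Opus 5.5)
The plan is to study the population objective
\[
\Phi(q)=\mathbb E[\rho_\tau(Y-q)]+\frac h2\,\mathbb E[(Y-q)^2],
\]
prove that it is differentiable and strictly convex, identify its derivative, and then appeal to Proposition~\ref{prop:parametrization}.

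\textbf{Finiteness and convexity.} Since $\mathbb E(Y^2)<\infty$, both expectations are finite for every $q$. Indeed $|\rho_\tau(u)|\le |u|$ and $\mathbb E|Y-q|<\infty$. The map $q\mapsto\rho_\tau(y-q)$ is convex for each $y$, and $q\mapsto (y-q)^2$ is convex as well. Hence $\Phi$ is convex, and it is coercive: the quadratic term already forces $\Phi(q)\to\infty$ as $|q|\to\infty$ when $h>0$, and when $h=0$ the check term grows linearly because $\tau\in(0,1)$. A minimizer therefore exists.

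\textbf{Differentiability.} The quadratic part is easy. Expanding gives $\frac h2\bigl(\mathbb E Y^2-2qm+q^2\bigr)$, whose derivative is $h(q-m)$.

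For the check-loss part I will use the standard integral representation. Write $\rho_\tau(u)=\tau u^+ +(1-\tau)u^-$, so that
\[
\frac{d}{dq}\,\rho_\tau(y-q)=-\tau+\mathbf 1_{\{y<q\}}\quad\text{for } q\neq y.
\]
Equivalently, use the difference identity $\rho_\tau(y-q')-\rho_\tau(y-q)=\int_q^{q'}(\mathbf 1_{\{y<s\}}-\tau)\,ds$. Taking expectations and applying Fubini, which is justified because the integrand is bounded by $1$ on a bounded interval, yields
\[
\mathbb E\rho_\tau(Y-q')-\mathbb E\rho_\tau(Y-q)=\int_q^{q'}\bigl(F(s^-)-\tau\bigr)\,ds .
\]
Since $F$ is continuous, $F(s^-)=F(s)$ and the integrand is continuous in $s$. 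The fundamental theorem of calculus then gives $\frac{d}{dq}\mathbb E\rho_\tau(Y-q)=F(q)-\tau$. Altogether,
\[
\Phi'(q)=F(q)-\tau+h(q-m).
\]

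\textbf{First-order condition.} Because $\Phi$ is convex and differentiable, $q$ minimizes $\Phi$ if and only if $\Phi'(q)=0$, that is, $F(q)+h(q-m)=\tau$. The derivative $\Phi'$ is strictly increasing, since $F$ is strictly increasing and $h\ge0$; this gives strict convexity. Consequently the minimizer is unique and coincides with the unique root $q(\tau,h)$ of the interpolation equation.

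Existence of the root can also be seen directly, since $\Phi'$ is continuous and tends to $-\tau<0$ at $-\infty$ (or to $-\infty$ if $h>0$), and to a positive limit at $+\infty$. The intermediate value theorem then applies.

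\textbf{Conclusion.} Take $h=h(\tau,\tau')$. By Proposition~\ref{prop:parametrization}, $q(\tau')$ is the unique solution of the interpolation equation with this $h$, so $q_Q(\tau,h(\tau,\tau'))=q(\tau')$. Applying $F$ gives $F(q_Q(\tau,h(\tau,\tau')))=\tau'$.

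One point must be checked here: $h(\tau,\tau')\ge0$ when $\tau'$ lies between $\tau$ and $F(m)$. If $\tau'<\tau$ (respectively $\tau'>\tau$), then $\tau'$ sits on the $F(m)$ side, so $\tau'$ exceeds $F(m)$ (respectively falls below it). Monotonicity of $F$ then gives $q(\tau')>m$ (respectively $q(\tau')<m$), so numerator and denominator have the same sign.

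\textbf{Main obstacle.} The only delicate step is justifying the differentiation of $\mathbb E\rho_\tau(Y-q)$ under the expectation. The Fubini/integral-representation argument handles it cleanly, and continuity of $F$ is exactly what makes the derivative exist rather than just one-sided derivatives.
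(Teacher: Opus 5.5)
Your proof is correct and follows the same route as the paper. Like the paper, you compute $\Phi'(q)=F(q)-\tau+h(q-m)$, set it to zero, and invoke Proposition~\ref{prop:parametrization} for $h=h(\tau,\tau')$. You also fill in points the paper's terse proof leaves implicit: the Fubini justification for differentiating under the expectation, convexity and coercivity so the first-order condition is sufficient as well as necessary, the intermediate value argument giving a root for every $h\ge 0$ (which Proposition~\ref{prop:parametrization} alone does not supply), and the check that $h(\tau,\tau')\ge 0$.
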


\begin{proof}
	
	Define
	\[
	L(q)
	=
	\mathbb E[\rho_\tau(Y-q)]
	+
	\frac h2
	\mathbb E[(Y-q)^2].
	\]
	Differentiating with respect to \(q\) yields
	$
	\frac{d}{dq}
	\mathbb E[\rho_\tau(Y-q)]
	=
	F(q)-\tau,
	$
	and
	$
	\frac{d}{dq}
	\frac h2
	\mathbb E[(Y-q)^2]
	=
	h(q-m).
	$
	Hence
	$
	L'(q)
	=
	F(q)-\tau+h(q-m).
	$
	The first-order optimality condition
	$
	L'(q)=0
	$
	becomes
	$
	F(q)+h(q-m)=\tau.
	$
	By Proposition~\ref{prop:parametrization}, this equation admits a unique solution.
	Therefore, the minimizer of the variational problem coincides with the interpolated quantile introduced previously.
	If
	$
	h=h(\tau,\tau'),
	$
	Proposition~\ref{prop:parametrization} further implies
	$
	q_Q(\tau,h(\tau,\tau'))=q(\tau')
	$
	and
	$
	F(q_Q(\tau,h(\tau,\tau')))=\tau'.
	$
	
\end{proof}
\noindent
Therefore, the interpolation parameter
$
h
$
admits two equivalent interpretations:
as a probability-space deformation parameter linking the probability levels
$
\tau
$
and
$
\tau',
$
and as a regularization parameter in a penalized quantile optimization problem.

\subsection{Asymptotic normality}

\noindent
At the empirical level, given observations
$
y_1,\dots,y_n,
$
the empirical quadratic interpolation estimator is defined by
\[
\hat q_Q(\tau,h)
=
\arg\min_{q\in\mathbb R}
\left\{
\sum_{i=1}^n
\rho_\tau(y_i-q)
+
\frac h2
\sum_{i=1}^n
(y_i-q)^2
\right\}.
\]
The corresponding empirical interpolation probability level is
$
\hat p(\tau,h)
=
\hat F(\hat q_Q(\tau,h)),
$
where
$
\hat F
$
denotes the empirical distribution function.

\begin{theorem}
	\label{thm:quadratic-clt}
	
	Assume that the distribution function \(F\) admits a continuous density
	\(f\) satisfying
	$
	f(q_Q(\tau,h))>0,
	$
	and that
	$
	\mathbb E(Y^2)<\infty.
	$
	Let $h\ge0$ be fixed, and define
	\[
	\hat q_Q(\tau,h)
	=
	\arg\min_{q\in\mathbb R}
	\left\{
	\sum_{i=1}^{n}
	\rho_\tau(Y_i-q)
	+
	\frac h2
	\sum_{i=1}^{n}
	(Y_i-q)^2
	\right\},
	\]
	where
	$
	Y_1,\ldots,Y_n
	$
	are independent copies of \(Y\).
	Then
	\[
	\sqrt n
	\Bigl(
	\hat q_Q(\tau,h)
	-
	q_Q(\tau,h)
	\Bigr)
	\overset d\longrightarrow
	\mathcal N
	\!\left(
	0,
	\sigma_Q^2(\tau,h)
	\right),
	\]
	where
	\[
	\sigma_Q^2(\tau,h)
	=
	\frac{
		\mathbb E
		\Bigl[
		\bigl(
		\psi_\tau(Y-q_Q(\tau,h))
		+
		h(Y-q_Q(\tau,h))
		\bigr)^2
		\Bigr]
	}
	{
		\bigl(
		f(q_Q(\tau,h))+h
		\bigr)^2
	}.
	\]
	
\end{theorem}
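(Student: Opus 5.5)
The approach is the standard argument for M-estimators with convex objectives, applied to $Q_n(q)=\sum_{i}\{\rho_\tau(Y_i-q)+\tfrac h2 (Y_i-q)^2\}$. Write $q_0=q_Q(\tau,h)$ and set $\psi_\tau(u)=\tau-\mathbf 1_{\{u<0\}}$, so that the subgradient of $\rho_\tau(Y-q)$ in $q$ is $-\psi_\tau(Y-q)$. Define the estimating function
\[
\phi(y,q)=\psi_\tau(y-q)+h(y-q).
\]
By Proposition~\ref{prop:variational-characterization}, $\lambda(q):=\mathbb E[\phi(Y,q)]=\tau-F(q)-h(q-m)$ vanishes exactly at $q_0$. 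Since $f$ is continuous, $\lambda$ is differentiable near $q_0$ with $\lambda'(q_0)=-(f(q_0)+h)\neq 0$. The positivity of $f(q_0)$ guarantees this even when $h=0$.

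The cleanest route exploits convexity and uses the Pollard / Hjort--Pollard convexity lemma (equivalently, Knight's identity). I would consider the reparametrized process
\[
Z_n(u)=Q_n(q_0+u/\sqrt n)-Q_n(q_0),
\]
which is convex in $u$ and minimized at $\sqrt n(\hat q_Q-q_0)$. The quadratic part expands exactly:
\[
\tfrac h2\sum_i[(Y_i-q_0-u/\sqrt n)^2-(Y_i-q_0)^2]=-\tfrac{u}{\sqrt n}\sum_i h(Y_i-q_0)+\tfrac h2u^2 .
\]
For the check-loss part, Knight's identity gives
\[
\rho_\tau(r-v)-\rho_\tau(r)=-v\,\psi_\tau(r)+\int_0^v(\mathbf 1_{\{r\le s\}}-\mathbf 1_{\{r\le 0\}})\,ds .
\]
Summing with $v=u/\sqrt n$, the linear terms combine into $-uW_n$, where
\[
W_n=n^{-1/2}\sum_i\phi(Y_i,q_0).
\]
The remainder $R_n(u)=\sum_i\int_0^{u/\sqrt n}(\mathbf 1_{\{Y_i-q_0\le s\}}-\mathbf 1_{\{Y_i-q_0\le 0\}})\,ds$ has mean $n\int_0^{u/\sqrt n}(F(q_0+s)-F(q_0))\,ds\to \tfrac12 f(q_0)u^2$, by continuity of $f$ at $q_0$. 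Its variance is bounded by $n\,\mathbb E[(\text{summand})^2]\le (|u|/\sqrt n)\,|F(q_0+u/\sqrt n)-F(q_0)|\to0$, so $R_n(u)\to \tfrac12 f(q_0)u^2$ in probability. Hence, pointwise in $u$,
\[
Z_n(u)=-uW_n+\tfrac12\bigl(f(q_0)+h\bigr)u^2+o_p(1).
\]

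Next, $\mathbb E[\phi(Y,q_0)]=0$ by the first-order condition, and $\mathbb E[\phi^2]<\infty$ since $\psi_\tau$ is bounded and $\mathbb E Y^2<\infty$. The classical CLT therefore gives $W_n\overset d\to W\sim\mathcal N(0,\mathbb E[\phi(Y,q_0)^2])$. The convexity lemma upgrades the pointwise approximation to uniform convergence on compacts and shows that the argmin converges in distribution to the argmin of the limiting quadratic. That argmin is $W/(f(q_0)+h)$, and its variance is exactly $\sigma_Q^2(\tau,h)$.

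The main obstacle is controlling the nonsmooth check-loss remainder $R_n$ uniformly so that the argmin argument applies. Convexity handles this by reducing it to pointwise convergence, so I expect the delicate step to be only the mean and variance computation for $R_n$. There is one more subtlety: the limiting quadratic must be strictly convex, i.e.\ $f(q_0)+h>0$, which holds under the stated assumption. The same convexity argument also yields consistency of $\hat q_Q$ for free.
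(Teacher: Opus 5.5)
Your proof is correct, but it takes a different route from the paper's.

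\textbf{The paper's route.} The paper (Appendix~\ref{app:proof-quadratic-clt}) treats $\hat q_Q(\tau,h)$ as a Z-estimator built from $M_n(q)=n^{-1}\sum_i\{\tau-\mathbf 1_{\{Y_i<q\}}+h(Y_i-q)\}$. It gets consistency from the Glivenko--Cantelli property of $\{\mathbf 1_{\{y<q\}}\}$. It linearizes $M_n(\hat q_Q)-M_n(q_0)$ using the asymptotic equicontinuity of $\sqrt n(F_n-F)$ (Donsker). It then extracts $\sqrt n$-consistency from the resulting expansion before applying the CLT and Slutsky.

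\textbf{Your route.} You exploit convexity of the criterion instead. Knight's identity and the exact expansion of the quadratic part reduce everything to a pointwise mean/variance computation for $R_n(u)$. The Hjort--Pollard convexity lemma then gives uniformity, consistency and the representation $\sqrt n(\hat q_Q-q_0)=W_n/(f(q_0)+h)+o_p(1)$ in one step.

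\textbf{What each buys.} Your route is lighter, needing no empirical-process theory. It also works with the minimizer directly, which avoids two points the paper glosses over. First, $\hat q_Q$ is generally not an exact root of the discontinuous $M_n$; one only has $0\le M_n(\hat q_Q)\le 1/n$ almost surely, which is harmless but needs saying. Second, the minimizer need not be unique when $h=0$. Your argument also makes it unnecessary to confine $\hat q_Q$ to a compact set, a step the paper's appeal to uniform convergence on compacts leaves implicit. The paper's route, in exchange, does not depend on convexity. That is why the same template covers the non-convex bisquare interpolation of Theorem~\ref{thm:bisquare-clt}, where the convexity lemma would not apply.

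\textbf{One small slip.} Each summand of $R_n(u)$ is bounded by $|u|/\sqrt n$ and vanishes unless $Y_i-q_0$ lies between $0$ and $u/\sqrt n$. So the correct bound is $\operatorname{Var}R_n(u)\le u^2\,|F(q_0+u/\sqrt n)-F(q_0)|$, not the expression you wrote. This still tends to zero, so your conclusion is unaffected.
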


\begin{proof}
	The asymptotic analysis is performed for a fixed regularization parameter $h$.
	The estimator $\hat q_Q(\tau,h)$ is the unique solution of the empirical
	estimating equation
	\[
	\frac1n\sum_{i=1}^n
	\Bigl(
	\psi_\tau(Y_i-q)
	+
	h(Y_i-q)
	\Bigr)
	=0.
	\]
	Since the score contains the discontinuous function
	$\psi_\tau$, the usual smooth M-estimation arguments do not
	apply directly. The asymptotic normality is established in
	Appendix~\ref{app:proof-quadratic-clt} using empirical-process
	methods for non-smooth M-estimators.
	The appendix shows that
	\[
	\sqrt n
	\Bigl(
	\hat q_Q(\tau,h)
	-
	q_Q(\tau,h)
	\Bigr)
	=
	\frac1{f(q_Q(\tau,h))+h}
	\frac1{\sqrt n}
	\sum_{i=1}^n
	\Bigl(
	\psi_\tau(Y_i-q_Q(\tau,h))
	+
	h(Y_i-q_Q(\tau,h))
	\Bigr)
	+o_p(1).
	\]
	Applying the central limit theorem yields
	\[
	\sqrt n
	\Bigl(
	\hat q_Q(\tau,h)
	-
	q_Q(\tau,h)
	\Bigr)
	\overset d\longrightarrow
	\mathcal N
	\!\left(
	0,
	\sigma_Q^2(\tau,h)
	\right),
	\]
	with
	\[
	\sigma_Q^2(\tau,h)
	=
	\frac{
		\mathbb E
		\!\left[
		\bigl(
		\psi_\tau(Y-q_Q(\tau,h))
		+h(Y-q_Q(\tau,h))
		\bigr)^2
		\right]
	}
	{
		\bigl(f(q_Q(\tau,h))+h\bigr)^2
	}.
	\]
\end{proof}

\begin{remark}
	
	When
	$
	h=0,
	$
	the quadratic interpolation estimator reduces to the classical quantile
	estimator. Since
	$
	q_Q(\tau,0)=q(\tau),
	$
	and
	$
	\mathbb E[\psi_\tau(Y-q(\tau))^2]
	=
	\tau(1-\tau),
	$
	Theorem~\ref{thm:quadratic-clt} becomes
	\[
	\sqrt n
	\bigl(
	\hat q(\tau)-q(\tau)
	\bigr)
	\overset d\longrightarrow
	\mathcal N
	\left(
	0,
	\frac{\tau(1-\tau)}
	{f(q(\tau))^2}
	\right),
	\]
	which is the classical asymptotic variance of the empirical quantile
	estimator.
\end{remark}

\subsection{Probability‑space deformation}

\begin{proposition}
	\label{prop:h-tau}
	Fix
	$
	\tau' \in (0,1),
	$
	$
	\tau' \neq F(m),
	$
	and consider the mapping
	$
	\tau \longmapsto h(\tau,\tau'),
	$
	defined for those values of \(\tau\) such that \(\tau'\) lies between
	\(\tau\) and \(F(m)\).
	Then
	\[
	h(\tau,\tau')
	=
	\frac{\tau-\tau'}{q(\tau')-m},
	\]
	so that \(h(\tau,\tau')\) is an affine function of \(\tau\). Moreover,
	\[
	\frac{\partial h}{\partial\tau}
	=
	\frac{1}{q(\tau')-m}.
	\]
	Consequently,
	$
	\tau' > F(m)
	\Longrightarrow
	q(\tau') > m,
	$
	and the mapping
	$
	\tau \longmapsto h(\tau,\tau')
	$
	is strictly increasing.
	Similarly,
	$
	\tau' < F(m)
	\Longrightarrow
	q(\tau') < m,
	$
	and the mapping
	$
	\tau \longmapsto h(\tau,\tau')
	$
	is strictly decreasing.
	Finally,
	$
	h(\tau',\tau')=0.
	$
	
\end{proposition}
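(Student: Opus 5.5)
The plan is to treat $\tau'$ as fixed, so that $q(\tau')$ and $m$ are constants. Then $h(\tau,\tau')$ is an explicit expression in $\tau$, and each claim in Proposition~\ref{prop:h-tau} reduces to elementary facts about an affine map together with the monotonicity of $F$.

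First, the formula is the definition of the interpolation parameter from subsection~\ref{subsec:oneone}. Writing $c=q(\tau')-m$, we have $h(\tau,\tau')=(\tau-\tau')/c$. This is well defined once we check $c\neq 0$, which comes from the sign step below. The map is then affine in $\tau$ with slope $1/c$, so $\partial h/\partial\tau=1/(q(\tau')-m)$. Evaluating at $\tau=\tau'$ gives $h(\tau',\tau')=0$ directly.

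Second, the sign of $c$. Suppose $\tau'>F(m)$ and, for contradiction, $q(\tau')\le m$. Monotonicity of $F$ and the quantile identity $F(q(\tau'))=\tau'$ (valid under continuity of $F$, as assumed in Propositions~\ref{prop:parametrization} and~\ref{prop:variational-characterization}) would give $\tau'=F(q(\tau'))\le F(m)$, a contradiction. Hence $q(\tau')>m$, so $c>0$ and the slope is positive, making $\tau\mapsto h$ strictly increasing. The case $\tau'<F(m)$ is symmetric. If $q(\tau')\ge m$, then $\tau'=F(q(\tau'))\ge F(m)$, again a contradiction; so $q(\tau')<m$, the slope is negative, and the map is strictly decreasing. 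In both cases $c\neq0$, which settles well-definedness.

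The only potential subtlety is justifying $F(q(\tau'))=\tau'$. I would rely on the standing continuity and strict-monotonicity assumptions of Propositions~\ref{prop:parametrization} and~\ref{prop:variational-characterization}. Without continuity, the generalized inverse still satisfies $q(\tau')\le m \iff \tau'\le F(m)$, and this equivalence alone suffices for both sign implications. Everything else is immediate algebra.

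One consistency check is worth noting. On the admissible domain, $\tau'$ lies between $\tau$ and $F(m)$, so $\tau-\tau'$ has the same sign as $\tau'-F(m)$, which is the sign of $c$. Hence $h\ge0$ there, consistent with the regularization interpretation.
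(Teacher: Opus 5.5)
Your argument is correct and follows the same route as the paper: the formula is the definition, the map is affine in $\tau$ with slope $1/(q(\tau')-m)$, and the monotonicity follows from the sign of $q(\tau')-m$. You improve on the paper by deriving $\tau'>F(m)\Rightarrow q(\tau')>m$ and $\tau'<F(m)\Rightarrow q(\tau')<m$ from $F(q(\tau'))=\tau'$ and monotonicity of $F$, which the paper only asserts. One correction to your aside: without continuity, the equivalence $q(\tau')\le m \iff \tau'\le F(m)$ still gives the strict conclusion when $\tau'>F(m)$, but when $\tau'<F(m)$ it yields only $q(\tau')\le m$. Indeed, if $F$ has an atom at $m$ with $F(m^-)<\tau'<F(m)$, then $q(\tau')=m$ and $h$ is undefined, so continuity of $F$ is genuinely needed in that case.
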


\begin{proof}
	
	The identity
	\[
	h(\tau,\tau')
	=
	\frac{\tau-\tau'}{q(\tau')-m}
	\]
	follows directly from the definition of the interpolation parameter.
	Since \(q(\tau')\) and \(m\) are fixed, \(h(\tau,\tau')\) is affine in
	\(\tau\), and
	\[
	\frac{\partial h}{\partial\tau}
	=
	\frac{1}{q(\tau')-m}.
	\]
	If \(\tau' > F(m)\), then \(q(\tau') > m\), so that
	$
	\frac{\partial h}{\partial\tau}>0,
	$
	which implies that \(h(\tau,\tau')\) is strictly increasing.
	If \(\tau' < F(m)\), then \(q(\tau') < m\), so that
	$
	\frac{\partial h}{\partial\tau}<0,
	$
	which implies that \(h(\tau,\tau')\) is strictly decreasing.
	The identity
	$
	h(\tau',\tau')=0
	$
	is immediate from the definition.
\end{proof}

\noindent
Consequently, fixing
$
(\tau,\tau')
$
determines a unique interpolation parameter
$
h(\tau,\tau'),
$
while fixing
$
(\tau,h)
$
determines a unique effective probability level
$
\tau'
=
F(q(\tau,h)).
$
When
$
h=0,
$
the interpolation equation reduces to the classical quantile equation
$
F(q)=\tau,
$
so that
$
q(\tau,0)=q(\tau).
$
As the interpolation parameter increases, the interpolated quantile is continuously deformed toward the mean \(m\).
Thus, the interpolation equation defines a continuous deformation mechanism in probability space linking classical quantiles to the central location of the distribution.

\subsection{Monotone deformation}
\label{sec:monotone}

The parametrization introduced in subsection~\ref{subsec:oneone} induces a continuous deformation of probability levels through the interpolation parameter
$
h.
$
For every fixed pair
$
(\tau,h),
$
define the effective interpolation probability level
$
p(\tau,h)
=
F(q_Q(\tau,h)),
$
where
$
q_Q(\tau,h)
$
denotes the solution of the interpolation equation
$
F(q)+h(q-m)=\tau.
$
By Proposition~\ref{prop:parametrization}, fixing
$
(\tau,h)
$
determines a unique probability level
$
p(\tau,h).
$
Moreover, when
$
h=0,
$
the interpolation equation reduces to the classical quantile equation, so that
\[
q_Q(\tau,0)=q(\tau),
\qquad
p(\tau,0)=\tau.
\]
As the interpolation parameter increases, the interpolated quantile is progressively transported toward the mean
$
m=\mathbb E(Y),
$
suggesting that the corresponding probability level should converge toward
$
F(m).
$
The following proposition characterizes the monotone structure of this probability-space deformation.

\begin{proposition}
	\label{prop:probability-interpolation}
	
	Assume that the distribution function $F$ admits a continuous density $f$ satisfying
	$
	f(x)>0
	$
	for every
	$
	x\in
	[\min\{q(\tau),m\},
	\max\{q(\tau),m\}].
	$
	Then, for every fixed
	$
	\tau\in(0,1),
	$
	the mapping
	$
	h\mapsto q_Q(\tau,h)
	$
	is continuous on
	$
	[0,+\infty).
	$
	Moreover:
	
	\begin{itemize}
		
		\item
		if
		$
		\tau<F(m),
		$
		then both
		$
		q_Q(\tau,h)
		$
		and
		$
		p(\tau,h)=F(q_Q(\tau,h))
		$
		are increasing functions of \(h\);
		
		\item
		if
		$
		\tau>F(m),
		$
		then both
		$
		q_Q(\tau,h)
		$
		and
		$
		p(\tau,h)=F(q_Q(\tau,h))
		$
		are decreasing functions of \(h\).
	\end{itemize}
	Furthermore,
	\[
	\lim_{h\to+\infty} q_Q(\tau,h)=m,
	\qquad
	\lim_{h\to+\infty} p(\tau,h)=F(m).
	\]
	
\end{proposition}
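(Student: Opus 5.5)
We work with $G(q,h)=F(q)+h(q-m)-\tau$, so that $q_Q(\tau,h)$ is the root of $G(\cdot,h)=0$ by Proposition~\ref{prop:variational-characterization}. The plan has four steps.

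First, we show that the root lies in the closed interval $I$ with endpoints $q(\tau)$ and $m$. Suppose $\tau<F(m)$, so $q(\tau)<m$. At the left endpoint,
\[
G(q(\tau),h)=h\bigl(q(\tau)-m\bigr)\le 0 .
\]
At the right endpoint, $G(m,h)=F(m)-\tau>0$. On $I$ we have $f>0$, so $G(\cdot,h)$ is strictly increasing there. Outside $I$ it is nondecreasing, with $G<0$ to the left of $I$ and $G>0$ to the right. Hence the intermediate value theorem gives a unique root $q_Q(\tau,h)\in[q(\tau),m)$. The case $\tau>F(m)$ is symmetric, with the root in $(m,q(\tau)]$.

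Second, for regularity in $h$ we use the implicit function theorem. On $I$, $G$ is $C^1$ in $(q,h)$, and
\[
\partial_q G=f(q)+h\ge f(q)>0 .
\]
So $h\mapsto q_Q(\tau,h)$ is $C^1$ on $[0,\infty)$, using a one-sided version at $h=0$ where $\partial_qG=f(q(\tau))>0$. Differentiating the identity $G(q_Q(\tau,h),h)=0$ gives
\[
\frac{\partial q_Q}{\partial h}=-\frac{q_Q(\tau,h)-m}{f(q_Q(\tau,h))+h}.
\]
If $\tau<F(m)$, the first step gives $q_Q<m$, so this derivative is strictly positive and $q_Q$ is increasing. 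If $\tau>F(m)$, then $q_Q>m$ and $q_Q$ is decreasing. Since $F$ is nondecreasing, $p(\tau,h)=F(q_Q(\tau,h))$ inherits the same monotonicity. Its derivative is $f(q_Q)\,\partial_h q_Q$, and this is of strict sign because $f>0$ on $I$.

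Third, for the limit, the first step shows that $q_Q(\tau,h)$ is monotone and bounded by $m$. Hence it converges to some $L\in I$. Rearranging the interpolation equation gives
\[
|q_Q(\tau,h)-m|=\frac{|\tau-F(q_Q(\tau,h))|}{h}\le \frac1h\to0 ,
\]
so $L=m$. Continuity of $F$ then gives $p(\tau,h)\to F(m)$.

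The main obstacle is the first step, confining the root to $I$. The hypothesis gives $f>0$ only on $I$, so strict monotonicity of $G$, the implicit function argument and the sign of $\partial_h q_Q$ must all be localized there. The endpoint-sign computation above, together with the nondecreasing behaviour of $G$ outside $I$, is what makes this localization work. We also note the degenerate case $\tau=F(m)$, which the statement excludes: there $q_Q\equiv m$ for all $h$.
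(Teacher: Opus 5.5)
Your proof is correct and follows the paper's argument: you confine $q_Q(\tau,h)$ to the interval between $q(\tau)$ and $m$, apply the implicit function theorem to get $\partial_h q_Q=-(q_Q-m)/(f(q_Q)+h)$ and read off the signs, and obtain the limit from $q_Q-m=(\tau-p)/h$. The only difference is that you prove the confinement directly, via the endpoint signs and the monotonicity of $G(\cdot,h)$. The paper instead invokes Proposition~\ref{prop:parametrization}, which strictly gives only the converse direction, so your version is slightly more self-contained.
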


\begin{remark}
	Proposition~\ref{prop:probability-interpolation} describes the deformation path at the population level through
	$
	p(\tau,h)=F\bigl(q_Q(\tau,h)\bigr).
	$
	Since \(F\) is continuous and \(h \mapsto q_Q(\tau,h)\) is continuous, the path \(h \mapsto p(\tau,h)\) is itself continuous.
	At the empirical level, one may similarly define
	$
	\hat p(\tau,h)
	=
	\hat F\bigl(\hat q_Q(\tau,h)\bigr),
	$
	where \(\hat F\) denotes the empirical distribution function. In contrast with \(F\), the empirical distribution function \(\hat F\) is a step function. Consequently, the empirical deformation path \(h \mapsto \hat p(\tau,h)\) is generally piecewise constant and may exhibit jumps. Thus, Proposition~\ref{prop:probability-interpolation} should be interpreted as describing the continuous population deformation, while its empirical analogue provides a staircase approximation of this path.
\end{remark}

\begin{proof}
	The interpolation equation satisfies
	\[
	F(q_Q(\tau,h))
	+
	h(q_Q(\tau,h)-m)
	=
	\tau.
	\]
	By Proposition~\ref{prop:parametrization}, for every \(h\ge0\),
	the interpolated quantile \(q_Q(\tau,h)\) corresponds to a probability
	level between \(\tau\) and \(F(m)\). Hence
	\[
	q_Q(\tau,h)
	\in
	[\min\{q(\tau),m\},
	\max\{q(\tau),m\}],
	\]
	for all \(h\ge0\).
	Since \(f\) is strictly positive on this interval,
	$
	f(q_Q(\tau,h))>0.
	$
	Therefore
	$
	f(q_Q(\tau,h))+h>0,
	$
	and the implicit function theorem applies.
	Differentiating with respect to \(h\) yields
	\[
	\frac{\partial q_Q}{\partial h}
	=
	-
	\frac{
		q_Q(\tau,h)-m
	}{
		f(q_Q(\tau,h))+h
	}.
	\]
	Hence, the sign of
	$
	\frac{\partial q_Q}{\partial h}
	$
	is opposite to the sign of
	$
	q_Q(\tau,h)-m.
	$
	If
	$
	\tau<F(m),
	$
	then
	$
	q_Q(\tau,h)<m,
	$
	which implies
	$
	\frac{\partial q_Q}{\partial h}>0.
	$
	If
	$
	\tau>F(m),
	$
	then
	$
	q_Q(\tau,h)>m,
	$
	which implies
	$
	\frac{\partial q_Q}{\partial h}<0.
	$
	Since \(F\) is increasing, the same monotonicity properties hold for
	$
	p(\tau,h)=F(q_Q(\tau,h)).
	$
	Finally, the interpolation equation implies
	\[
	q_Q(\tau,h)-m
	=
	\frac{\tau-p(\tau,h)}{h}.
	\]
	Since
	$
	0\le p(\tau,h)\le1,
	$
	the right-hand side converges to zero as
	$
	h\to+\infty.
	$
	Hence
	$
	q_Q(\tau,h)\to m.
	$
	By continuity of \(F\),
	$
	p(\tau,h)=F(q_Q(\tau,h))\to F(m).
	$
\end{proof}
\noindent
Proposition~\ref{prop:probability-interpolation} shows that the interpolation parameter
$
h
$
generates a monotone transport of probability levels toward the central probability value
$
F(m).
$
Two regimes naturally arise.

\medskip

\noindent
{\bf Increasing regime.}
If
$
\tau<F(m),
$
then
$
p(\tau,h)
$
increases continuously from
$
\tau
$
toward
$
F(m).
$
Consequently, for every probability level
$
\tau<\tau'<F(m),
$
there exists a unique interpolation parameter
$
h(\tau,\tau')
$
such that
$
p(\tau,h(\tau,\tau'))=\tau'.
$
By Proposition~\ref{prop:parametrization},
\[
q_Q(\tau,h(\tau,\tau'))=q(\tau').
\]
{\bf Decreasing regime.}
If
$
\tau>F(m),
$
then
$
p(\tau,h)
$
decreases continuously from
$
\tau
$
toward
$
F(m).
$
Hence, for every
$
F(m)<\tau'<\tau,
$
there exists a unique interpolation parameter
$
h(\tau,\tau')
$
such that
$
p(\tau,h(\tau,\tau'))=\tau'.
$
Again, Proposition~\ref{prop:parametrization} implies
\[
q_Q(\tau,h(\tau,\tau'))=q(\tau').
\]
Therefore, the interpolation parameter
$
h
$
defines a continuous path in probability space linking quantile levels through the interpolation equation.
At the empirical level, define
$
\hat p(\tau,h)
=
\hat F(\hat q_Q(\tau,h)),
$
where
$
\hat q_Q(\tau,h)
$
denotes the empirical quadratic interpolation estimator and
$
\hat F
$
the empirical distribution function.
Given a target probability level
$
\tau'
$
located between
$
\tau
$
and
$
F(m),
$
the interpolation parameter
$
h(\tau,\tau')
$
may be estimated numerically as the solution of
$
\hat F(\hat q_Q(\tau,h))
=
\tau'.
$
This inversion problem is one-dimensional and can be solved efficiently by numerical root finding.

\subsection{Numerical illustrations}

We now illustrate the probability-space interpolation mechanism in two complementary situations using a single asymmetric Laplace sample.
More precisely, one fixed sample is generated from an asymmetric Laplace distribution with parameter
$
\tau_0=0.8,
$
and both interpolation regimes are subsequently studied using the same observations.
For this sample, the empirical probability level associated with the sample mean is
$
\hat F(\bar Y)\approx0.38.
$
Consequently, the interpolation behavior depends on whether the initial probability level
\(
\tau
\)
lies above or below
\(
\hat F(\bar Y)
\).
Recall that the interpolated probability level is defined by
$
\hat p(\tau,h)
=
\hat F(\hat q_Q(\tau,h)),
$
where
$
\hat q_Q(\tau,h)
$
denotes the quadratically regularized quantile estimator.
According to Proposition~\ref{prop:probability-interpolation}, the mapping
$
h\mapsto \hat p(\tau,h)
$
is monotone and converges toward
\(
\hat F(\bar Y)
\)
as
\(
h\to\infty
\).
The left panel of Figure~\ref{fig:probability_interpolation_combined}
illustrates the decreasing regime
$
\tau>\hat F(\bar Y),
$
while the right panel illustrates the increasing regime
$
\tau<\hat F(\bar Y).
$
In both experiments, we also compare two estimators of the interpolation parameter
\(
h(\tau,\tau')
\)
associated with a target probability level
\(
\tau'
\):
$
\hat h_{\mathrm{root}}(\tau,\tau')
=
\text{solution of }
\hat F(\hat q_Q(\tau,h))
=
\tau',
$
obtained by numerical root finding, and the explicit plug-in estimator
\[
\hat h_{\mathrm{explicit}}(\tau,\tau')
=
\frac{\tau-\tau'}
{\hat q(\tau')-\bar Y},
\]
where
\(
\hat q(\tau')
\)
denotes the empirical quantile estimator.
The numerical results show an excellent agreement between these two approaches.

\paragraph{Decreasing regime: \(\tau > \hat F(\bar Y)\).}

We first consider the upper-tail interpolation regime with
$
\tau=0.8.
$
Since
$
\hat F(\bar Y)\approx0.38,
$
one has
$
\tau=0.8>\hat F(\bar Y),
$
and therefore the interpolation produces a monotone decreasing deformation of the probability level from
\(
0.8
\)
toward
\(
0.38
\).
The left panel of Figure~\ref{fig:probability_interpolation_combined}
displays the evolution of
$
\hat p(\tau,h)
=
\hat F(\hat q_Q(\tau,h))
$
as a function of the regularization parameter
\(
h
\).
At
\(
h=0
\),
the regularized estimator coincides with the empirical quantile estimator, yielding
$
\hat p(0.8,0)=0.8.
$
As
\(
h
\)
increases, the quadratic penalty progressively shrinks the estimator toward the sample mean, and the associated probability level decreases continuously toward
\(
\hat F(\bar Y)\approx0.38
\).
To illustrate the inverse interpolation problem, we consider the target levels
$
\tau'\in\{0.75,0.55\}.
$
Both values belong to the reachable interval
$
[\hat F(\bar Y),\tau]
\approx
[0.38,0.8],
$
and therefore admit interpolation solutions.
The corresponding root-finding estimators are
$
\hat h_{\mathrm{root}}(0.8,0.75)
\approx
0.014559,
$
and
$
\hat h_{\mathrm{root}}(0.8,0.55)
\approx
0.132758.
$
The associated interpolated estimators satisfy
$
\hat q_Q(0.8,\hat h_{\mathrm{root}}(0.8,0.75))
\approx
-0.341923,
$
and
$
\hat q_Q(0.8,\hat h_{\mathrm{root}}(0.8,0.55))
\approx
-1.905605.
$
Moreover,
$
\hat p(0.8,\hat h_{\mathrm{root}}(0.8,0.75))
=
0.75,
$
and
$
\hat p(0.8,\hat h_{\mathrm{root}}(0.8,0.55))
=
0.55.
$
The corresponding explicit plug-in estimators are
$
\hat h_{\mathrm{explicit}}(0.8,0.75)
\approx
0.014507,
$
and
$
\hat h_{\mathrm{explicit}}(0.8,0.55)
\approx
0.132765.
$
The near-perfect agreement between the two estimators numerically confirms the validity of the interpolation identity.

\paragraph{Increasing regime: \(\tau < \hat F(\bar Y)\).}

We next consider the lower-tail interpolation regime with
$
\tau=0.1.
$
Since
$
\tau=0.1<\hat F(\bar Y)\approx0.38,
$
the interpolation now generates a monotone increasing deformation of the probability level toward
\(
\hat F(\bar Y)
\).
The right panel of Figure~\ref{fig:probability_interpolation_combined}
shows the evolution of
$
\hat p(\tau,h)
=
\hat F(\hat q_Q(\tau,h))
$
as a function of
\(
h
\).
At
\(
h=0
\),
one recovers the empirical quantile estimator:
$
\hat p(0.1,0)=0.1.
$
As
\(
h
\)
increases, the quadratic penalty pulls the estimator toward the sample mean, producing a continuous increase toward
\(
\hat F(\bar Y)\approx0.38
\).
We consider the target probability levels
$
\tau'\in\{0.25,0.30\}.
$
Since both values belong to the reachable interval
$
[\tau,\hat F(\bar Y)]
\approx
[0.1,0.38],
$
the corresponding interpolation equations admit solutions.
The root-finding interpolation estimators are
$
\hat h_{\mathrm{root}}(0.1,0.25)
\approx
0.076075,
$
and
$
\hat h_{\mathrm{root}}(0.1,0.30)
\approx
0.187780.
$
The associated interpolated estimators are
$
\hat q_Q(0.1,\hat h_{\mathrm{root}}(0.1,0.25))
\approx
-5.748222,
$
and
$
\hat q_Q(0.1,\hat h_{\mathrm{root}}(0.1,0.30))
\approx
-4.856126.
$
Furthermore,
$
\hat p(0.1,\hat h_{\mathrm{root}}(0.1,0.25))
=
0.25,
$
and
$
\hat p(0.1,\hat h_{\mathrm{root}}(0.1,0.30))
=
0.30.
$
The corresponding explicit plug-in estimators are
$
\hat h_{\mathrm{explicit}}(0.1,0.25)
\approx
0.076546,
$
and
$
\hat h_{\mathrm{explicit}}(0.1,0.30)
\approx
0.187354.
$
Again, the two approaches produce nearly identical interpolation parameters.
Altogether, these experiments numerically confirm both monotonicity regimes established in Proposition~\ref{prop:probability-interpolation}, and demonstrate that the interpolation parameter can be accurately recovered either by numerical inversion or through the explicit plug-in approximation.
Further empirical illustrations with real data are provided in Section~\ref{sec:cac40}.

\begin{figure}[!htbp]
	\centering
	\includegraphics[width=\textwidth]
	{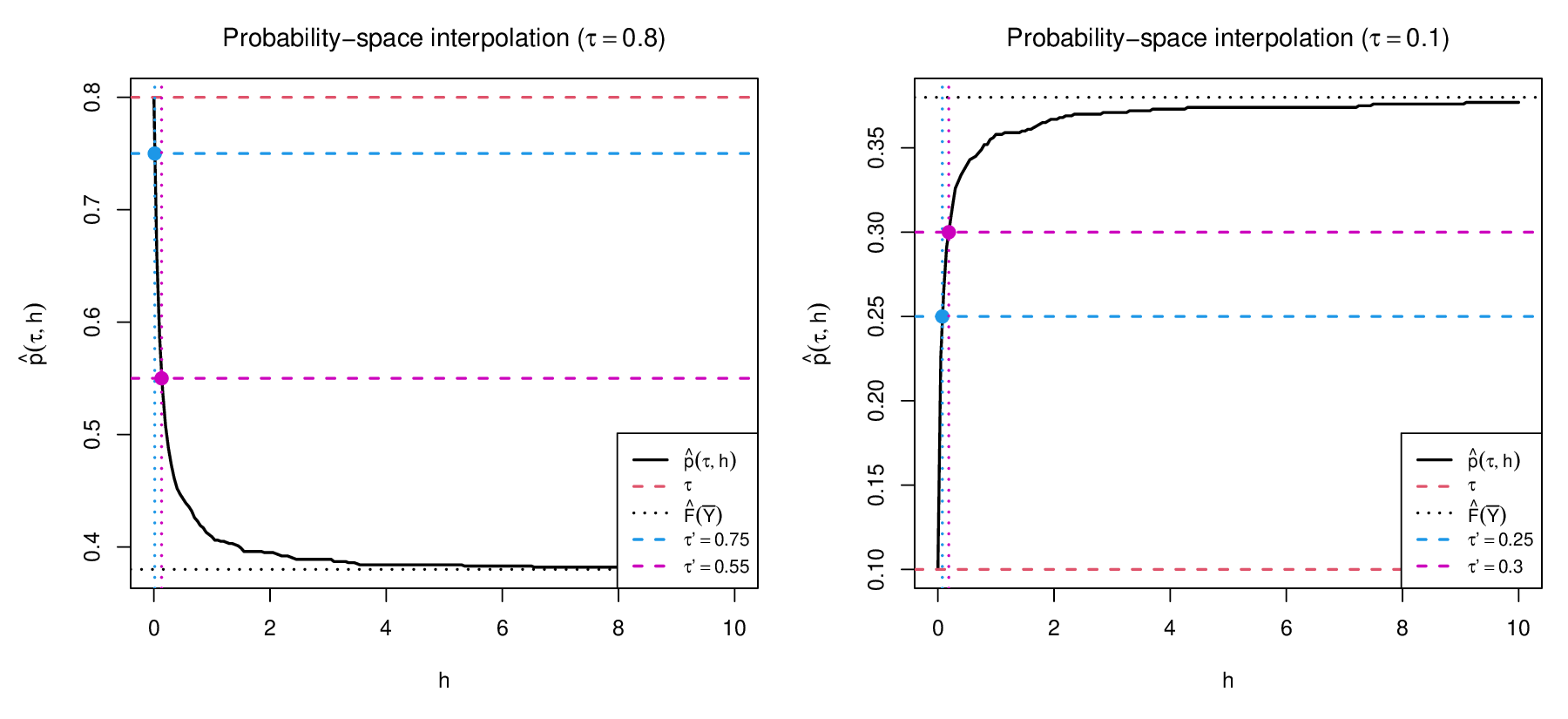}
	\caption{
		Probability-space interpolation for the quadratically regularized quantile estimator using a single asymmetric Laplace sample with
		\(
		\hat F(\bar Y)\approx0.38
		\).
		Left panel: decreasing regime
		\(
		\tau=0.8>\hat F(\bar Y)
		\),
		for which the interpolated probability level decreases monotonically toward
		\(
		\hat F(\bar Y)
		\).
		Right panel: increasing regime
		\(
		\tau=0.1<\hat F(\bar Y)
		\),
		for which the interpolated probability level increases monotonically toward
		\(
		\hat F(\bar Y)
		\).
		The colored horizontal lines correspond to the target probability levels
		\(
		\tau'
		\),
		while the vertical dashed lines indicate the associated interpolation parameters.
	}
	\label{fig:probability_interpolation_combined}
\end{figure}

\subsection{Distributional structure of the asymptotic variance}

As will be shown by the numerical experiments of Section~\ref{sec:efficiency},
the optimal interpolation strength depends strongly on the underlying
distribution. To better understand this phenomenon, we examine the 
structure of the asymptotic variance appearing in
Theorem~\ref{thm:quadratic-clt}.
Let
$
q=q_Q(\tau,h)
$
denote the unique solution of the interpolation equation
\begin{equation}
	\label{eq:interpolation-equation}
	F(q)+h(q-m)=\tau,
	\qquad
	m=\mathbb{E}[Y].
\end{equation}
Using \eqref{eq:interpolation-equation}, the score function can be
rewritten as
\[
\psi_\tau(Y-q)+h(Y-q)
=
\bigl(F(q)-\mathbf 1_{\{Y<q\}}\bigr)
+
h(Y-m).
\]
Consequently, the asymptotic variance of the quadratic interpolation
estimator admits the following representation.

\begin{proposition}
	\label{prop:variance-structure}
	Let
	\[
	T(q)
	=
	\mathbb{E}
	\!\left[
	(Y-m)\mathbf 1_{\{Y<q\}}
	\right]
	\]
	denote the truncated first moment and let
	$
	\sigma_Y^2=\operatorname{Var}(Y).
	$
	Then
	\[
	\sigma_Q^2(\tau,h)
	=
	\frac{
		F(q)\bigl(1-F(q)\bigr)
		+
		h^2\sigma_Y^2
		-
		2h\,T(q)
	}
	{\bigl(f(q)+h\bigr)^2},
	\]
	where \(q=q_Q(\tau,h)\) is defined by
	\eqref{eq:interpolation-equation}.
\end{proposition}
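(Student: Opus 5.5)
The plan is to start from Theorem~\ref{thm:quadratic-clt}. Its denominator is already $(f(q)+h)^2$, so only the numerator $\mathbb E[(\psi_\tau(Y-q)+h(Y-q))^2]$ needs to be rewritten. I take $\psi_\tau(u)=\tau-\mathbf 1_{\{u<0\}}$, so that $\psi_\tau(Y-q)=\tau-\mathbf 1_{\{Y<q\}}$. Writing $Y-q=(Y-m)-(q-m)$ and using the interpolation equation \eqref{eq:interpolation-equation} in the form $\tau-h(q-m)=F(q)$ gives the rewriting of the score stated just before the proposition:
\[
\psi_\tau(Y-q)+h(Y-q)=A+hB,\qquad A=F(q)-\mathbf 1_{\{Y<q\}},\quad B=Y-m.
\]

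Next I expand $\mathbb E[(A+hB)^2]=\mathbb E[A^2]+2h\,\mathbb E[AB]+h^2\mathbb E[B^2]$ and evaluate each term.

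\begin{itemize}
\item Since $F$ is continuous, $\mathbb P(Y<q)=F(q)$. Hence $\mathbf 1_{\{Y<q\}}$ is Bernoulli with parameter $F(q)$, and $\mathbb E[A^2]=F(q)(1-F(q))$.
\item By definition of the variance, $\mathbb E[B^2]=\sigma_Y^2$.
\item For the cross term, $\mathbb E[AB]=F(q)\,\mathbb E[Y-m]-\mathbb E[(Y-m)\mathbf 1_{\{Y<q\}}]=-T(q)$, because $\mathbb E[Y-m]=0$.
\end{itemize}

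Substituting these three values gives the numerator $F(q)(1-F(q))+h^2\sigma_Y^2-2hT(q)$. Dividing by $(f(q)+h)^2$ yields the formula in the statement.

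There is no genuine obstacle; the points that need care are the following.

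\begin{itemize}
\item The condition $\mathbb E(Y^2)<\infty$ guarantees that all expectations are finite and that the cross term is integrable.
\item Continuity of $F$ is used to identify $\mathbb P(Y<q)$ with $F(q)$. Otherwise one would obtain $F(q^-)$, and the Bernoulli variance would have to be written with $F(q^-)$.
\item The sign convention for $\psi_\tau$ must match the one used in Theorem~\ref{thm:quadratic-clt}, so that the recentring by the interpolation equation produces exactly $F(q)-\mathbf 1_{\{Y<q\}}$ rather than its negative. Since the term is squared in the variance, only the relative sign between $A$ and $hB$ matters, and that sign is fixed by the first-order condition $L'(q)=0$ of Proposition~\ref{prop:variational-characterization}.
\end{itemize}
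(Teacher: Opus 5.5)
Your proposal is correct and follows the same route as the paper: both use the interpolation equation to rewrite the score as $\bigl(F(q)-\mathbf 1_{\{Y<q\}}\bigr)+h(Y-m)$ and then evaluate the three resulting terms. The paper phrases the computation as a variance with covariance $-T(q)$, while you expand the raw second moment. These coincide because both summands have mean zero.
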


\begin{proof}
	Since
	\[
	\psi_\tau(Y-q)+h(Y-q)
	=
	\bigl(F(q)-\mathbf 1_{\{Y<q\}}\bigr)
	+
	h(Y-m),
	\]
	the numerator of the asymptotic variance is
	$
	\operatorname{Var}
	\!\left(
	F(q)-\mathbf 1_{\{Y<q\}}
	+
	h(Y-m)
	\right).
	$
	Using
	$
	\operatorname{Var}
	\!\left(
	F(q)-\mathbf 1_{\{Y<q\}}
	\right)
	=
	F(q)\bigl(1-F(q)\bigr),
	$
	and observing that
	\[
	\begin{aligned}
		\operatorname{Cov}
		\!\left(
		F(q)-\mathbf1_{\{Y<q\}},
		Y-m
		\right)
		&=
		\mathbb E\!\left[
		\bigl(F(q)-\mathbf1_{\{Y<q\}}\bigr)(Y-m)
		\right] \\
		&=
		-\mathbb E\!\left[
		(Y-m)\mathbf1_{\{Y<q\}}
		\right] \\
		&=
		-T(q),
	\end{aligned}
	\]
	where we used the fact that
	\(\mathbb E(Y-m)=0\),
	gives
	\[
	\mathbb E
	\Big[
	\big(
	\psi_\tau(Y-q)+h(Y-q)
	\big)^2
	\Big]
	=
	F(q)\bigl(1-F(q)\bigr)
	+
	h^2\sigma_Y^2
	-
	2h\,T(q).
	\]
	The result follows from Theorem~\ref{thm:quadratic-clt}.
\end{proof}
Proposition~\ref{prop:variance-structure} reveals explicitly how the
underlying distribution affects the asymptotic variance.
Three distributional quantities appear:
\begin{enumerate*}[label*=\arabic*)]
	\item the local density \(f(q)\),
	\item the global variance \(\sigma_Y^2\),
	\item the truncated first moment
	\(
	T(q)
	=
	\mathbb E[(Y-m)\mathbf 1_{\{Y<q\}}].
	\)
\end{enumerate*}
The corresponding asymptotic variance of the empirical quantile
estimator targeting the same value \(q\) is
\[
\sigma_{\mathrm{emp}}^2
=
\frac{
	F(q)\bigl(1-F(q)\bigr)
}
{f(q)^2}.
\]
Hence the relative efficiency ratio becomes
\[
R(\tau,h)
=
\frac{
	F(q)(1-F(q))
	+
	h^2\sigma_Y^2
	-
	2h\,T(q)
}
{
	F(q)(1-F(q))
}
\,
\frac{
	f(q)^2
}
{
	(f(q)+h)^2
}.
\]
For a fixed quantile level \(\tau\), the interpolated quantile
\(q=q_Q(\tau,h)\) varies with \(h\).
Differentiating \eqref{eq:interpolation-equation} yields
\[
\frac{dq}{dh}
=
-\frac{q-m}{f(q)+h}.
\]
Therefore both the numerator and denominator of
\(\sigma_Q^2(\tau,h)\) depend on \(h\) through the induced variation
of \(q\).
The variance-optimal interpolation strength
\[
h_{\mathrm{var}}^\star(\tau)
=
\arg\min_{h\ge0}\sigma_Q^2(\tau,h)
\]
This variance-optimal interpolation strength should not be confused with the efficiency-optimal interpolation strength introduced in Section~\ref{sec:efficiency}, which minimizes the relative efficiency ratio \(R(\tau,h)\). Since the reference variance \(\sigma_{\mathrm{emp}}^2(\tau_h)\) also depends on \(h\), the two optimization criteria are generally not equivalent.
$h_{\mathrm{var}}^\star(\tau)$
is consequently determined by an implicit first-order condition
involving \(f\), \(f'\), \(T\), and \(q\).
In general, no closed-form expression appears to be available.
Nevertheless, Proposition~\ref{prop:variance-structure} provides a
useful qualitative interpretation of the numerical results reported in
Figures~\ref{fig:contour-comparison} and
\ref{fig:optimal-h-comparison}.
The optimal amount of interpolation is governed by a balance between
the density \(f(q)\), which controls the local sensitivity of the
quantile, and the global distributional quantities
\(\sigma_Y^2\) and \(T(q)\).
This explains why the Gaussian and Laplace distributions exhibit
markedly different optimal interpolation profiles despite possessing
similar symmetry properties.

\subsection{Efficiency comparisons}
\label{sec:efficiency}

\subsubsection{Comparison with the empirical quantile estimator}

Theorem~\ref{thm:quadratic-clt} provides the asymptotic variance of the quadratic interpolation estimator,
$
\sigma_Q^2(\tau,h).
$
A natural benchmark is the classical empirical quantile estimator corresponding to the same population target.
Let
$
q_Q(\tau,h)
$
denote the solution of
$
F(q)+h(q-m)=\tau,
$
and define
\[
\tau_h := F\!\left(q_Q(\tau,h)\right).
\]
The empirical quantile estimator targeting the quantile level \(\tau_h\) has asymptotic variance
\[
\sigma_{\mathrm{emp}}^2(\tau_h)
=
\frac{\tau_h(1-\tau_h)}
{f\!\left(q_Q(\tau,h)\right)^2}.
\]
To compare both procedures we consider the relative efficiency ratio
\[
R(\tau,h)
=
\frac{\sigma_Q^2(\tau,h)}
{\sigma_{\mathrm{emp}}^2(\tau_h)}.
\]
Values \(R(\tau,h)<1\) indicate that the quadratic interpolation estimator is asymptotically more efficient than the empirical quantile estimator targeting the same population quantile.
Two representative examples are considered below for \(\tau=0.9\).

\paragraph{Gaussian distribution.}

Consider the standard Gaussian distribution,
$
Y \sim \mathcal{N}(0,1).
$
Figure~\ref{fig:gaussian-efficiency} displays the asymptotic variances
\(\sigma_Q^2(\tau,h)\) and
\(\sigma_{\mathrm{emp}}^2(\tau_h)\)
as functions of \(h\), together with the efficiency ratio \(R(\tau,h)\).
The ratio decreases monotonically over the investigated interval.
For \(h=2\), the smallest value considered, we obtain
$
R(0.9,2) \approx 0.64.
$
Thus the quadratic interpolation estimator achieves an asymptotic variance approximately 36\% smaller than that of the corresponding empirical quantile estimator.
This behavior suggests that, for light-tailed distributions such as the Gaussian law, increasing the interpolation strength continuously improves asymptotic efficiency.

\paragraph{Laplace distribution.}

Consider the Laplace distribution with density
\[
f(x)=\frac12 e^{-|x|}.
\]
Figure~\ref{fig:laplace-efficiency} shows the analogous comparison.
Unlike the Gaussian case, the efficiency ratio is no longer monotone.
Instead, it possesses a clear minimum at
$
h_{\mathrm{eff}}^\star \approx 0.141,
$
for which
$
R(0.9,h_{\mathrm{eff}}^\star) \approx 0.697.
$
Hence the quadratic interpolation estimator achieves an asymptotic variance approximately 30\% smaller than that of the corresponding empirical quantile estimator.
For larger values of \(h\), the efficiency ratio increases and eventually exceeds one. In this regime the empirical quantile estimator becomes asymptotically preferable.
This phenomenon indicates that heavy-tailed distributions may admit an optimal interpolation strength. While moderate interpolation can substantially reduce asymptotic variance, excessively large values of $h$ eventually increase the relative efficiency ratio. The optimal value therefore reflects a balance between the benefits of interpolation and the loss of efficiency associated with overly strong regularization.

\begin{figure}[!htbp]
	\centering
	\includegraphics[width=\textwidth]{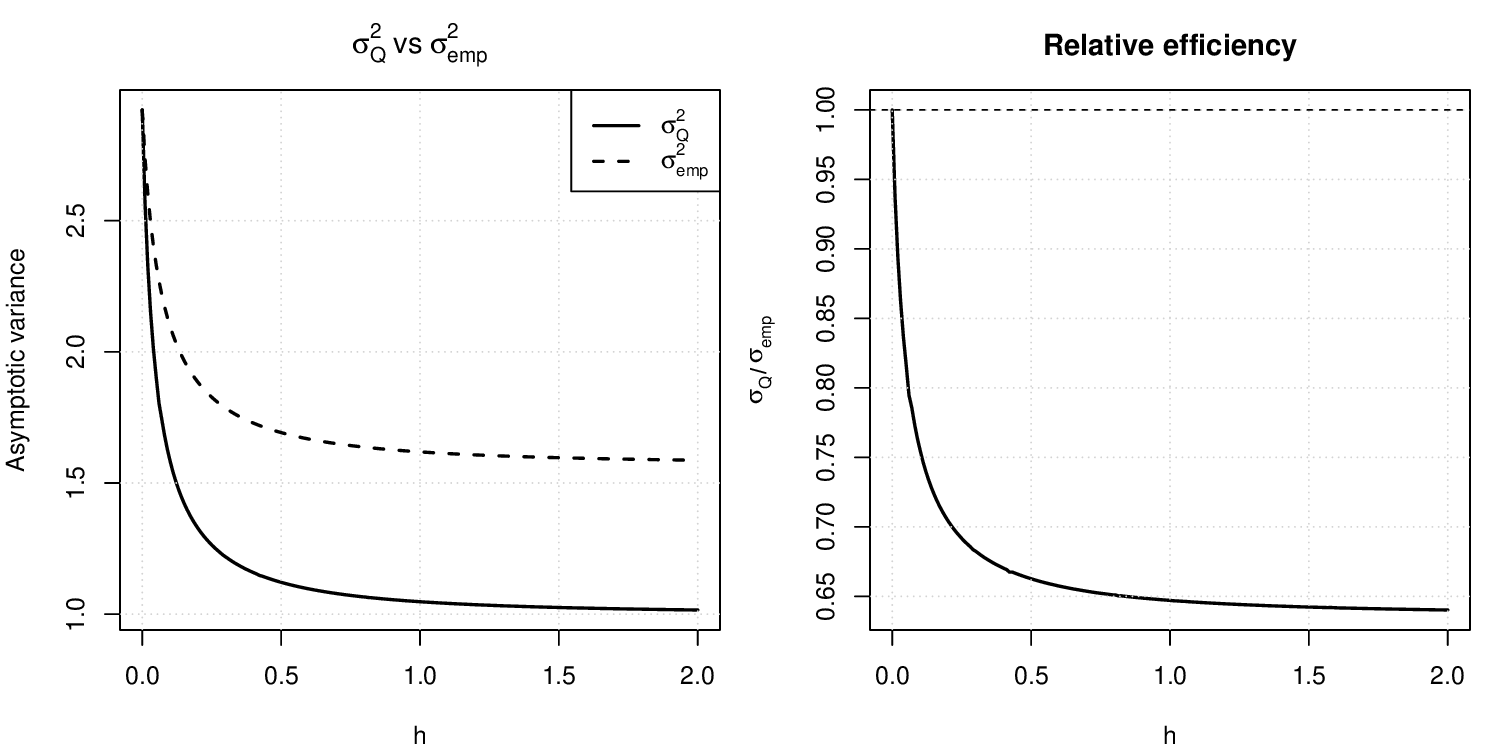}
	\caption{
		Comparison of asymptotic variances for the quadratic interpolation estimator and the empirical quantile estimator under the standard Gaussian distribution (\(\tau=0.9\)). Left: asymptotic variances. Right: relative efficiency ratio \(R(\tau,h)\).
	}
	\label{fig:gaussian-efficiency}
\end{figure}

\begin{figure}[!htbp]
	\centering
	\includegraphics[width=\textwidth]{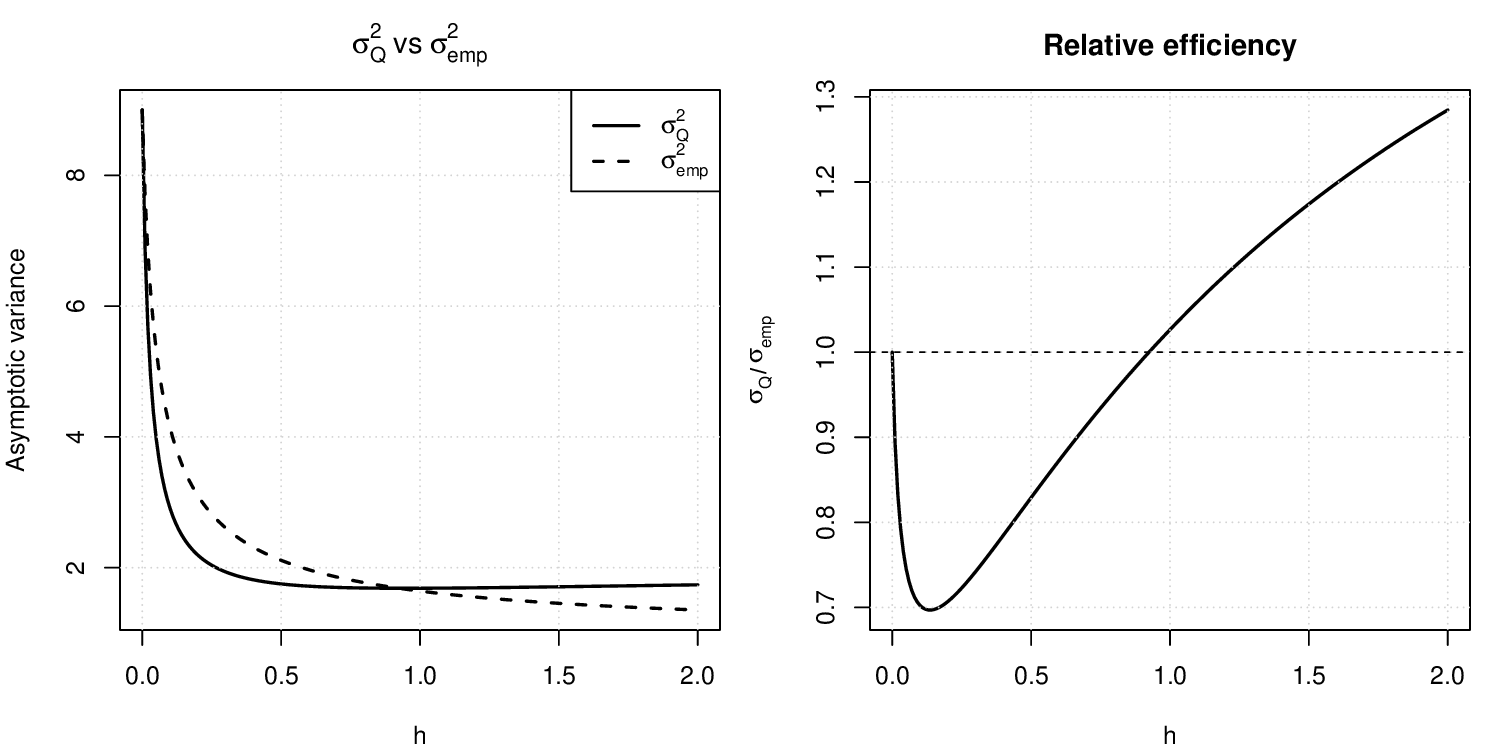}
	\caption{
		Comparison of asymptotic variances for the quadratic interpolation estimator and the empirical quantile estimator under the Laplace distribution (\(\tau=0.9\)). Left: asymptotic variances. Right: relative efficiency ratio \(R(\tau,h)\).
	}
	\label{fig:laplace-efficiency}
\end{figure}

\subsubsection{Influence of the quantile level}

The previous examples were restricted to the quantile level \(\tau=0.9\). To investigate the influence of the quantile level on the relative efficiency of the quadratic interpolation estimator, we now examine the efficiency ratio
\[
R(\tau,h)
=
\frac{\sigma_Q^2(\tau,h)}
{\sigma_{\mathrm{emp}}^2(\tau_h)}
\]
over a broad range of values of \(\tau\) and \(h\).
Figure~\ref{fig:contour-comparison} displays contour plots of \(R(\tau,h)\) for the Gaussian and Laplace distributions. The contour \(R(\tau,h)=1\) separates regions where the quadratic interpolation estimator is asymptotically more efficient than the empirical quantile estimator from regions where the opposite holds.

For the Gaussian distribution, the efficiency ratio remains strictly below one throughout the investigated parameter domain. Moreover, the contour plot is nearly invariant with respect to \(\tau\), indicating that the relative efficiency depends primarily on the interpolation parameter \(h\). The symmetry of the Gaussian distribution implies
$
R(\tau,h)=R(1-\tau,h),
$
which explains the symmetry of the contour plot about \(\tau=1/2\). As \(h\) increases, the efficiency ratio decreases monotonically, reaching values close to \(0.64\) for the largest values of \(h\) considered. Thus the quadratic interpolation estimator uniformly dominates the empirical quantile estimator over the entire range of quantile levels investigated.

The Laplace distribution exhibits a markedly different behavior. Although the symmetry relation
$
R(\tau,h)=R(1-\tau,h)
$
still holds, the dependence on \(\tau\) is much stronger. The contour plot reveals a nontrivial region where \(R(\tau,h)<1\), separated from a region where \(R(\tau,h)>1\) by the contour \(R(\tau,h)=1\). For each quantile level there exists an optimal interpolation strength \(h_{\mathrm{eff}}^\star(\tau)\) that minimizes the efficiency ratio. The gain is particularly pronounced for extreme quantiles, where values of \(R(\tau,h)\) below \(0.65\) are attained. In contrast, for central quantiles and sufficiently large values of \(h\), the efficiency ratio exceeds one and the empirical quantile estimator becomes asymptotically preferable.

These results show that the role of the quantile level depends strongly on the tail behavior of the underlying distribution. For light-tailed distributions such as the Gaussian law, the influence of \(\tau\) is negligible and increasing the interpolation strength systematically improves efficiency. For heavier-tailed distributions such as the Laplace law, the quantile level substantially affects the optimal choice of \(h\), and the largest efficiency gains occur when estimating extreme quantiles.

\begin{figure}[!htbp]
	\centering
	
	\begin{minipage}{0.49\textwidth}
		\centering
		\includegraphics[width=\textwidth]{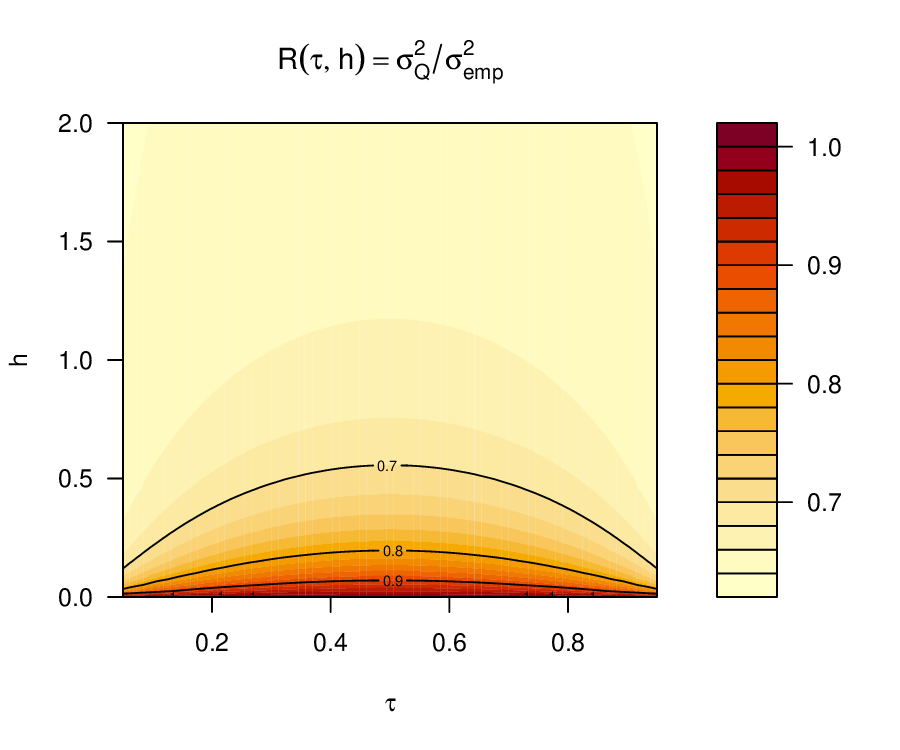}
		\subcaption{Gaussian distribution}
	\end{minipage}
	\hfill
	\begin{minipage}{0.49\textwidth}
		\centering
		\includegraphics[width=\textwidth]{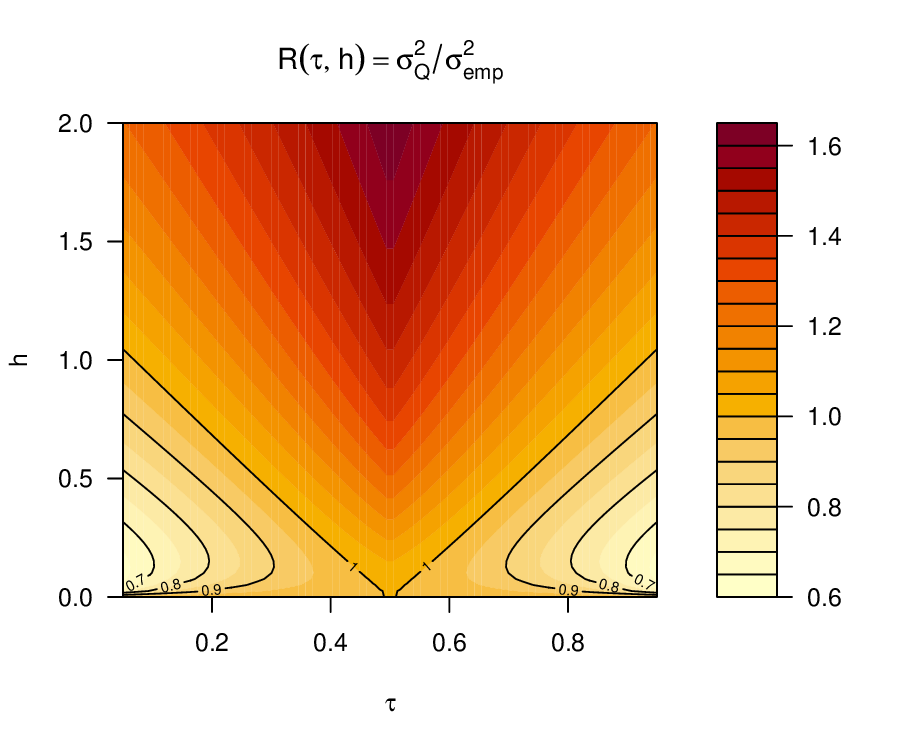}
		\subcaption{Laplace distribution}
	\end{minipage}
	
	\caption{
		Contour plots of the relative efficiency ratio
		$
		R(\tau,h)
		=
		\frac{\sigma_Q^2(\tau,h)}
		{\sigma_{\mathrm{emp}}^2(\tau_h)}.
		$
		The contour level \(R=1\) separates regions where the quadratic interpolation estimator is asymptotically more efficient than the empirical quantile estimator (\(R<1\)) from regions where the empirical quantile estimator is preferable (\(R>1\)). The Gaussian case exhibits uniform efficiency gains over the entire parameter domain, whereas the Laplace case displays a nontrivial efficiency boundary and an optimal interpolation strength depending on the quantile level.}
	\label{fig:contour-comparison}
\end{figure}

\paragraph{Optimal interpolation strength.}

The contour plots suggest that the role of the interpolation parameter $h$
depends strongly on the underlying distribution.
To quantify this effect, we define
\[
h_{\mathrm{eff}}^\star(\tau)
=
\arg\min_{h\ge0} R(\tau,h),
\]
and
\[
R^\star(\tau)
=
\min_{h\ge0} R(\tau,h).
\]
Figure~\ref{fig:optimal-h-comparison} displays the resulting optimal
interpolation strength $h_{\mathrm{eff}}^\star(\tau)$ together with the corresponding
minimal efficiency ratio $R^\star(\tau)$ for both the Gaussian and
Laplace distributions.

\begin{figure}[!htbp]
	\centering
	
	\includegraphics[width=0.95\textwidth]{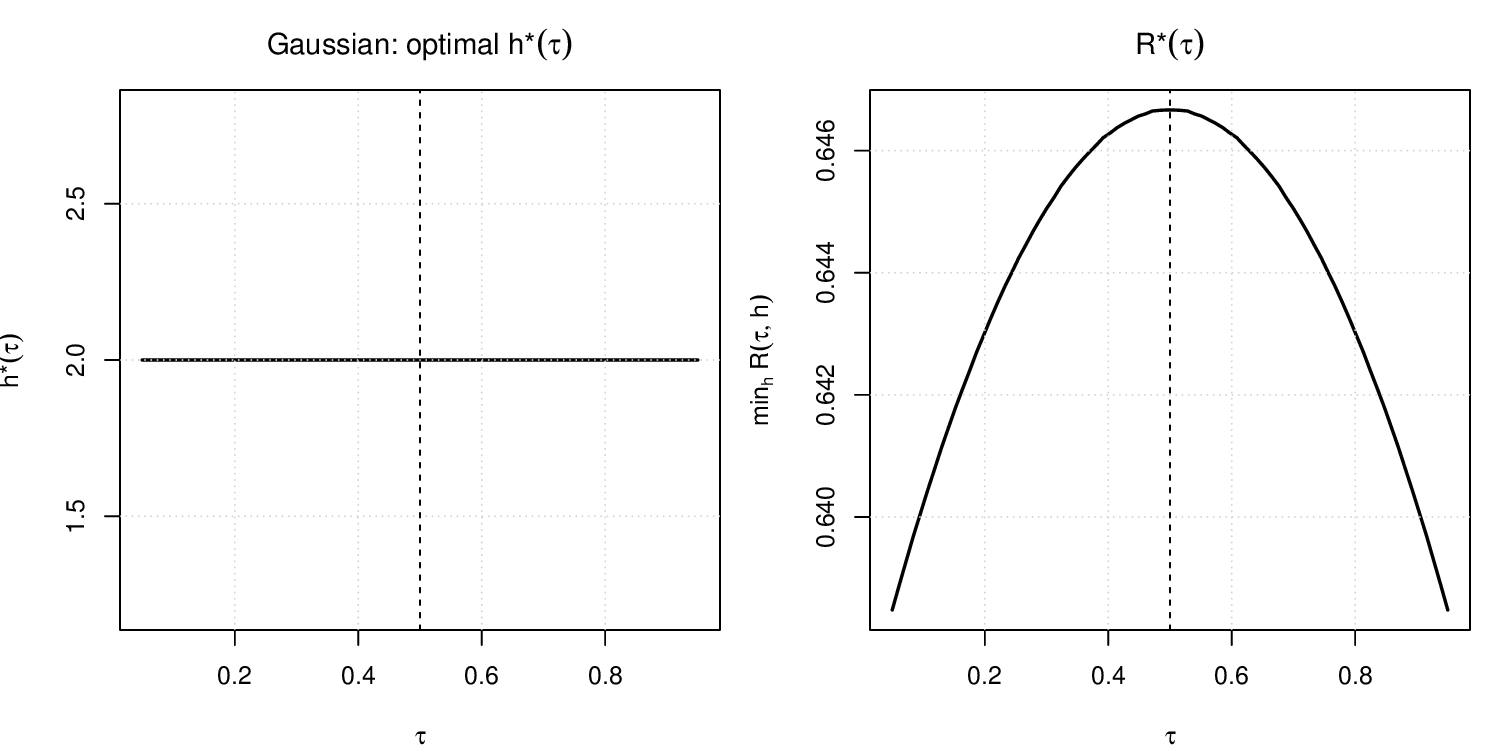}
	
	\vspace{0.3cm}
	
	\includegraphics[width=0.95\textwidth]{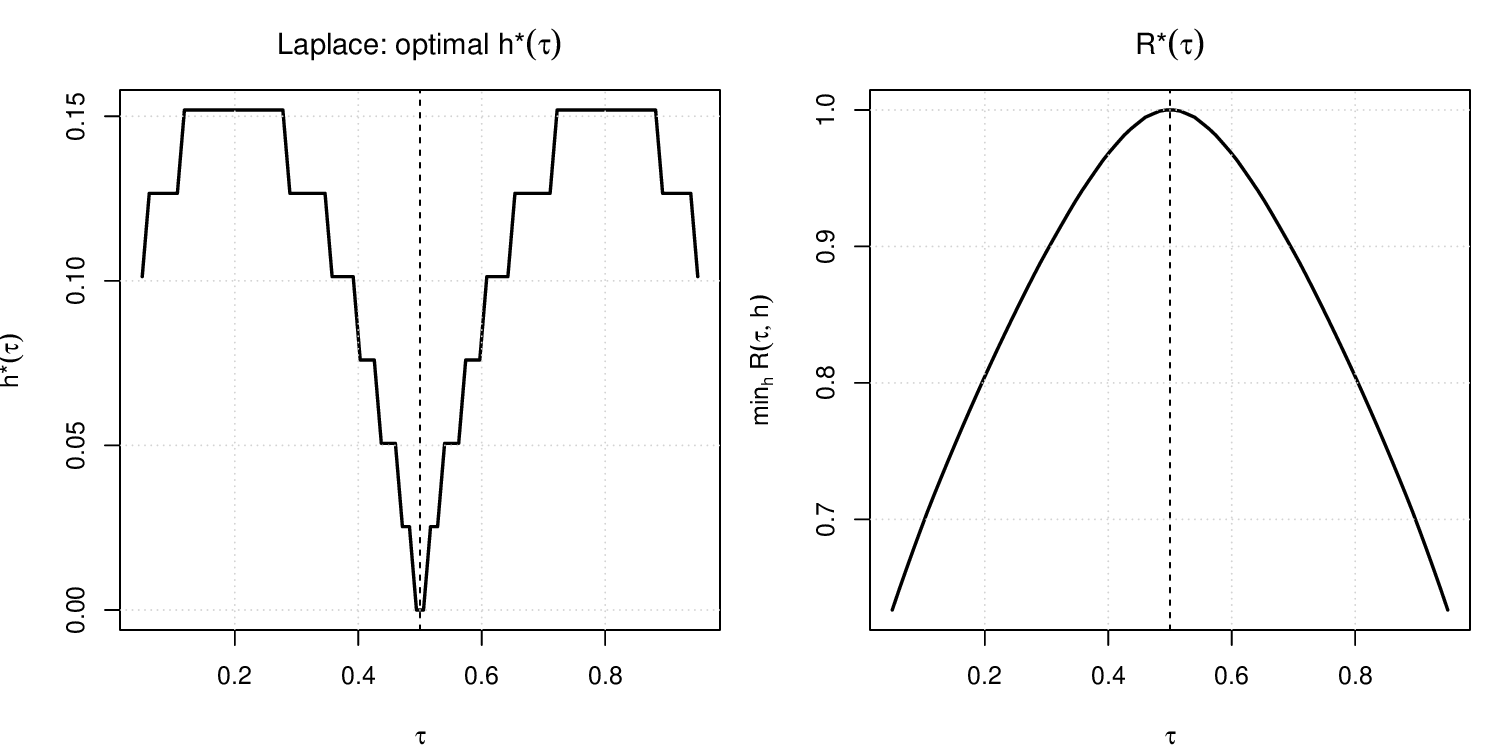}
	
	\caption{
		Optimal interpolation analysis.
		Top: Gaussian distribution.
		Bottom: Laplace distribution.
		For each distribution, the left panel displays the optimal interpolation
		strength
		$
		h_{\mathrm{eff}}^\star(\tau)
		=
		\arg\min_{h\ge0}R(\tau,h),
		$
		while the right panel displays the corresponding minimal efficiency ratio
		$
		R^\star(\tau)
		=
		\min_{h\ge0}R(\tau,h).
		$
	}
	\label{fig:optimal-h-comparison}
\end{figure}

For the Gaussian distribution, the minimum is attained at the largest
investigated value of $h$ for all quantile levels. Thus, within the
explored range, the efficiency ratio decreases monotonically with $h$,
indicating that stronger interpolation systematically improves asymptotic
efficiency. Moreover, the optimal ratio remains nearly constant over
$\tau$, taking values between approximately 0.638 and 0.647.
This corresponds to an asymptotic variance reduction of roughly
35\%-36\% relative to the empirical quantile estimator.
The symmetry of $R^\star(\tau)$ around $\tau=0.5$ reflects the
symmetry of the Gaussian distribution.
The Laplace distribution exhibits a markedly different behavior.
The optimal interpolation strength depends substantially on the quantile
level and satisfies the symmetry relation
\[
h_{\mathrm{eff}}^\star(\tau)=h_{\mathrm{eff}}^\star(1-\tau).
\]
The optimal value decreases as $\tau$ approaches the median and reaches
$
h_{\mathrm{eff}}^\star(0.5)=0.
$
Consequently, the empirical median is already asymptotically optimal,
and interpolation does not improve efficiency at the center of the
distribution. In contrast, the tails benefit considerably from
interpolation, with
$
h_{\mathrm{eff}}^\star(0.05)\approx 0.10,
$
$
h_{\mathrm{eff}}^\star(0.95)\approx 0.10.
$
yielding
$
R^\star(0.05)\approx 0.63,
$
$
R^\star(0.95)\approx 0.63.
$
This corresponds to an asymptotic variance reduction of approximately
37\% relative to the empirical quantile estimator.
As the quantile level approaches the center of the distribution,
the benefit of interpolation gradually vanishes and
$
R^\star(0.5)=1.
$
Hence no asymptotic gain can be achieved at the median.
These findings indicate that, for heavy-tailed distributions,
the usefulness of interpolation depends strongly on the quantile level,
with the greatest gains occurring in the tails.

Taken together, Figures~\ref{fig:contour-comparison}
and~\ref{fig:optimal-h-comparison}
show that the dependence of the efficiency ratio on $\tau$ is weak for
the Gaussian distribution but substantial for the Laplace distribution.
This contrast suggests that the optimal amount of interpolation is closely
linked to tail behavior: light-tailed distributions tend to benefit from
increasingly strong interpolation, whereas heavy-tailed distributions
admit a quantile-dependent optimal interpolation strength.

\subsubsection{Limiting interpolation paths}

The previous analysis considered the efficiency ratio
\[
R(\tau,h)
=
\frac{\sigma_Q^2(\tau,h)}
{\sigma_{\mathrm{emp}}^2(\tau_h)}
\]
over a two-dimensional grid of values of \(\tau\) and \(h\).
To further investigate the influence of the interpolation parameter,
we now examine the boundary cases
$
\tau=0
$
and
$
\tau=1.
$
These values should be interpreted as boundary interpolation paths
rather than ordinary quantile levels.
Indeed, for many continuous distributions the quantiles associated with
$\tau=0$ and $\tau=1$ are not finite.
Nevertheless, the interpolation equation
\[
F(q)+h(q-m)=\tau
\]
remains well defined and generates limiting trajectories that explore
the lower and upper tails of the distribution.
For fixed \(\tau\), the interpolation equation
$
F(q)+h(q-m)=\tau
$
defines a trajectory
$
h\mapsto q(\tau,h).
$
When \(\tau=0\), the resulting path traverses quantiles below the mean,
whereas \(\tau=1\) generates quantiles above the mean.
For each value of \(h\), we compare the asymptotic variance
$
\sigma_Q^2(\tau,h)
$
of the quadratic interpolation estimator with the asymptotic variance
\[
\sigma_{\mathrm{emp}}^2(F(q(\tau,h)))
=
\frac{F(q(\tau,h))
	\bigl(1-F(q(\tau,h))\bigr)}
{f(q(\tau,h))^2}
\]
of the empirical quantile estimator targeting the same population
quantile.
The resulting efficiency ratio is
\[
R(\tau,h)
=
\frac{\sigma_Q^2(\tau,h)}
{\sigma_{\mathrm{emp}}^2(F(q(\tau,h)))}.
\]
Figure~\ref{fig:extreme-paths}
shows the evolution of this ratio as a function of \(h\)
for the Gaussian and Laplace distributions.
\begin{figure}[!htbp]
	\centering
	\includegraphics[width=0.95\textwidth]{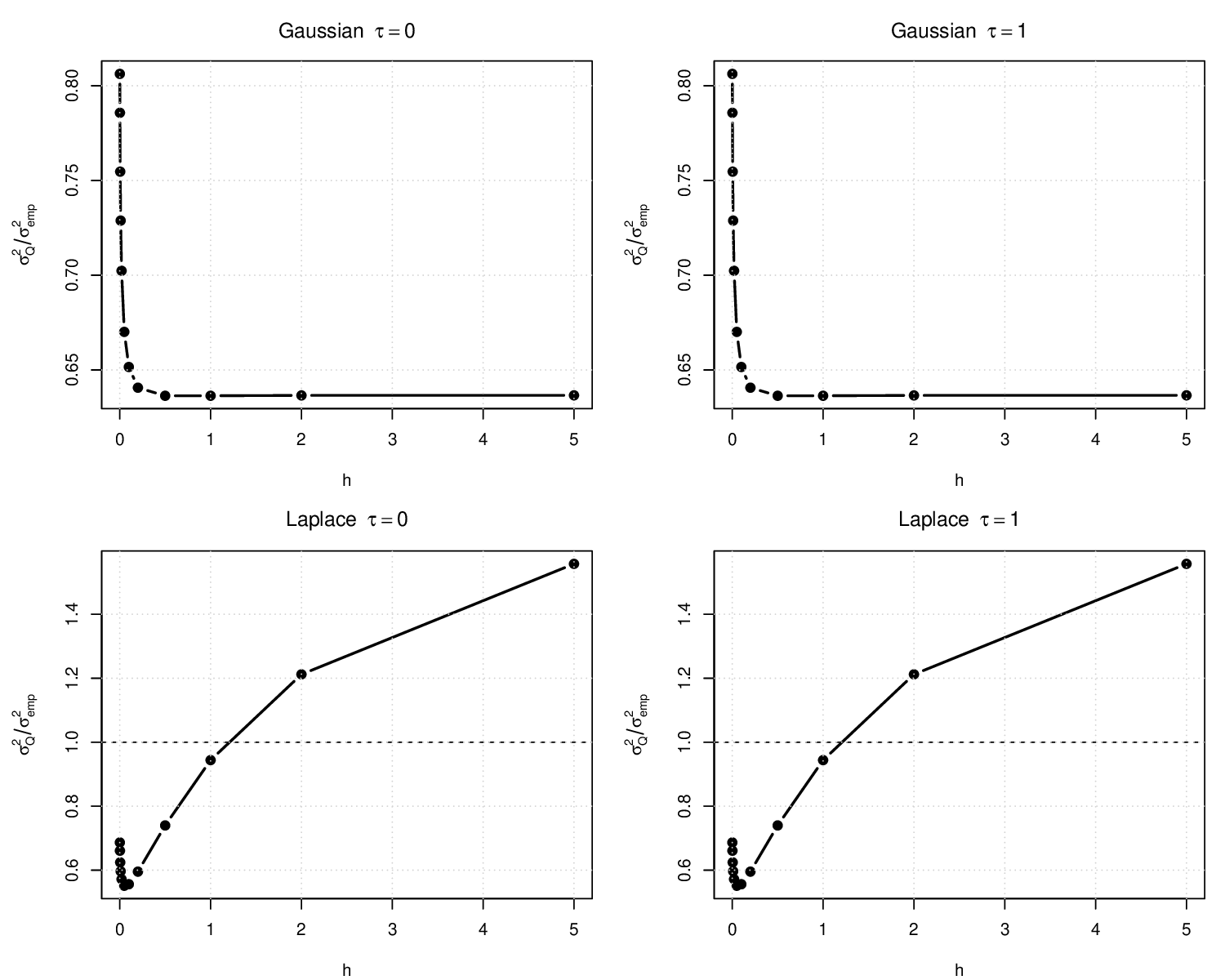}
	\caption{
		Efficiency ratio
		\(
		R(\tau,h)
		=
		\sigma_Q^2(\tau,h)
		/\sigma_{\mathrm{emp}}^2(F(q(\tau,h)))
		\)
		along the extreme interpolation paths corresponding to
		\(\tau=0\) and \(\tau=1\).
		The upper panels correspond to the Gaussian distribution and the lower
		panels to the Laplace distribution.
		The horizontal dashed line represents the benchmark level \(R=1\).
		Values below this line indicate that the quadratic interpolation
		estimator is asymptotically more efficient than the empirical quantile
		estimator targeting the same population quantile.
	}
	\label{fig:extreme-paths}
\end{figure}
The Gaussian distribution exhibits a remarkably stable behavior.
For both \(\tau=0\) and \(\tau=1\), the efficiency ratio remains
strictly below one over the entire range of values of \(h\)
investigated.
Moreover, the ratio decreases monotonically from approximately
\(0.81\) to \(0.64\) and then stabilizes.
Consequently, the quadratic interpolation estimator uniformly
outperforms the empirical quantile estimator along the entire path of
quantiles generated by the interpolation equation.
The limiting value
$
R(h)\approx 0.64
$
corresponds to an asymptotic variance reduction of approximately
\(36\%\).
The Laplace distribution displays a fundamentally different pattern.
The efficiency ratio initially decreases as \(h\) increases,
reaching a minimum value close to
$
R^\star \approx 0.55
$
for
$
h_{\mathrm{eff}}^\star \approx 0.05.
$
This corresponds to an asymptotic variance reduction of approximately
\(45\%\), which is substantially larger than in the Gaussian case.
However, beyond this optimal value the efficiency ratio increases
again.
Eventually it exceeds the threshold \(R=1\), indicating that excessive
interpolation becomes detrimental and that the empirical quantile
estimator becomes asymptotically preferable.
For both distributions, the curves corresponding to \(\tau=0\) and
\(\tau=1\) coincide up to numerical accuracy.
This reflects the symmetry relation
\[
R(0,h)=R(1,h),
\]
which follows from the symmetry of the Gaussian and Laplace
distributions around their means.

\noindent
These results provide additional evidence that the role of
interpolation is strongly linked to tail behavior.
For the Gaussian distribution, increasing interpolation strength
consistently improves efficiency.
In contrast, the Laplace distribution admits a genuine optimal
interpolation strength, beyond which additional interpolation leads to
a loss of efficiency.
This distinction complements the conclusions drawn from
Figures~\ref{fig:contour-comparison} and
\ref{fig:optimal-h-comparison} and further highlights the
distribution-dependent nature of the optimal interpolation strategy.

\subsubsection{Asymmetric distributions and tail-dependent efficiency}

The previous examples were based on the Gaussian and Laplace
distributions, both of which are symmetric around their means.
Consequently, the efficiency ratio satisfies the symmetry relation
$
R(0,h)=R(1,h),
$
and the behavior of the interpolation estimator is identical in the
lower and upper tails.
To investigate the effect of asymmetry, we consider the shifted
exponential distribution
\[
Y = Z-1,
\qquad
Z\sim \operatorname{Exp}(1),
\]
which has mean \(m=0\), variance \(\sigma_Y^2=1\), support
\([ -1,\infty )\), and density
\[
f(y)=e^{-(y+1)},
\qquad y\ge -1.
\]
Unlike the Gaussian and Laplace laws, this distribution is strongly
right-skewed.
As in the previous subsection, we examine the two extreme interpolation
paths generated by
$
F(q)+h(q-m)=\tau,
$
corresponding to \(\tau=0\) and \(\tau=1\).
For each value of \(h\), we compare the asymptotic variance of the
quadratic interpolation estimator,
$
\sigma_Q^2(\tau,h),
$
with the asymptotic variance of the empirical quantile estimator
targeting the same population quantile,
\[
\sigma_{\mathrm{emp}}^2(F(q))
=
\frac{F(q)(1-F(q))}
{f(q)^2}.
\]
The resulting efficiency ratio is
\[
R(\tau,h)
=
\frac{\sigma_Q^2(\tau,h)}
{\sigma_{\mathrm{emp}}^2(F(q))}.
\]

\begin{figure}[!htbp]
	\centering
	\includegraphics[width=0.9\textwidth]{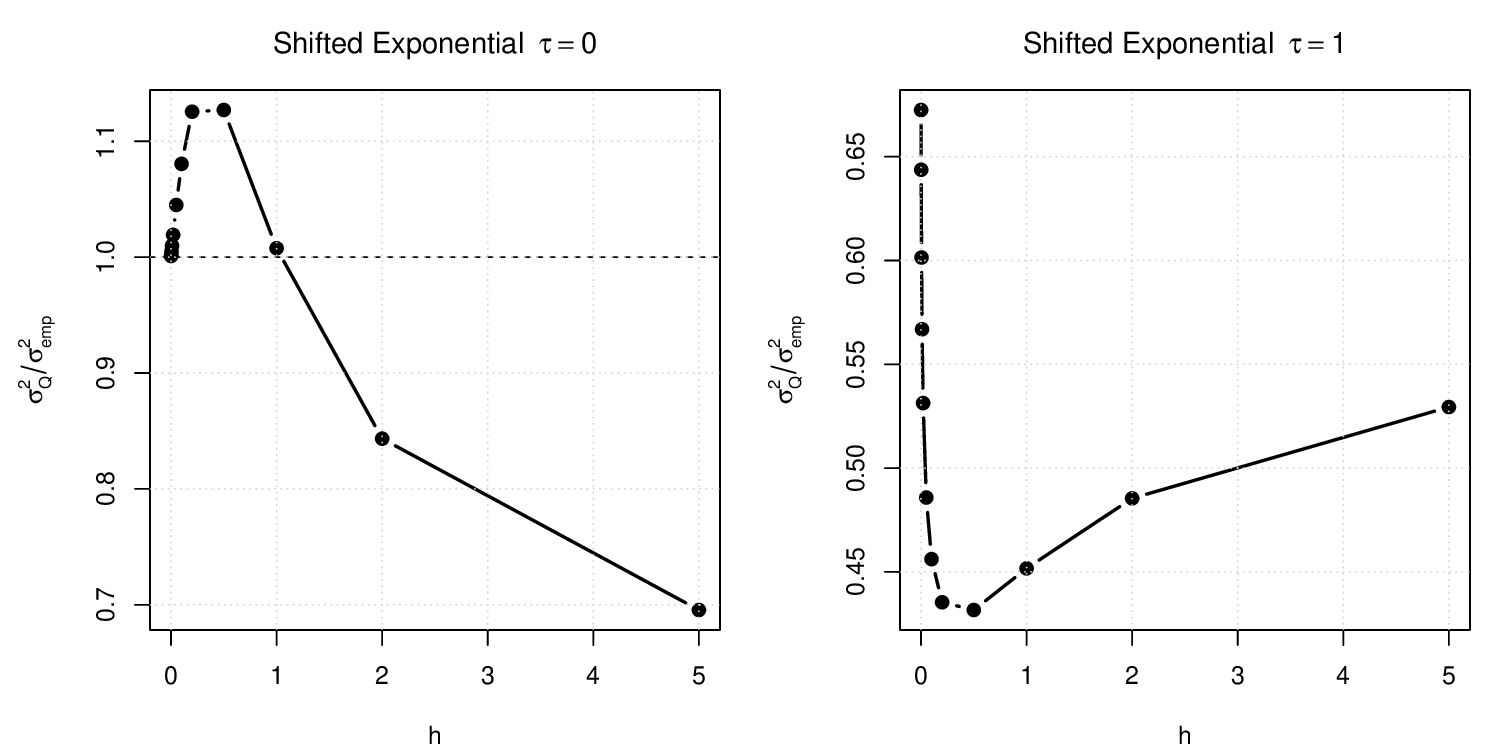}
	\caption{
		Efficiency ratio
		$
		R(\tau,h)
		=
		\frac{\sigma_Q^2(\tau,h)}
		{\sigma_{\mathrm{emp}}^2(F(q))}
		$
		along the extreme interpolation paths for the shifted exponential
		distribution.
		The left panel corresponds to \(\tau=0\), while the right panel
		corresponds to \(\tau=1\).
		The dashed horizontal line indicates the benchmark level \(R=1\).
	}
	\label{fig:shifted-exp-extreme}
\end{figure}

\noindent
Figure~\ref{fig:shifted-exp-extreme} reveals a striking asymmetry
between the lower and upper tails.
For \(\tau=0\), the efficiency ratio initially exceeds one,
reaching values close to
$
R \approx 1.13.
$
Thus, for a broad range of moderate interpolation strengths,
the empirical quantile estimator is asymptotically more efficient than
the interpolated estimator.
Only for sufficiently large values of \(h\) does the efficiency ratio
drop below one, eventually reaching values around
$
R \approx 0.70.
$
Hence the gain produced by interpolation in the lower tail is
relatively modest and appears only for strong interpolation.
The behavior is completely different for \(\tau=1\).
In this case the efficiency ratio remains strictly below one over the
entire range of values of \(h\) considered.
The minimum value is attained near
$
h_{\mathrm{eff}}^\star \approx 0.5,
$
for which
$
R^\star \approx 0.43.
$
This corresponds to an asymptotic variance reduction of approximately
\(57\%\) relative to the empirical quantile estimator.
Therefore, interpolation is uniformly beneficial in the upper tail and
provides substantially larger gains than in the symmetric examples.
The asymmetry observed in Figure~\ref{fig:shifted-exp-extreme}
can be interpreted using the variance representation of
Proposition~\ref{prop:variance-structure}.
The denominator of
\[
\sigma_Q^2(\tau,h)
=
\frac{
	F(q)(1-F(q))
	+
	h^2\sigma_Y^2
	-
	2hT(q)
}
{(f(q)+h)^2}
\]
contains the local density \(f(q)\).
Near the left endpoint of the support,
the density remains relatively large,
so the empirical quantile estimator is already highly efficient and
there is little room for improvement through interpolation.
In contrast, the upper tail corresponds to values of \(q\) for which
\(f(q)\) becomes very small.
The interpolation term \(h\) then acts as a regularization mechanism,
preventing the denominator from becoming excessively small and
substantially reducing the asymptotic variance.

These results demonstrate that the effectiveness of interpolation is
influenced not only by tail heaviness but also by distributional
asymmetry.
For asymmetric distributions, the gain achieved by interpolation may
differ dramatically between the lower and upper tails.
The shifted exponential distribution therefore provides a concrete
example where interpolation is highly beneficial in one tail while
offering only limited improvement in the other.

\section{Huber interpolation}
\label{sec:two}
\subsection{Definition and first-order condition}

We also consider a robust extension based on the Huber loss
\[
\rho_H(u)
=
\begin{cases}
	\frac12u^2,
	&
	|u|\le k,
	\\[2mm]
	k|u|-\frac12k^2,
	&
	|u|>k,
\end{cases}
\]
where
$
k>0
$
is a tuning parameter.
The corresponding population Huber interpolation estimator is defined by
\[
q_H(\tau,h)
=
\arg\min_{q\in\mathbb R}
\left\{
\mathbb E[\rho_\tau(Y-q)]
+
h\,
\mathbb E[\rho_H(Y-q)]
\right\},
\]
with empirical counterpart
\[
\hat q_H(\tau,h)
=
\arg\min_{q\in\mathbb R}
\left\{
\sum_{i=1}^n
\rho_\tau(y_i-q)
+
h
\sum_{i=1}^n
\rho_H(y_i-q)
\right\}.
\]
The associated first-order condition is
\[
F(q)
-
h\,\mathbb E[\psi_H(Y-q)]
=
\tau,
\]
where
$
\psi_H=\rho_H'
$
denotes the Huber score function.
Thus, the Huber interpolation also defines a probability-space deformation mechanism, but with bounded influence contributions replacing the quadratic correction term.

\begin{proposition}
	\label{prop:huber-existence}
	
	Assume that $F$ admits a continuous density $f$, that
	$\mathbb E|Y|<\infty$, and that
	$
	f(q)>0
	$
	for every $q$ in the support of $Y$.
	For every $\tau\in(0,1)$ and every $h\ge0$,
	the equation
	$
	F(q)-h\,\mathbb E[\psi_H(Y-q)]
	=
	\tau
	$
	admits a unique solution
	$q_H(\tau,h)\in\mathbb R$.
	
\end{proposition}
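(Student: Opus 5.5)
Consider the function
\[
G(q)=F(q)-h\,\mathbb E[\psi_H(Y-q)],\qquad q\in\mathbb R,
\]
and show it is continuous, strictly increasing, and has limits $0$ at $-\infty$ and $1+$ something nonnegative at $+\infty$, so that it crosses the level $\tau\in(0,1)$ exactly once. The case $h=0$ reduces to $F(q)=\tau$, which has a unique solution under the assumptions, so the interesting case is $h>0$.

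\textbf{Continuity and monotonicity.} Since $\psi_H(u)=\max\{-k,\min\{u,k\}\}$ is bounded by $k$ and $1$-Lipschitz, $q\mapsto \mathbb E[\psi_H(Y-q)]$ is continuous by dominated convergence. It is also nonincreasing, because $\psi_H$ is nondecreasing. Hence $-h\,\mathbb E[\psi_H(Y-q)]$ is continuous and nondecreasing. Since $F$ is continuous, $G$ is continuous.

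For strict monotonicity, note that $F$ is strictly increasing on the support because $f>0$ there. For $q_1<q_2$ one has $G(q_2)-G(q_1)\ge F(q_2)-F(q_1)$, which is $>0$ whenever $(q_1,q_2)$ meets the interior of the support. If $(q_1,q_2)$ lies entirely outside the support, I would use the Huber term instead. Indeed,
\[
\mathbb E[\psi_H(Y-q_1)]-\mathbb E[\psi_H(Y-q_2)] > 0
\]
as soon as $Y-q$ falls in $(-k,k)$ with positive probability for some $q$ in between. This fails only far from the support, where $\psi_H(Y-q)\equiv\pm k$ and $G$ is flat at the values $-h(-k)$ or $1-hk$. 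So uniqueness needs an argument about where the root can lie: show that any root lies where $G$ is strictly increasing. On flat regions left of the support, $G=F-h\mathbb E\psi_H = 0-hk<0<\tau$, since there $Y-q>k$ and $\psi_H=k$. On flat regions right of the support, $G=1+hk>1>\tau$. Thus $G$ takes the value $\tau$ only on the region where it is strictly increasing, and the root is unique.

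\textbf{Existence.} As $q\to-\infty$, dominated convergence gives $F(q)\to0$ and $\mathbb E\psi_H(Y-q)\to k$, so $G\to -hk\le0<\tau$. As $q\to+\infty$, $G\to 1+hk\ge1>\tau$. The intermediate value theorem then yields a root.

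The main obstacle is the strict monotonicity step when the support is bounded. The careful point is to verify that $G$ is strictly below $\tau$ (resp.\ above) on any interval where it is constant. I will handle this by computing $G$ explicitly at support endpoints: at the left endpoint $a$, $F(a)=0$ and $\mathbb E\psi_H(Y-a)\ge0$ since $Y\ge a$, so $G(a)\le0<\tau$. Symmetrically, $G(b)\ge1>\tau$ at the right endpoint $b$. Combined with strict increase of $F$ on $(a,b)$, this gives exactly one crossing in $(a,b)$. The integrability condition $\mathbb E|Y|<\infty$ is then only needed to ensure that the objective defining $q_H(\tau,h)$ is finite, so that this root is its minimizer.
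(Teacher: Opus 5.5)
Your proof is correct. It shares the paper's skeleton: continuity of $G$, monotonicity, the limits $-hk$ and $1+hk$ at $\mp\infty$, and the intermediate value theorem. The difference is how you obtain strict monotonicity. The paper differentiates, $G_H'(q)=f(q)+h\,\mathbb P(|Y-q|\le k)$, and asserts this is positive everywhere. You avoid differentiating under the expectation. You use only that $\psi_H$ is nondecreasing, so $G$ is nondecreasing, and then confine the root to the region where $F$ itself is strictly increasing, via the endpoint values $G(a)\le 0<\tau<1\le G(b)$.

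The paper's route is shorter and produces the slope $A_H(\tau,h)$ that is reused in Theorem~\ref{thm:huber-clt}. Yours is derivative-free and would survive a density vanishing outside a bounded interval, where the paper's claim that $G_H'>0$ everywhere is false.

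Under the hypotheses as stated, though, your bounded-support case is vacuous. Since $f$ is continuous on $\mathbb R$, it vanishes off the closed support, and it is positive on the support by assumption. Hence the support equals $\{f>0\}$, which is both open and closed, so it is all of $\mathbb R$. Thus $f>0$ everywhere and the paper's one-line argument suffices.

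Your confinement argument also tacitly assumes the support is an interval. Suppose it had a gap $(c,d)$ with $d-c>2k$. Then $G$ would be constant, equal to $F(c)-hk\bigl(1-2F(c)\bigr)$, on $(c+k,d-k)$. That level lies in $(0,1)$ for small $h$, so uniqueness would fail. The same observation rules this case out.

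Two small points. In your flat-region sentence, the values $-h(-k)$ and $1-hk$ have their signs swapped; they should be $-hk$ and $1+hk$, as you state correctly a few lines later. You are right that $\mathbb E|Y|<\infty$ plays no role in solving the equation, since $\psi_H$ is bounded; the paper's proof does not use it either.
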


\begin{proof}
	
	Define
	\[
	G_H(q)
	=
	F(q)-h\,\mathbb E[\psi_H(Y-q)].
	\]
	Since $\psi_H$ is bounded and continuous,
	the map
	$
	q\mapsto \mathbb E[\psi_H(Y-q)]
	$
	is continuous by dominated convergence.
	Hence $G_H$ is continuous.
	Furthermore,
	\[
	\frac{d}{dq}
	\mathbb E[\psi_H(Y-q)]
	=
	-\mathbb P(|Y-q|\le k),
	\]
	By assumption,
	$
	G_H'(q)
	=
	f(q)
	+
	h\,\mathbb P(|Y-q|\le k)
	>
	0,
	$
	and therefore $G_H$ is strictly increasing.
	Since $\psi_H$ is bounded,
	\[
	\lim_{q\to-\infty}
	\mathbb E[\psi_H(Y-q)]
	=
	k,
	\qquad
	\lim_{q\to+\infty}
	\mathbb E[\psi_H(Y-q)]
	=
	-k.
	\]
	Consequently,
	\[
	\lim_{q\to-\infty}
	G_H(q)
	=
	-hk,
	\qquad
	\lim_{q\to+\infty}
	G_H(q)
	=
	1+hk.
	\]
	Because
	$
	-hk < \tau < 1+hk,
	$
	the intermediate value theorem implies existence of a solution.
	Strict monotonicity yields uniqueness.
\end{proof}

\subsection{Asymptotic normality}

The Huber interpolation estimator also belongs to the class of
one-dimensional M-estimators.
Define
\[
\Psi_H(y,q)
=
\tau-\mathbf 1_{\{y<q\}}
+
h\,\psi_H(y-q),
\]
where
\[
\psi_H(u)
=
\begin{cases}
	u,
	&
	|u|\le k,
	\\[2mm]
	k\,\mathrm{sign}(u),
	&
	|u|>k,
\end{cases}
\]
denotes the Huber score function.
The population estimator \(q_H(\tau,h)\) is characterized by
$
\mathbb E[\Psi_H(Y,q_H(\tau,h))]=0,
$
which is equivalent to the first-order condition
\[
F(q_H(\tau,h))
-
h\,\mathbb E\!\left[\psi_H(Y-q_H(\tau,h))\right]
=
\tau.
\]
The following theorem establishes the asymptotic normality of the
Huber interpolation estimator.

\begin{theorem}
	\label{thm:huber-clt}
	Assume that \(Y_1,\ldots,Y_n\) are i.i.d. observations from a distribution
	with cumulative distribution function \(F\) and density \(f\),
	let $h\ge0$ and $k>0$ be fixed constants independent of $n$.
	Under the assumptions of Proposition~\ref{prop:huber-existence},
	let \(q_H(\tau,h)\) denote the unique solution of
	$
	F(q)
	-
	h\,\mathbb E[\psi_H(Y-q)]
	=
	\tau.
	$
	Then
	\[
	\sqrt n
	\Bigl(
	\hat q_H(\tau,h)-q_H(\tau,h)
	\Bigr)
	\xrightarrow{d}
	\mathcal N
	\!\left(
	0,
	\sigma_H^2(\tau,h)
	\right),
	\]
	where
	\[
	\sigma_H^2(\tau,h)
	=
	\frac{B_H(\tau,h)}
	{A_H(\tau,h)^2},
	\]
	with
	\[
	A_H(\tau,h)
	=
	f(q_H(\tau,h))
	+
	h\,\mathbb P
	\!\left(
	|Y-q_H(\tau,h)|\le k
	\right),
	\]
	and
	\[
	B_H(\tau,h)
	=
	\mathbb E
	\Bigl[
	\bigl(
	\tau
	-
	\mathbf 1_{\{Y<q_H(\tau,h)\}}
	+
	h\,\psi_H(Y-q_H(\tau,h))
	\bigr)^2
	\Bigr].
	\]
\end{theorem}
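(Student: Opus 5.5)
Since the empirical objective is convex in $q$ (the check loss is convex, $h\ge0$, and $\rho_H$ is convex), I would proceed by the convexity argument for one-dimensional M-estimators in the spirit of the treatment of Theorem~\ref{thm:quadratic-clt} in Appendix~\ref{app:proof-quadratic-clt}. Write $q_0=q_H(\tau,h)$ and
\[
M_n(q)=\frac1n\sum_{i=1}^n\bigl(\rho_\tau(Y_i-q)+h\rho_H(Y_i-q)\bigr),
\qquad M(q)=\mathbb E\bigl[\rho_\tau(Y-q)+h\rho_H(Y-q)\bigr],
\]
where $M$ is finite because $\mathbb E|Y|<\infty$ and both losses grow at most linearly. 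By Proposition~\ref{prop:huber-existence}, $q_0$ is the unique zero of $M'(q)=F(q)-\tau-h\,\mathbb E[\psi_H(Y-q)]=-\mathbb E[\Psi_H(Y,q)]$. Differentiating once more, as in the proof of that proposition, gives $M''(q_0)=A_H(\tau,h)>0$. Here $f$ is continuous and $q\mapsto\mathbb P(|Y-q|\le k)=F(q+k)-F(q-k)$ is continuous, so $M$ is twice continuously differentiable near $q_0$.

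The key step is a quadratic expansion of the localized process. For fixed $u\in\mathbb R$, set
\[
Z_n(u)=n\bigl(M_n(q_0+u/\sqrt n)-M_n(q_0)\bigr).
\]
I would decompose
\[
Z_n(u)=-\frac{u}{\sqrt n}\sum_{i=1}^n\Psi_H(Y_i,q_0)+\sum_{i=1}^n R_{ni}(u),
\]
where $R_{ni}(u)$ is the remainder of each loss beyond its linear term. For the check loss, Knight's identity gives $R_{ni}=\int_0^{u/\sqrt n}(\mathbf 1_{\{Y_i\le q_0+s\}}-\mathbf 1_{\{Y_i\le q_0\}})\,ds$. For the Huber part, the remainder is $h\int_0^{u/\sqrt n}(\psi_H(Y_i-q_0)-\psi_H(Y_i-q_0-s))\,ds$. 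Each $R_{ni}$ is nonnegative and bounded by $C|u|/\sqrt n\cdot\mathbf 1_{\{Y_i\text{ within }|u|/\sqrt n\text{ of }q_0\text{ or }q_0\pm k\}}+C u^2/n$, since $\psi_H$ is $1$-Lipschitz. Hence $\sum_i\mathbb E R_{ni}=n\bigl(M(q_0+u/\sqrt n)-M(q_0)\bigr)\to \tfrac12A_Hu^2$ by a second-order Taylor expansion, and $\operatorname{Var}(\sum_iR_{ni})\le n\,\mathbb E R_{ni}^2=O\bigl(n\cdot n^{-1}\cdot n^{-1/2}\bigr)+O(n^{-1})\to0$, using the boundedness of $f$ near $q_0$ and $q_0\pm k$. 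This gives, pointwise in $u$,
\[
Z_n(u)=-uW_n+\tfrac12A_H(\tau,h)u^2+o_p(1),
\qquad W_n=\frac1{\sqrt n}\sum_{i=1}^n\Psi_H(Y_i,q_0).
\]

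To conclude, note that $\mathbb E\Psi_H(Y,q_0)=0$ and $\Psi_H$ is bounded, so $W_n\to\mathcal N(0,B_H)$ by the classical CLT. Because each $Z_n$ is convex in $u$, pointwise convergence upgrades to uniform convergence on compacts. The convexity lemma (Pollard 1991; Hjort--Pollard) then gives
\[
\sqrt n(\hat q_H-q_0)=\hat u_n=W_n/A_H+o_p(1),
\]
with $\hat u_n$ any minimizer of $Z_n$, which yields $\mathcal N(0,B_H/A_H^2)$. This argument also handles the non-uniqueness of the empirical minimizer, since any measurable selection works.

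I expect the main obstacle to be the remainder control, specifically the variance bound for the indicator part of the check loss and the kinks of $\psi_H$ at $\pm k$. These require that $f$ be bounded in neighborhoods of $q_0$ and $q_0\pm k$, which follows from the continuity of $f$. The expectation expansion also needs differentiability of $\mathbb P(|Y-q|\le k)$ in order to interchange differentiation and expectation; the continuity and integrability assumptions of Proposition~\ref{prop:huber-existence} provide this.
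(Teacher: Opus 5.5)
Your proposal is correct, but it takes a genuinely different route from the paper.

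\textbf{The paper's route.} The paper treats $\hat q_H$ as a Z-estimator with $M_n(q)=\frac1n\sum_i\Psi_H(Y_i,q)$. It gets consistency from Glivenko--Cantelli properties of $\{\mathbf 1_{\{y<q\}}\}$ and $\{\psi_H(\cdot-q)\}$. It then linearizes $M_n(\hat q_H)-M_n(q_0)$ using asymptotic equicontinuity of the corresponding Donsker classes. Finally it bootstraps the expansion $A_H(\tau,h)(\hat q_H-q_0)=(M_n-M)(q_0)+o_p(|\hat q_H-q_0|)+o_p(n^{-1/2})$ into $\sqrt n$-consistency and applies Slutsky.

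\textbf{Your route.} You work with the convex criterion itself. Knight's identity and the Lipschitz property of $\psi_H$ reduce the problem to a pointwise quadratic approximation of $Z_n(u)$. That approximation needs only a second-order Taylor expansion of $M$, a second-moment bound and Chebyshev. The convexity lemma then gives $\sqrt n(\hat q_H-q_0)=W_n/A_H(\tau,h)+o_p(1)$ directly.

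\textbf{What each approach buys.} Your argument needs no empirical-process theory. Consistency and the $\sqrt n$ rate come out of the convexity lemma rather than from a separate localization step; the paper's appeal to Theorem~5.7 of van der Vaart with uniform convergence only on compacts actually needs such a step, e.g.\ monotonicity of $M_n$. Your argument also covers any minimizer. It never uses the identity $M_n(\hat q_H)=0$, which the paper asserts but which fails in general: $M_n$ jumps by $1/n$ at each observation, so one only has $0\le M_n(\hat q_H)\le 1/n$ almost surely. That is harmless, but it has to be said. The paper's template, in turn, does not use convexity. Given a consistent selected root, it therefore carries over to the non-convex bisquare case of Theorem~\ref{thm:bisquare-clt}, where the convexity lemma is unavailable.

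\textbf{Two small simplifications.} Because $\psi_H$ is $1$-Lipschitz, the Huber part of $R_{ni}(u)$ is bounded by $hu^2/(2n)$ everywhere, so you need no control of $f$ near $q_0\pm k$. The Taylor step also only requires continuity of $q\mapsto\mathbb P(|Y-q|\le k)=F(q+k)-F(q-k)$, which you already noted, not its differentiability.
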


\begin{proof}
	
	The estimator $\hat q_H(\tau,h)$ is the unique solution of the
	empirical estimating equation
	\[
	\frac1n
	\sum_{i=1}^n
	\Bigl(
	\tau
	-
	\mathbf 1_{\{Y_i<q\}}
	+
	h\,\psi_H(Y_i-q)
	\Bigr)
	=
	0.
	\]
	Since the score contains the discontinuous indicator function
	$\mathbf 1_{\{Y_i<q\}}$, the standard smooth M-estimation theory
	does not apply directly.
	A rigorous derivation based on empirical-process arguments is
	given in Appendix~\ref{app:proof-huber-clt}.
	The appendix establishes the asymptotic linear representation
	\[
	\sqrt n
	\Bigl(
	\hat q_H(\tau,h)-q_H(\tau,h)
	\Bigr)
	=
	\frac1{A_H(\tau,h)}
	\frac1{\sqrt n}
	\sum_{i=1}^n
	\Psi_H(Y_i,q_H(\tau,h))
	+
	o_p(1),
	\]
	where
	\[
	A_H(\tau,h)
	=
	f(q_H(\tau,h))
	+
	h\,\mathbb P
	\!\left(
	|Y-q_H(\tau,h)|\le k
	\right).
	\]
	Applying the central limit theorem to the i.i.d. summands
	$\Psi_H(Y_i,q_H(\tau,h))$ yields
	\[
	\sqrt n
	\Bigl(
	\hat q_H(\tau,h)-q_H(\tau,h)
	\Bigr)
	\xrightarrow{d}
	\mathcal N
	\!\left(
	0,
	\frac{B_H(\tau,h)}
	{A_H(\tau,h)^2}
	\right),
	\]
	where
	\[
	B_H(\tau,h)
	=
	\mathbb E
	\Bigl[
	\Psi_H(Y,q_H(\tau,h))^2
	\Bigr].
	\]
\end{proof}

\begin{remark}
	When \(h=0\), the Huber interpolation estimator reduces to the classical
	empirical quantile estimator. In this case,
	$
	A_H(\tau,0)=f(q_\tau),
	$
	and
	$
	B_H(\tau,0)=\tau(1-\tau),
	$
	and therefore
	\[
	\sigma_H^2(\tau,0)
	=
	\frac{\tau(1-\tau)}
	{f(q_\tau)^2},
	\]
	which is the well-known asymptotic variance of the empirical quantile.
\end{remark}

\subsection{Probability-space interpretation}

As in the quadratic interpolation case, the Huber estimator may be interpreted as inducing a deformation of the underlying probability level associated with the target quantile.
Let
$
q_H(\tau,h)
$
denote the population Huber interpolation estimator, defined by
\[
F(q_H)
-
h\,\mathbb E[\psi_H(Y-q_H)]
=
\tau.
\]
Introducing
\[
p_H(\tau,h)
=
F(q_H(\tau,h)),
\]
the defining equation may be rewritten as
\[
p_H(\tau,h)
=
\tau
+
h\,\mathbb E
\!\left[
\psi_H\!\bigl(Y-q_H(\tau,h)\bigr)
\right].
\]
Here and throughout this section,
\[
F^{-1}(u)
=
\inf\{x\in\mathbb R:\,F(x)\ge u\}
\]
denotes the generalized inverse of $F$.
Consequently,
\[
q_H(\tau,h)
=
F^{-1}\!\bigl(p_H(\tau,h)\bigr),
\]
showing that the Huber interpolation also acts through a deformation of the probability level associated with the quantile.
Unlike the quadratic interpolation, however, the deformation is implicit because
\(q_H(\tau,h)\) appears inside the expectation.
The probability level therefore satisfies the nonlinear fixed-point relation
\[
p_H(\tau,h)
=
\tau
+
h\,\mathbb E
\!\left[
\psi_H
\!\left(
Y-F^{-1}(p_H(\tau,h))
\right)
\right].
\]
Hence the Huber interpolation does not correspond to a simple affine shift of the quantile level. Instead, the deformation depends on the entire distribution through the bounded score function \(\psi_H\).
For every fixed regularization parameter $h\ge0$,
the boundedness of the Huber score implies that
$
|\psi_H(u)|\le k,
$
and therefore
$
|p_H(\tau,h)-\tau|
\le hk.
$
In contrast with the quadratic interpolation, where the deformation magnitude grows proportionally to the residual size, the Huber deformation remains uniformly bounded. Large observations therefore cannot induce arbitrarily large shifts of the effective probability level.
For small values of \(h\), the deformation admits the first-order approximation
\[
p_H(\tau,h)
=
\tau
+
h\,\mathbb E
\!\left[
\psi_H(Y-q_\tau)
\right]
+
o(h),
\]
where \(q_\tau=F^{-1}(\tau)\).
Thus the Huber interpolation may be viewed as a robust probability-space deformation of the empirical quantile, analogous to the quadratic interpolation but with bounded influence contributions replacing the linear correction term.

\section{Bisquare interpolation}
\label{sec:three}
\subsection{Definition and first-order condition}

We now consider an interpolation between the quantile loss and Tukey's
bisquare loss~\citep{Beaton1974}.
Let
\[
\rho_B(u)
=
\begin{cases}
	\dfrac{k^2}{6}
	\left[
	1-
	\left(
	1-\dfrac{u^2}{k^2}
	\right)^3
	\right],
	&
	|u|\le k,
	\\[2ex]
	\dfrac{k^2}{6},
	&
	|u|>k,
\end{cases}
\]
where \(k>0\) is a tuning parameter.
For a fixed quantile level
$
\tau\in(0,1),
$
and interpolation parameter
$
h\ge0,
$
define the population criterion
\[
M_h(q)
=
\mathbb E[\rho_\tau(Y-q)]
+
h\,\mathbb E[\rho_B(Y-q)].
\]
Any global minimizer of \(M_h\), whenever it exists, satisfies the
first-order optimality condition established below.

\begin{proposition}
	\label{prop:bisquare-population}
	
	Assume that the distribution function \(F\) of \(Y\) is continuous.
	Then every minimizer \(q_{\mathrm{B}}(\tau,h)\) of \(M_h\) satisfies
	\[
	\tau-F(q_{\mathrm{B}}(\tau,h))
	+
	h\,B(q_{\mathrm{B}}(\tau,h))
	=
	0,
	\]
	where
	\[
	B(q)
	=
	\mathbb E
	\!\left[
	(Y-q)
	\left(
	1-\frac{(Y-q)^2}{k^2}
	\right)^2
	\mathbf 1_{\{|Y-q|\le k\}}
	\right].
	\]
\end{proposition}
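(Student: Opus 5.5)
The plan is to show that $M_h$ is differentiable on $\mathbb R$ with derivative
\[
M_h'(q)=F(q)-\tau-h\,B(q),
\]
and then to use that the derivative vanishes at any interior global minimizer. Since every point of $\mathbb R$ is interior, this gives $\tau-F(q_{\mathrm B})+hB(q_{\mathrm B})=0$ directly. The two terms of $M_h$ are treated separately, with one caveat about finiteness addressed first.

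\textbf{Finiteness of the criterion.} $\mathbb E[\rho_\tau(Y-q)]$ is finite only if $\mathbb E|Y|<\infty$. If it is infinite, $M_h\equiv+\infty$; then "every minimizer" is meaningless or vacuous, so I will state that a minimizer is understood with $M_h$ finite, which forces $\mathbb E|Y|<\infty$. Granting this, the quantile part is handled as in the proof of Proposition~\ref{prop:variational-characterization}. For $q_1<q_2$,
\[
\rho_\tau(y-q_2)-\rho_\tau(y-q_1)=\int_{q_1}^{q_2}\bigl(\mathbf 1_{\{y<s\}}-\tau\bigr)\,ds,
\]
which is an exact identity since $\rho_\tau$ is absolutely continuous with a.e.\ derivative $\tau-\mathbf 1_{\{u<0\}}$. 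Fubini, justified because the integrand is bounded on a bounded interval, gives
\[
\mathbb E[\rho_\tau(Y-q_2)]-\mathbb E[\rho_\tau(Y-q_1)]=\int_{q_1}^{q_2}\bigl(F(s^-)-\tau\bigr)\,ds.
\]
Continuity of $F$ makes the integrand continuous in $s$, so the fundamental theorem of calculus yields the derivative $F(q)-\tau$ at every $q$. This is where continuity of $F$ is used; without it one only gets one-sided derivatives and a subgradient inclusion.

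\textbf{Bisquare part.} $\rho_B$ is $C^1$ on $\mathbb R$ (it is bounded and constant outside $[-k,k]$, and matches to first order at $\pm k$), with derivative
\[
\psi_B(u)=u\Bigl(1-\frac{u^2}{k^2}\Bigr)^2\mathbf 1_{\{|u|\le k\}}.
\]
This derivative is bounded by $k$. By the mean value theorem and dominated convergence, with dominating constant $k$, $q\mapsto\mathbb E[\rho_B(Y-q)]$ is differentiable with derivative $-\mathbb E[\psi_B(Y-q)]=-B(q)$. This step requires no moment assumption, since $\rho_B$ is bounded.

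\textbf{Conclusion and main difficulty.} Adding the two parts gives $M_h'(q)=F(q)-\tau-hB(q)$ everywhere. At a global minimizer $q^\ast$, the difference quotients are $\ge0$ from the right and $\le0$ from the left, so $M_h'(q^\ast)=0$, which is the claim. The main obstacle is the quantile term. Differentiating under the expectation is not directly legitimate because $\rho_\tau$ is not differentiable at $0$. The integral representation above resolves this, and it is precisely there that continuity of $F$ converts the one-sided derivatives $F(q^-)-\tau$ and $F(q)-\tau$ into a single value.
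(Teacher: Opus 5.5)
Your proposal is correct and takes the same route as the paper: compute $M_h'(q)=F(q)-\tau-h\,B(q)$ and set it to zero at a minimizer. The paper differentiates under the expectation without justification. Your integral representation of the check-loss increment, the dominated-convergence argument for the bisquare term, and the remark that finiteness of $M_h$ forces $\mathbb E|Y|<\infty$ supply the rigor the paper omits; they do not change the argument.
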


\begin{proof}
	
	The derivative of the quantile loss is
	$
	\psi_\tau(u)
	=
	\tau-\mathbf 1_{\{u<0\}}.
	$
	The derivative of Tukey's bisquare loss is
	\[
	\psi_B(u)
	=
	u
	\left(
	1-\frac{u^2}{k^2}
	\right)^2
	\mathbf 1_{\{|u|\le k\}}.
	\]
	Differentiating \(M_h\) yields
	\[
	M_h'(q)
	=
	-
	\mathbb E[\psi_\tau(Y-q)]
	-
	h\,
	\mathbb E[\psi_B(Y-q)].
	\]
	The first-order optimality condition is therefore
	\[
	\mathbb E[\psi_\tau(Y-q)]
	+
	h
	\mathbb E[\psi_B(Y-q)]
	=
	0.
	\]
	Since
	$
	\mathbb E[\psi_\tau(Y-q)]
	=
	\tau-F(q),
	$
	we obtain
	\[
	\tau-F(q)
	+
	h
	\mathbb E
	\!\left[
	(Y-q)
	\left(
	1-\frac{(Y-q)^2}{k^2}
	\right)^2
	\mathbf 1_{\{|Y-q|\le k\}}
	\right]
	=
	0.
	\]
	The result follows.
\end{proof}

\subsection{Existence and uniqueness}

The following result establishes the existence of a solution to the
estimating equation derived in Proposition~\ref{prop:bisquare-population}.

\begin{proposition}
	\label{prop:bisquare-existence}
	
	Assume that \(F\) is continuous.
	Then the estimating equation
	$
	\tau-F(q)+hB(q)=0
	$
	admits at least one solution.
	If
	$
	G_h(q)
	=
	\tau-F(q)+hB(q)
	$
	is strictly decreasing, then the solution is unique.
\end{proposition}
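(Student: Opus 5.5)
The natural route is the intermediate value theorem applied to $G_h(q)=\tau-F(q)+hB(q)$. The plan has three steps: show $G_h$ is continuous on $\mathbb R$, identify its limits as $q\to\pm\infty$, and conclude. Uniqueness under the additional hypothesis that $G_h$ is strictly decreasing is then immediate, since a strictly monotone function takes each value at most once.

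\emph{Continuity.} $F$ is continuous by assumption. For $B(q)=\mathbb E[\psi_B(Y-q)]$, note that $\psi_B(u)=u(1-u^2/k^2)^2\mathbf 1_{\{|u|\le k\}}$ is continuous on $\mathbb R$: at $|u|=k$ the factor $(1-u^2/k^2)^2$ vanishes, so the truncation creates no jump. Moreover $\psi_B$ is bounded; on $[-k,k]$ one has $|\psi_B(u)|\le |u|\le k$, and $\psi_B=0$ outside. Hence if $q_n\to q$, then $\psi_B(Y-q_n)\to\psi_B(Y-q)$ pointwise and $|\psi_B(Y-q_n)|\le k$. Dominated convergence gives $B(q_n)\to B(q)$, so $G_h$ is continuous. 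No moment assumption on $Y$ is needed, which is consistent with the statement assuming only continuity of $F$.

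\emph{Limits.} As $q\to-\infty$ we have $F(q)\to0$. Also $\psi_B(Y-q)\to0$ pointwise, because $Y-q\to+\infty$ eventually leaves $[-k,k]$. By dominated convergence with the same bound $k$, $B(q)\to0$, and therefore $G_h(q)\to\tau>0$. Symmetrically, as $q\to+\infty$, $F(q)\to1$ and $B(q)\to0$, so $G_h(q)\to\tau-1<0$. Consequently there are finite points $a<b$ with $G_h(a)>0>G_h(b)$, and the intermediate value theorem yields a root in $(a,b)$. This holds for every $h\ge0$ and every $\tau\in(0,1)$.

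The main point requiring care is the vanishing of $B$ at infinity. This is where the redescending nature of the bisquare score matters: the penalty contributes nothing in the tails, so the sign of $G_h$ there is governed by $\tau-F$ alone. By contrast, in the Huber case (Proposition~\ref{prop:huber-existence}) the limits were shifted by $\pm hk$. Finally, the theorem does not claim uniqueness in general. The solution set may contain several points, since $G_h$ can be non-monotone when $h$ is large and $Y$ has mass near $q$. I would therefore state uniqueness only under the strict decrease hypothesis, exactly as in the proposition.
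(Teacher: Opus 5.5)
Your proof is correct and follows the paper's route: $G_h(q)\to\tau>0$ as $q\to-\infty$ and $G_h(q)\to\tau-1<0$ as $q\to+\infty$ because $B(q)\to0$, then the intermediate value theorem gives a root, and strict monotonicity gives uniqueness. You also justify the continuity of $B$ and its vanishing at infinity, via the boundedness and continuity of $\psi_B$ and dominated convergence, which the paper leaves implicit. One small quibble concerns your closing aside. Non-monotonicity of $G_h$ comes from mass of $Y$ at distance between $k/\sqrt5$ and $k$ from $q$, where $\psi_B'<0$, rather than from mass near $q$ in general.
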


\begin{proof}
	
	Since
	$
	\lim_{q\to-\infty}F(q)=0,
	$
	$
	\lim_{q\to+\infty}F(q)=1,
	$
	and
	$
	\lim_{|q|\to\infty}B(q)=0,
	$
	it follows that
	$
	\lim_{q\to-\infty}G_h(q)
	=
	\tau
	>
	0,
	$
	and
	$
	\lim_{q\to+\infty}G_h(q)
	=
	\tau-1
	<
	0.
	$
	The existence of a solution follows from the intermediate value theorem.
	If \(G_h\) is strictly decreasing, the solution is necessarily unique.
\end{proof}

\begin{remark}
	\label{rem:bisquare-selection}
	
	Throughout the remainder of this section, we let
	\(q_{\mathrm B}(\tau,h)\)
	denote the selected regular solution of the estimating equation
	$
	\tau-F(q)+hB(q)=0,
	$
	around which the asymptotic analysis is carried out.
	Unlike the quadratic and Huber losses, Tukey's bisquare loss is
	non-convex. Consequently, a solution of the estimating equation
	$
	\mathbb E[\Psi_h(Y-q)]=0
	$
	need not be a global minimizer of the population criterion \(M_h\).
	We therefore assume that the estimating equation admits a unique regular
	solution \(q_{\mathrm B}(\tau,h)\), locally identified by
	$
	A_B(\tau,h)\neq0,
	$
	and that the empirical estimator denotes the corresponding solution of
	the empirical estimating equation converging in probability to this
	population solution. Under additional conditions ensuring that this
	solution is also the unique global minimizer of \(M_h\), the two notions
	coincide.
\end{remark}

\subsection{Asymptotic normality}

\noindent
Let
$
Y_1,\ldots,Y_n
$
be an i.i.d. sample.
The empirical interpolated bisquare criterion is
\[
M_{n,h}(q)
=
\sum_{i=1}^{n}
\rho_\tau(Y_i-q)
+
h
\sum_{i=1}^{n}
\rho_B(Y_i-q).
\]
An empirical interpolated bisquare estimator may be defined as any global minimizer of \(M_{n,h}\). Throughout the remainder of this section, however, we denote by \(\hat q_{\mathrm B}(\tau,h)\) the selected empirical solution of the estimating equation converging in probability to the selected population solution \(q_{\mathrm B}(\tau,h)\).
Define
\[
\Psi_h(u)
=
\psi_\tau(u)
+
h\psi_B(u).
\]
Then
\[
\mathbb E[\Psi_h(Y-q_{\mathrm B}(\tau,h))]
=
0,
\]
and
\(\hat q_{\mathrm B}(\tau,h)\)
satisfies the corresponding empirical estimating equation.

\begin{theorem}
	\label{thm:bisquare-clt}
	
	Assume that the regularization parameter \(h\) and the bisquare tuning
	parameter \(k\) are fixed.
	Let \(q_{\mathrm B}(\tau,h)\) denote the selected solution of the
	estimating equation
	$
	\mathbb E[\Psi_h(Y-q)]=0,
	$
	and suppose that this solution is locally unique, that
	$
	\mathbb E[\Psi_h(Y-q_{\mathrm B}(\tau,h))^2]<\infty,
	$
	and that
	\[
	A_B(\tau,h)
	=
	-
	\mathbb E[\Psi_h'(Y-q_{\mathrm B}(\tau,h))]
	\neq0.
	\]
	Let
	$\hat q_{\mathrm B}(\tau,h)$
	denote the selected empirical solution introduced above.
	Then
	\[
	\sqrt n
	(\hat q_{\mathrm{B}}(\tau,h)-q_{\mathrm{B}}(\tau,h))
	\overset d\longrightarrow
	\mathcal N(0,\sigma_{\mathrm{B}}^2(\tau,h)),
	\]
	where
	\[
	\sigma_{\mathrm{B}}^2(\tau,h)
	=
	\frac{B_B(\tau,h)}{A_B^2(\tau,h)},
	\]
	with
	\[
	B_B(\tau,h)
	=
	\mathbb E
	\left[
	\Psi_h(Y-q_{\mathrm{B}}(\tau,h))^2
	\right].
	\]
	Equivalently,
	\[
	\sigma_{\mathrm{B}}^2(\tau,h)
	=
	\frac{
		\mathbb E
		\left[
		\bigl(
		\psi_\tau(Y-q_{\mathrm{B}}(\tau,h))
		+
		h\psi_B(Y-q_{\mathrm{B}}(\tau,h))
		\bigr)^2
		\right]
	}
	{
		\left(
		f(q_{\mathrm{B}}(\tau,h))
		+
		h\,\mathbb E[\psi_B'(Y-q_{\mathrm{B}}(\tau,h))]
		\right)^2
	}.
	\]
\end{theorem}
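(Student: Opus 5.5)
The plan is to follow the same route as for Theorems~\ref{thm:quadratic-clt} and~\ref{thm:huber-clt}: treat $\hat q_{\mathrm B}(\tau,h)$ as a Z-estimator with a non-smooth score and prove an asymptotic linear representation
\[
\sqrt n\bigl(\hat q_{\mathrm B}-q_{\mathrm B}\bigr)=\frac{1}{A_B(\tau,h)}\frac{1}{\sqrt n}\sum_{i=1}^n\Psi_h\bigl(Y_i-q_{\mathrm B}\bigr)+o_p(1),
\]
after which the Lindeberg--L\'evy CLT together with $\mathbb E[\Psi_h(Y-q_{\mathrm B})]=0$ and the finite second moment gives the stated limit. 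Write $\Phi(q)=\mathbb E[\Psi_h(Y-q)]=\tau-F(q)+hB(q)$ and $\Phi_n(q)=n^{-1}\sum_i\Psi_h(Y_i-q)$. By Remark~\ref{rem:bisquare-selection}, $\hat q_{\mathrm B}\to q_{\mathrm B}$ in probability and $\Phi_n(\hat q_{\mathrm B})=o_p(n^{-1/2})$. The latter is needed because the indicator makes the empirical equation solvable only up to a jump of size $1/n$.

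The first step is differentiability of $\Phi$ at $q_{\mathrm B}$. Because $\psi_B$ is $C^1$ on $\mathbb R$ with $\psi_B(\pm k)=0$, bounded, and has bounded derivative $\psi_B'(u)=(1-u^2/k^2)(1-5u^2/k^2)\mathbf 1_{\{|u|\le k\}}$, dominated convergence gives $B'(q)=-\mathbb E[\psi_B'(Y-q)]$. Using the continuous density $f$, we get $\Phi'(q)=-f(q)-h\mathbb E[\psi_B'(Y-q)]=-A_B(\tau,h)$. This identifies the equivalent form of the denominator in the statement.

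The second step is stochastic equicontinuity:
\[
\sqrt n\bigl[(\Phi_n-\Phi)(\hat q_{\mathrm B})-(\Phi_n-\Phi)(q_{\mathrm B})\bigr]=o_p(1).
\]
The class $\{y\mapsto\mathbf 1_{\{y<q\}}\}$ is VC. The class $\{y\mapsto\psi_B(y-q)\}$ is Lipschitz in $q$ with a bounded envelope, so the union over $q$ in a neighbourhood of $q_{\mathrm B}$ is Donsker. The $L^2$ continuity $\mathbb E[(\Psi_h(Y-q)-\Psi_h(Y-q_{\mathrm B}))^2]\le C\bigl(|F(q)-F(q_{\mathrm B})|+|q-q_{\mathrm B}|^2\bigr)\to0$, combined with consistency, then gives the claim through the standard asymptotic equicontinuity lemma (van der Vaart, Lemma 19.24 / Theorem 5.21-type argument).

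The third step combines these. From
\[
0=\Phi_n(\hat q_{\mathrm B})+o_p(n^{-1/2})=\Phi_n(q_{\mathrm B})+\Phi(\hat q_{\mathrm B})-\Phi(q_{\mathrm B})+o_p(n^{-1/2})
\]
and the expansion $\Phi(\hat q_{\mathrm B})=-A_B(\hat q_{\mathrm B}-q_{\mathrm B})+o(|\hat q_{\mathrm B}-q_{\mathrm B}|)$, with $A_B\neq0$, we first deduce $\sqrt n$-consistency and then the linear representation.

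The main obstacle is the near-zero property $\Phi_n(\hat q_{\mathrm B})=o_p(n^{-1/2})$ for a non-convex criterion. A selected root may not exist exactly, since $\Phi_n$ jumps by $1/n$ at each data point. I would handle this by defining $\hat q_{\mathrm B}$ as a point where $\Phi_n$ changes sign within a shrinking neighbourhood of $q_{\mathrm B}$. Such a point exists with probability tending to one, because $\Phi$ crosses zero transversally there ($A_B\neq0$) and $\sup|\Phi_n-\Phi|=O_p(n^{-1/2})$ on that neighbourhood. At such a point $|\Phi_n(\hat q_{\mathrm B})|\le 1/n$, since $\Phi_n$ is continuous apart from jumps of size $1/n$.
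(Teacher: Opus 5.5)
Your proposal is correct and follows essentially the same route as the paper's appendix proof. Both arguments establish differentiability of the population map $q\mapsto \mathbb E[\Psi_h(Y-q)]$ with derivative $-A_B(\tau,h)$, asymptotic equicontinuity over the indicator and bisquare-score classes, $\sqrt n$-consistency from $A_B(\tau,h)\neq 0$, and then conclude with the CLT and Slutsky's theorem. Your one departure is to work with $\Phi_n(\hat q_{\mathrm B})=o_p(n^{-1/2})$ and a sign-change construction instead of an exact root, and this improves the rigor. The paper simply asserts $M_n(\hat q_{\mathrm B})=0$, but the $1/n$ jumps of the indicator term generally rule out an exact zero, already for $h=0$ unless $n\tau$ is an integer.
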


\begin{proof}
	By construction,
	\(\hat q_{\mathrm B}(\tau,h)\)
	denotes the selected empirical solution of the estimating equation
	converging in probability to
	\(q_{\mathrm B}(\tau,h)\).
	Therefore,
	\[
	\frac1n
	\sum_{i=1}^{n}
	\left(
	\psi_\tau(Y_i-\hat q_{\mathrm B}(\tau,h))
	+
	h\,\psi_B(Y_i-\hat q_{\mathrm B}(\tau,h))
	\right)
	=
	0.
	\]
	Since the score contains the discontinuous quantile component
	$\psi_\tau$, the standard smooth M-estimation theory does not
	apply directly.
	A rigorous derivation based on empirical-process methods is given in
	Appendix~\ref{app:proof-bisquare-clt}.
	The appendix establishes the asymptotic linear representation
	\[
	\sqrt n
	\Bigl(
	\hat q_{\mathrm B}(\tau,h)
	-
	q_{\mathrm B}(\tau,h)
	\Bigr)
	=
	\frac1{A_B(\tau,h)}
	\frac1{\sqrt n}
	\sum_{i=1}^{n}
	\Psi_h(Y_i-q_{\mathrm B}(\tau,h))
	+
	o_p(1),
	\]
	where
	\[
	A_B(\tau,h)
	=
	f(q_{\mathrm B}(\tau,h))
	+
	h\,
	\mathbb E
	\!\left[
	\psi_B'(Y-q_{\mathrm B}(\tau,h))
	\right].
	\]
	Applying the central limit theorem to the i.i.d. summands
	$\Psi_h(Y_i-q_{\mathrm B}(\tau,h))$
	yields
	\[
	\sqrt n
	\Bigl(
	\hat q_{\mathrm B}(\tau,h)
	-
	q_{\mathrm B}(\tau,h)
	\Bigr)
	\xrightarrow d
	\mathcal N
	\!\left(
	0,
	\frac{B_B(\tau,h)}
	{A_B(\tau,h)^2}
	\right),
	\]
	where
	\[
	B_B(\tau,h)
	=
	\mathbb E
	\!\left[
	\Psi_h(Y-q_{\mathrm B}(\tau,h))^2
	\right].
	\]
\end{proof}

\subsection{Probability-space interpretation}

\noindent
Let
$
p_B(h)=F(q_{\mathrm{B}}(\tau,h)),
$
where \(q_{\mathrm{B}}(\tau,h)\) denotes the selected solution of the estimating equation
\[
\tau-F(q_{\mathrm{B}}(\tau,h))
+h\,\mathbb E[\psi_B(Y-q_{\mathrm{B}}(\tau,h))]
=0.
\]
Then
\[
p_B(h)
=
\tau
+
h\,\mathbb E[\psi_B(Y-q_{\mathrm{B}}(\tau,h))].
\]
Provided that the selected solution
\(q_{\mathrm B}(\tau,h)\)
depends continuously on the interpolation parameter \(h\), the latter
induces a continuous deformation of the target quantile level.
Unlike the quadratic interpolation, however, no explicit closed-form
relation between \(p_B(h)\) and \(h\) is available in general because
\(q_{\mathrm B}(\tau,h)\) appears inside the expectation and the
non-convexity of the bisquare criterion may lead to multiple solution
branches.
For a prescribed target probability level
\(
\tau'\in(0,1),
\)
one may search numerically for interpolation parameters \(h\) satisfying
$
F(q_{\mathrm B}(\tau,h))
=
\tau'.
$
Unlike the quadratic interpolation, the existence and uniqueness of such
an interpolation parameter are not guaranteed in general, since the map
\(h\mapsto q_{\mathrm B}(\tau,h)\) need not be monotone or globally
continuous.
In practice, one may estimate
$
p_B(h)
=
F(q_{\mathrm B}(\tau,h))
$
empirically through
$
\hat p_B(h)
=
F_n(\hat q_{\mathrm B}(\tau,h)),
$
and search numerically for values of \(h\) satisfying
\(
\hat p_B(h)\approx\tau',
\)
whenever such values exist.

\section{Regression extensions}
\label{sec:reg}
\subsection{From location to regression}

The location estimators studied in the preceding sections illustrate the
main ideas underlying the proposed interpolation framework. The
regularization parameter generates a continuous path between the
empirical quantile and a central location estimator, while the
asymptotic analysis identifies conditions under which interpolation can
improve estimation efficiency. These structural properties are not
specific to the scalar setting. They arise directly from the penalized
objective function and extend naturally to the linear regression model
\[
Y_i = X_i^\top \beta + \varepsilon_i,
\qquad i = 1,\dots,n,
\]
where the scalar parameter $q$ is replaced by the linear predictor
$X_i^\top\beta$.
In the regression setting, the interpolation path operates in the
\(p\)-dimensional parameter space, and the efficiency question takes on
a richer form. Rather than comparing scalar asymptotic variances, one
must compare asymptotic covariance matrices, with the natural benchmark
being the Cramér--Rao lower bound under the asymmetric Laplace model,
which classical quantile regression is known to attain asymptotically.
The central question is therefore whether the interpolated regression
estimators can improve the finite-sample performance of the classical
quantile regression estimator, and whether the structural advantage of
the Huber interpolation over the quadratic interpolation, already
observed in the location setting, persists in the regression context.

The remainder of the paper develops this program.
Sections~\ref{sec:model}--\ref{sec:quantreg} recall the likelihood
structure, score function, and Cramér--Rao lower bound under the
asymmetric Laplace model.
Section~\ref{sec:interpolated-regression} introduces the interpolated
regression estimators and establishes their asymptotic distribution.
The finite-sample performance of the proposed estimators is then
investigated through a Monte Carlo study, followed by a discussion of
the numerical results.

\subsection{Model and likelihood representation}
\label{sec:model}

We consider the linear regression model
\[
Y_i = X_i^\top \beta + \varepsilon_i, \qquad i=1,\dots,n,
\]
where $X_i \in \mathbb{R}^p$ are i.i.d. covariates with finite second moments and non-singular covariance matrix $\mathbb{E}[X X^\top]$, and $\varepsilon_i$ are i.i.d. error terms independent of $X_i$.
We assume that the error distribution admits a density $f_\varepsilon$ which is continuous and strictly positive in a neighborhood of zero. In the correctly specified case, we further assume that $\varepsilon_i$ follows an asymmetric Laplace distribution (ALD) with quantile level $\tau \in (0,1)$ and scale parameter $\sigma>0$. Its density can be written as
\[
f_\varepsilon(u)
=
\frac{\tau(1-\tau)}{\sigma}
\exp\left(-\frac{\rho_\tau(u)}{\sigma}\right),
\]
where $\rho_\tau(u) = u(\tau - \mathbf{1}_{\{u<0\}})$ denotes the check loss function.
Under this parametrization, the log-likelihood for $\beta$, up to an additive constant, is
\[
\ell_n(\beta)
=
-\frac{1}{\sigma} \sum_{i=1}^n \rho_\tau\big(Y_i - X_i^\top \beta\big),
\]
so that maximizing the likelihood is equivalent to solving the quantile regression problem.

\subsection{Score function and Fisher information}
\label{sec:score}

Let $u_i(\beta) = Y_i - X_i^\top \beta$. The score function is given by
\[
\nabla_\beta \ell_n(\beta)
=
\frac{1}{\sigma}
\sum_{i=1}^n
\psi_\tau\big(u_i(\beta)\big)\, X_i,
\]
where $\psi_\tau(u) = \tau - \mathbf{1}_{\{u<0\}}$ is the subgradient of $\rho_\tau$.
Under standard regularity conditions, the Fisher information matrix is defined as
\[
I(\beta)
=
\mathbb{E}\left[
-\nabla^2_\beta \ell_n(\beta)
\right]
=
n \cdot \frac{f_\varepsilon(0)}{\sigma^2}
\, \mathbb{E}[X X^\top].
\]
For the asymmetric Laplace distribution, we have
\[
f_\varepsilon(0) = \frac{\tau(1-\tau)}{\sigma},
\]
which will be used below to derive the corresponding Cramér--Rao lower bound.

\subsection{Cramér--Rao lower bound}
\label{sec:crlb}

The Cramér--Rao lower bound (CRLB) states that for any unbiased estimator $\hat{\beta}$,
\[
\mathrm{Var}(\hat{\beta})
\succeq
I(\beta)^{-1}.
\]
	Under the asymmetric Laplace model, this yields
	\[
	\operatorname{Var}(\hat\beta)
	\succeq
	\frac{\sigma^2}
	{n\,\tau(1-\tau)}
	\bigl(\mathbb E[XX^\top]\bigr)^{-1}.
	\]
	Taking traces gives the scalar bound
	\[
	\operatorname{tr}\operatorname{Var}(\hat\beta)
	\ge
	\frac{\sigma^2}
	{n\,\tau(1-\tau)}
	\operatorname{tr}
	\!\left(
	(\mathbb E[XX^\top])^{-1}
	\right).
	\]
	This provides the theoretical efficiency benchmark used in our numerical experiments.

\subsection{Quantile regression estimator}
\label{sec:quantreg}

The quantile regression estimator is defined as
\[
\hat{\beta}_{\tau}
=
\arg\min_{\beta}
\sum_{i=1}^n
\rho_\tau\big(Y_i - X_i^\top \beta\big).
\]
Under the ALD model, this estimator coincides with the maximum likelihood estimator. Under standard assumptions (see, e.g., asymptotic theory for M-estimators), it satisfies
\[
\sqrt{n}(\hat{\beta}_\tau - \beta)
\xrightarrow{d}
\mathcal{N}\left(
0,\,
\frac{\tau(1-\tau)}{f_\varepsilon(0)^2}
\big(\mathbb{E}[X X^\top]\big)^{-1}
\right).
\]
Since $f_\varepsilon(0)=\tau(1-\tau)/\sigma,$
the asymptotic covariance simplifies to
\[
\operatorname{Var}(\hat\beta_\tau)
=
\frac{\sigma^2}
{n\,\tau(1-\tau)}
\bigl(\mathbb E[XX^\top]\bigr)^{-1},
\]
showing that the quantile regression estimator is
\emph{asymptotically efficient and attains the Cramér--Rao lower bound}.

\subsection{Quadratic and Huber interpolated estimators}
\label{sec:interpolated-regression}

We consider two families of interpolated estimators obtained by combining the quantile loss with either a quadratic or a Huber regularization.
For a fixed quantile level \(\tau\), let
$
\rho_\tau(u)
=
u\bigl(\tau-\mathbf 1_{\{u<0\}}\bigr)
$
denote the quantile check loss.
The quadratic interpolation is defined by
\[
\hat\beta_\alpha^{(Q)}
=
\arg\min_\beta
\sum_{i=1}^n
\left[
(1-\alpha)\rho_\tau(u_i(\beta))
+
\alpha\,u_i(\beta)^2
\right],
\]
while the Huber interpolation is defined by
\[
\hat\beta_\alpha^{(H)}
=
\arg\min_\beta
\sum_{i=1}^n
\left[
(1-\alpha)\rho_\tau(u_i(\beta))
+
\alpha\,\rho_H(u_i(\beta))
\right],
\]
where \(\rho_H\) denotes the Huber loss and \(\alpha\in[0,1]\) is an interpolation parameter.
For \(\alpha<1\), multiplying the objective by the positive constant \(1/(1-\alpha)\) does not change the minimizer. Consequently, the quadratic interpolation may equivalently be written in the additive form
\[
\rho_\tau(u)+h\,u^2,
\qquad
h=\frac{\alpha}{1-\alpha},
\]
and the Huber interpolation may equivalently be written as
\[
\rho_\tau(u)+h\,\rho_H(u),
\qquad
h=\frac{\alpha}{1-\alpha}.
\]
The convex-combination parametrization is retained throughout the paper because it provides a direct interpretation of the interpolation path on the unit interval:
$
\alpha=0
\Longrightarrow\quad
$
pure quantile regression,
whereas large values of \(\alpha\) correspond to increasingly regularized estimators.
Throughout this asymptotic analysis, the interpolation parameter \(\alpha\) 
is assumed to remain fixed as \(n\to\infty\).
Both families belong to the general class of M-estimators. Their asymptotic distribution is given by
\[
\sqrt{n}(\hat{\beta}_\alpha - \beta)
\xrightarrow{d}
\mathcal{N}(0,\,
A_\alpha^{-1} B_\alpha A_\alpha^{-1}),
\]
where
\[
A_\alpha = \mathbb{E}\big[\psi_\alpha'(u) X X^\top\big],
\qquad
B_\alpha = \mathbb{E}\big[\psi_\alpha(u)^2 X X^\top\big],
\]
where \(\psi_\alpha\) denotes the derivative, or subgradient, of the corresponding interpolated loss function.
Since the check loss is not differentiable at the origin, \(\psi_\alpha'\) is understood in the usual M-estimation sense as the almost-everywhere derivative, rather than as a pointwise classical derivative.
Since the interpolated objective does not correspond to a likelihood function, we generally have
\[
A_\alpha^{-1} B_\alpha A_\alpha^{-1}
\succ
I(\beta)^{-1},
\]
so that these estimators \emph{do not achieve the Cramér--Rao lower bound in general}.

\subsection{Interpretation of the interpolation}

The Huber estimator occupies a central position in robust statistics because it provides a compromise between the efficiency of least squares estimation and robustness to outliers. Its loss function is quadratic near the origin and linear in the tails, allowing small residuals to be treated in a Gaussian-like manner while limiting the influence of large deviations \citep{Huber1964,Huber1981}.  
The Huber estimator can be seen as a continuous bridge between the sample mean, obtained when the tuning constant \(k \to \infty\), and the sample median, obtained when \(k \to 0\). For finite \(k\), it down‑weights extreme observations, providing robustness while maintaining high efficiency under normality.
In contrast, quantile regression is based on the asymmetric absolute loss associated with the asymmetric Laplace distribution. This estimator is naturally adapted to asymmetric and heavy-tailed models, but its non-differentiability may induce additional finite-sample variability.

The interpolated estimators introduced in this paper can therefore be interpreted as continuous transitions between quantile regression and smoother regularized procedures.
For both the quadratic and Huber interpolations, the parameter \(\alpha\) controls the amount of smoothing introduced into the quantile loss. When \(\alpha=0\), both constructions reduce to the classical quantile estimator. As \(\alpha\) increases, the estimator becomes progressively more regularized.
In the quadratic interpolation, the estimator moves toward a least-squares-type behavior associated with quadratic penalization. In the Huber interpolation, the estimator moves toward the Huber M-estimator, which preserves linear tail behavior and therefore remains compatible with robust Laplace-type geometry.

For small values of \(\alpha\), the estimator remains close to quantile regression and preserves its robustness and asymmetric structure. 
For larger values of \(\alpha\), the estimator behaves increasingly like the smooth, robust Huber estimator. 
This allows the practitioner to fine‑tune the estimator’s balance between robustness against outliers (favoring larger \(\alpha\)) and adherence to a specific quantile structure (favoring smaller \(\alpha\)).

The numerical experiments below illustrate how these two types of interpolation affect finite-sample efficiency. In particular, they show that Huber-type smoothing may improve finite-sample performance while remaining compatible with the geometry of the asymmetric Laplace model, whereas purely quadratic regularization may introduce excessive smoothing in this setting.

\subsection{Finite-sample study across quantile levels}

To complement the asymptotic theory developed in the previous sections, we conducted a comprehensive Monte Carlo study to investigate the finite-sample behavior of the proposed interpolated estimator over the entire range of quantile levels. Rather than focusing on a single quantile, the experiment evaluates the performance of the estimator jointly as a function of both the target quantile level and the interpolation parameter.
We consider the linear regression model
\[
Y_i = 2 + 3X_i + \varepsilon_i,
\]
where
$
X_i \sim \mathcal N(0,1),
$
and where the error term follows an asymmetric Laplace distribution with quantile parameter $\tau$ and unit scale parameter,
$
\varepsilon_i \sim ALD(\tau,1).
$
The true regression parameter is therefore
$
\beta=(2,3)^\top.
$
For each quantile level
$
\tau\in
\{0.05,0.10,\ldots,\\0.95\},
$
we generated $5000$ independent Monte Carlo samples of size
$
n=1000.
$
For every simulated sample, we first computed the standard quantile regression estimator, corresponding to $\alpha=0$, and subsequently used this estimator as the initial value for the numerical optimization of the interpolated estimator.

The interpolation parameter was varied over the grid
$
\alpha\in
\{0,0.05,\ldots,1\},
$
using the Huber interpolation
$
(1-\alpha)\rho_\tau(u)+
\alpha\rho_H(u),
$
where $\rho_H$ denotes the Huber loss with tuning constant
$
k=1.345.
$
This is the classical Huber tuning constant, widely used in robust regression because it provides approximately $95\%$ efficiency under Gaussian errors.
For every pair $(\tau,\alpha)$ we computed the empirical covariance matrix
$
\widehat\Sigma_\alpha(\tau),
$
together with the empirical deviation from the classical quantile regression parameter
$
\widehat b_\alpha(\tau)
=
\overline\beta_\alpha-\beta,
$
where $\overline\beta_\alpha$ denotes the Monte Carlo average of the estimated regression parameter and \(\beta=(2,3)^\top\) is the regression parameter associated with the classical quantile regression model. Consequently, \(\widehat b_\alpha(\tau)\) measures the empirical deviation from the original quantile regression parameter. For \(\alpha>0\), this deviation may reflect not only finite-sample estimation error but also a possible difference between the population minimizer of the interpolated objective and the population parameter targeted by classical quantile regression.
From these quantities we evaluated both the total variance
$
V_\alpha(\tau)
=
\operatorname{tr}
\!\left(
\widehat\Sigma_\alpha(\tau)
\right),
$
and the total mean squared error
$
M_\alpha(\tau)
=
V_\alpha(\tau)
+
\|\widehat b_\alpha(\tau)\|^2.
$
To facilitate comparison with ordinary quantile regression, we report the normalized variance ratio
\[
R_{\mathrm{var}}(\tau,\alpha)
=
\frac{V_\alpha(\tau)}
{V_0(\tau)},
\]
and the normalized mean squared error ratio
\[
R_{\mathrm{MSE}}(\tau,\alpha)
=
\frac{M_\alpha(\tau)}
{M_0(\tau)},
\]
where the denominator always corresponds to the classical quantile regression estimator ($\alpha=0$). Consequently, values below one indicate an improvement over the standard quantile estimator, whereas values above one indicate a deterioration.
Finally, for each quantile level we determined two simulation-based optimal interpolation parameters. Under the variance criterion we define
$
\alpha_{\mathrm{var}}^{*}(\tau)
=
\arg\min_{\alpha}
R_{\mathrm{var}}(\tau,\alpha),
$
whereas under the mean squared error criterion we define
$
\alpha_{\mathrm{MSE}}^{*}(\tau)
=
\arg\min_{\alpha}
R_{\mathrm{MSE}}(\tau,\alpha).
$
These optimal values are obtained retrospectively over the simulation grid and are intended only to summarize the finite-sample behavior of the interpolated estimator. They do not constitute a data-driven rule for selecting the interpolation parameter from a single observed dataset.

\begin{figure}[!htbp]
	\centering
	
	\begin{minipage}{0.49\textwidth}
		\centering
		\includegraphics[width=\linewidth]{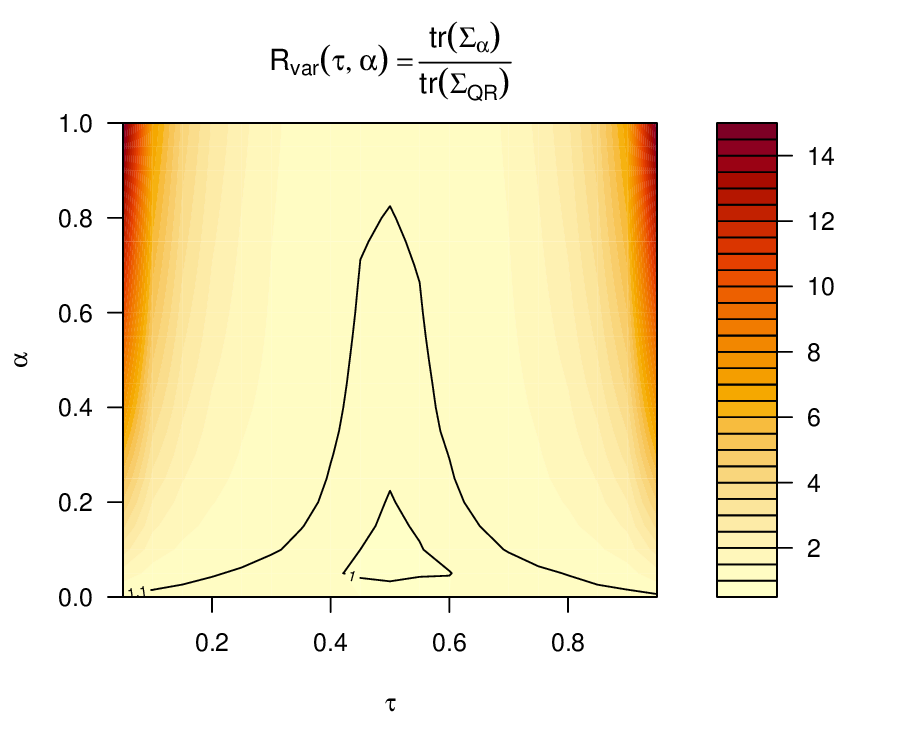}
	\end{minipage}
	\hfill
	\begin{minipage}{0.49\textwidth}
		\centering
		\includegraphics[width=\linewidth]{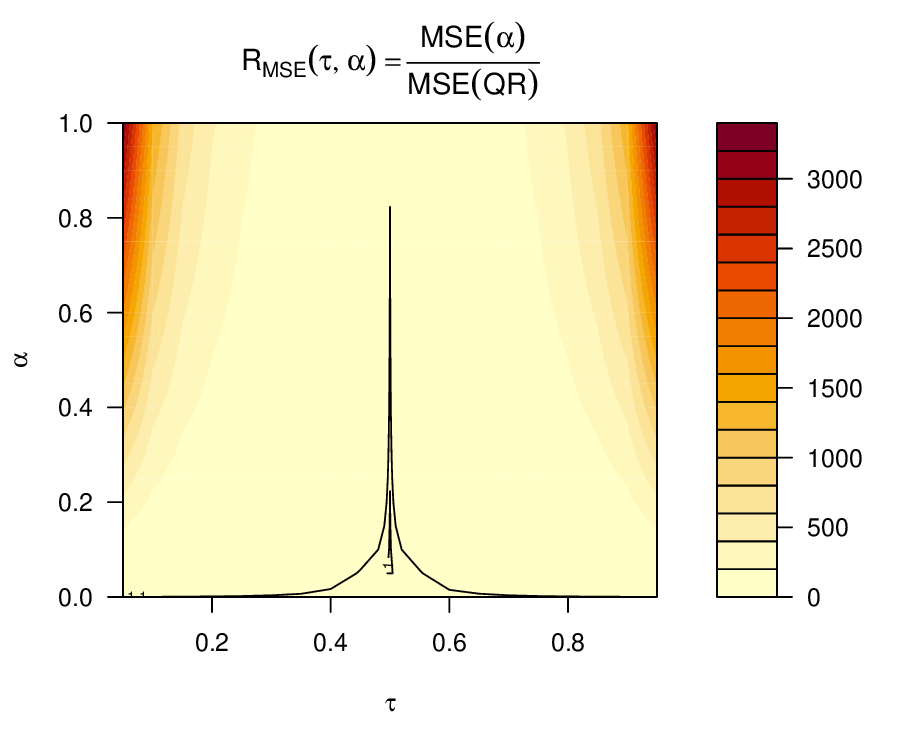}
	\end{minipage}
	
	\caption{
		Contour plots of the normalized variance ratio
		$R_{\mathrm{var}}(\tau,\alpha)$
		(left)
		and normalized mean squared error ratio
		$R_{\mathrm{MSE}}(\tau,\alpha)$
		(right)
		over the $(\tau,\alpha)$ parameter space.
		Values below one correspond to an improvement over the standard quantile regression estimator.
	}
	
	\label{fig:contours}
\end{figure}

\begin{figure}[!htbp]
	\centering
	
	\includegraphics[width=.95\textwidth]{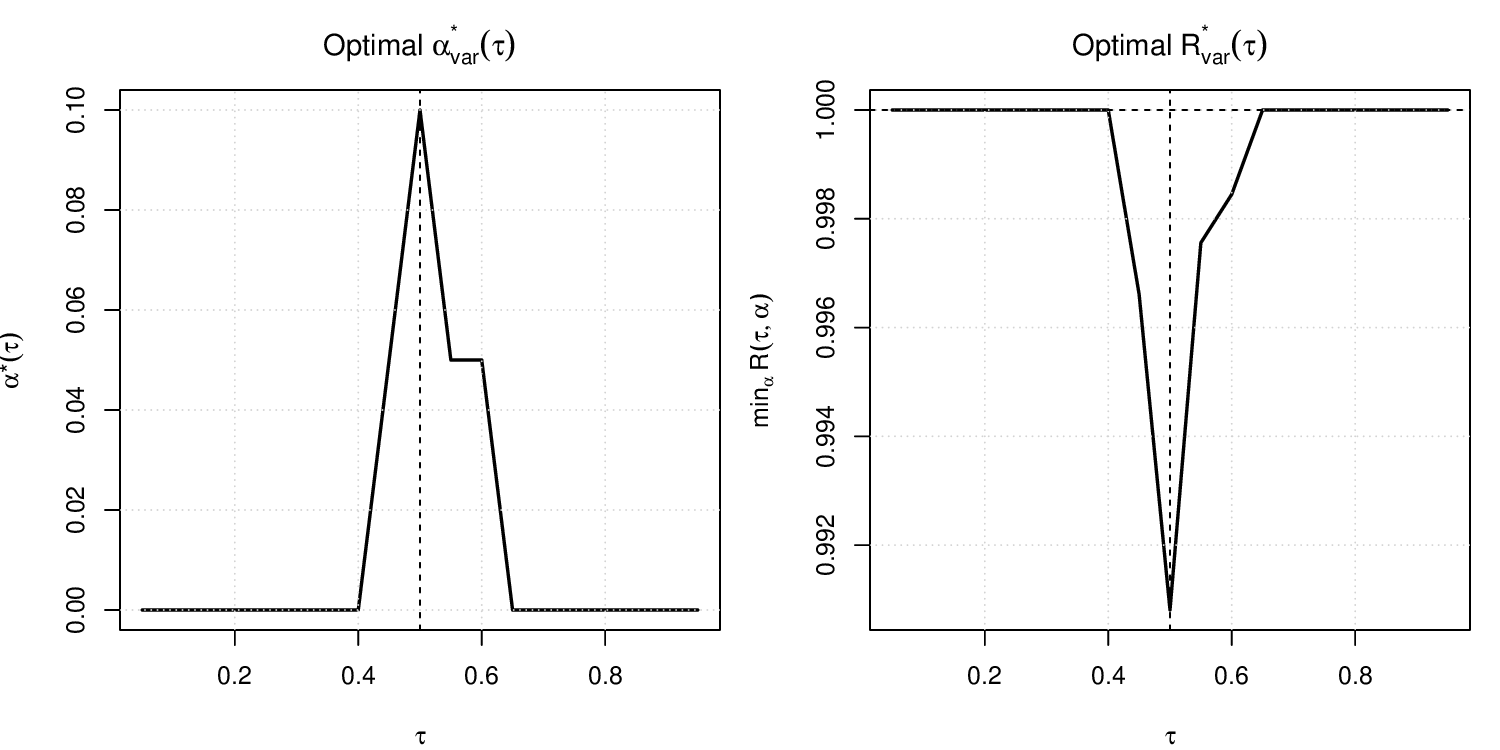}
	
	\vspace{3mm}
	
	\includegraphics[width=.95\textwidth]{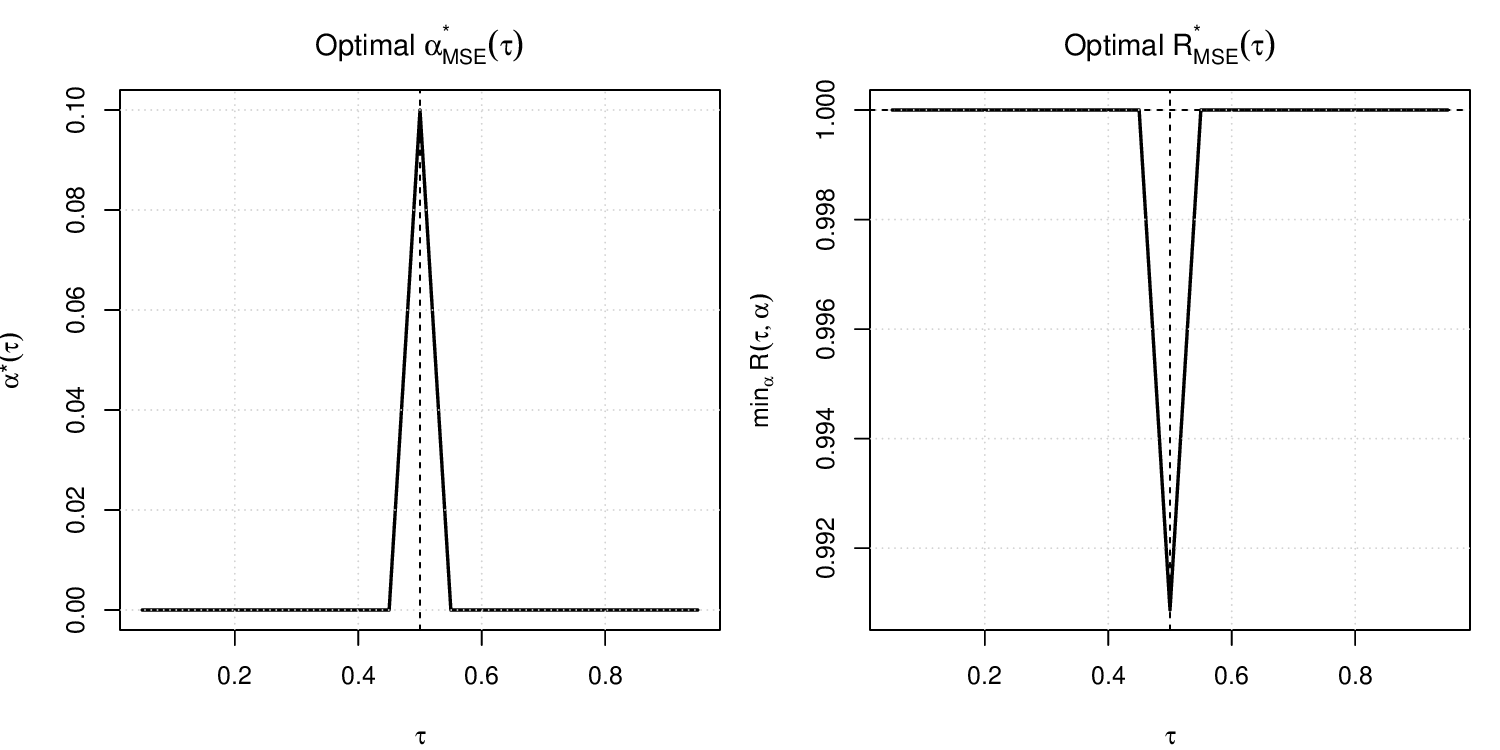}
	
	\caption{
		Simulation-based optimal interpolation parameters \(\alpha_{\mathrm{var}}^{*}(\tau)\) (top) and \(\alpha_{\mathrm{MSE}}^{*}(\tau)\) (bottom), together with the corresponding minimum values of the normalized variance ratio and normalized mean squared error ratio.
	}
	
	\label{fig:optimal}
\end{figure}

Figures~\ref{fig:contours} and~\ref{fig:optimal} summarize the finite-sample behavior of the interpolated estimator throughout the $(\tau,\alpha)$ parameter space.
The contour plot of the variance ratio shows that
$
R_{\mathrm{var}}(\tau,\alpha)
\ge 1
$
over almost the entire parameter space. A small region of improvement appears only in a narrow neighborhood of the median quantile, where moderate interpolation produces a slight reduction of the empirical variance relative to ordinary quantile regression. The minimum variance ratio is attained near
$
\tau=0.5,
$
and
$
\alpha\approx0.10,
$
where the empirical variance is reduced by approximately one percent. Outside this neighborhood, increasing the interpolation parameter leads to a monotone increase of the variance.
The behavior becomes even more pronounced when the mean squared error is considered. The contour plot of
$
R_{\mathrm{MSE}}(\tau,\alpha)
$
shows that values below one occur only in an extremely small neighborhood centered around
$
(\tau,\alpha)\approx(0.5,0.10).
$
Everywhere else, the normalized mean squared error exceeds that of the standard quantile regression estimator. Thus, once the empirical deviation from the classical quantile regression parameter is incorporated into the performance criterion, the finite-sample benefit of interpolation becomes even more localized.

The deterioration becomes increasingly severe as the target quantile moves away from the median. For extreme quantiles such as
$
\tau=0.05
$
or
$
\tau=0.95,
$
large interpolation parameters generate substantial variance inflation, while the corresponding mean squared error increases dramatically owing to the accumulation of both variance and
 deviation from the classical quantile regression parameter. In particular, for $\alpha$ close to one, the normalized variance ratio exceeds an order of magnitude, whereas the normalized mean squared error may become several thousand times larger than that of ordinary quantile regression.

Figure~\ref{fig:optimal} further illustrates this phenomenon through the simulation-based optimal interpolation parameters \(\alpha_{\mathrm{var}}^{*}(\tau)\) and \(\alpha_{\mathrm{MSE}}^{*}(\tau)\).
Under the variance criterion, the simulation-based optimal interpolation parameter \(\alpha_{\mathrm{var}}^{*}(\tau)\) equals zero for almost every quantile level. Only a narrow interval around the median selects a positive interpolation parameter, and even there the optimal value never exceeds
$
\alpha=0.10.
$
The corresponding minimum variance ratio remains very close to one, indicating that the attainable improvement is modest.
The same conclusion is reinforced under the mean squared error criterion. The simulation-based optimal interpolation parameter \(\alpha_{\mathrm{MSE}}^{*}(\tau)\) is identically zero for all quantile levels except the median. At \(\tau=0.5\),
$
\alpha_{\mathrm{MSE}}^{*}(0.5)=0.10,
$
yielding only a slight reduction of the empirical mean squared error.

Overall, the numerical study reveals that the proposed interpolation does not provide a uniform improvement over standard quantile regression. Its finite-sample benefit is confined to a very narrow neighborhood of the median, where a small amount of Huber regularization produces a modest reduction in both variance and mean squared error. As the target quantile moves toward the tails of the distribution, the symmetric Huber regularization becomes increasingly incompatible with the asymmetric geometry of the check loss, leading to a rapid increase in the deviation from the original quantile regression parameter and consequently to a substantial deterioration of finite-sample performance.
These observations complement the asymptotic theory developed in the previous sections. Although the interpolated estimator remains asymptotically consistent and asymptotically normal for every fixed quantile level, the finite-sample behavior depends strongly on the location of the target quantile. The simulation results therefore suggest that Huber interpolation should primarily be viewed as a local regularization strategy for central quantiles rather than as a universally superior alternative to ordinary quantile regression.

\subsection{Discussion and interpretation}

The finite-sample experiments confirm that the effect of interpolation depends
strongly on the target quantile.
Although the interpolated estimator remains asymptotically valid for every
fixed $\tau\in(0,1)$, the simulations show that the finite-sample benefit of
Huber regularization is confined to a very narrow neighborhood of the median.
For almost all quantile levels, the classical quantile regression estimator
already provides the smallest empirical variance and mean squared error.

This behavior can be explained by the interaction between the geometry of the
check loss and that of the Huber loss.
Near the median, where the asymmetric Laplace distribution is nearly symmetric, the Huber approximation introduces only a mild regularization, leading to a slight reduction in finite-sample variability without substantially increasing the deviation from the classical quantile regression parameter. As the target quantile moves away from the median, however, the increasing asymmetry of the check loss becomes progressively less compatible with the symmetric Huber regularization. The observed deterioration appears to be driven not only by increased variability but also by an increasing deviation from the original quantile regression parameter. Indeed, for \(\alpha>0\), the interpolated estimator minimizes a different population objective from that of classical quantile regression, so part of the observed deviation may correspond to a target bias rather than solely to finite-sample estimation error.

From a practical perspective, these results indicate that Huber interpolation
should not be regarded as a universal improvement over ordinary quantile
regression.
Rather, it constitutes a local regularization technique whose potential
benefit is limited to central quantiles and remains quantitatively modest even
there.
Under the correctly specified asymmetric Laplace model, the classical quantile
regression estimator already exhibits nearly optimal finite-sample behavior
over most of the quantile range.

These observations also suggest that future developments should focus on
regularization schemes that preserve the intrinsic asymmetry of the quantile
loss.
In particular, asymmetric smooth interpolations adapted to the target quantile
may retain the numerical advantages of regularization while avoiding the bias
introduced by symmetric smoothing at extreme quantiles.

\section{Numerical Illustrations}
\label{sec:cac40}

To illustrate the generalized quantile interpretation introduced in Subsection~\ref{subsec:inter_estim}, we consider a real-data experiment based on daily log-returns of the CAC-40 index~\citep{quantmod}. Financial return series are known to exhibit asymmetry, heavy tails, and occasional extreme observations, making them particularly well suited for studying robust and quantile-based estimators.
Let
\[
Y_t
=
\log(P_t)-\log(P_{t-1})
\]
denote the daily log-return computed from the CAC-40 closing prices. 
Figure~\ref{fig:CAC40_diagnostics} displays the CAC-40 closing prices, the associated daily log-returns, and the empirical distribution of the returns.
The histogram of returns reveals clear heavy-tailed behavior together with noticeable asymmetry, which motivates the use of robust and quantile-based estimators rather than purely mean-based procedures.
\begin{figure}[!htbp]
	\centering
	\includegraphics[width=\textwidth]{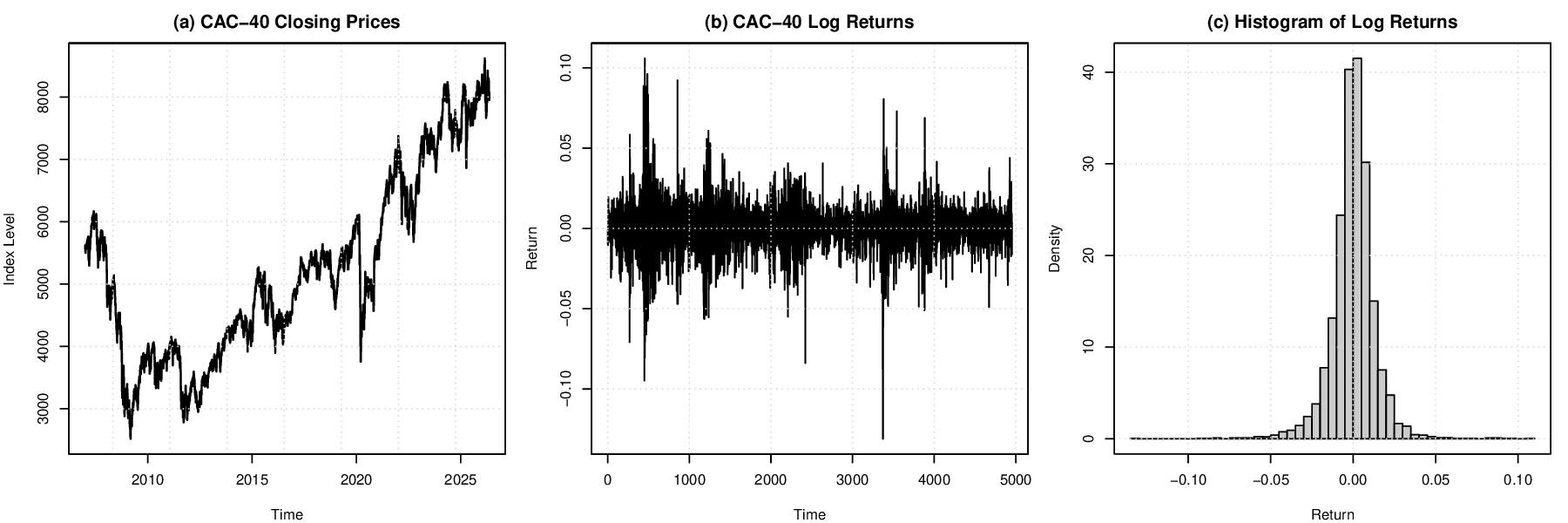}
	\caption{
		Diagnostic plots for the CAC-40 data.
		(a) Daily closing prices of the CAC-40 index.
		(b) Daily log-returns.
		(c) Histogram of daily log-returns.
		The return distribution exhibits heavy tails and mild asymmetry, motivating the use of robust and quantile-based estimators.
	}
	\label{fig:CAC40_diagnostics}
\end{figure}

\noindent
We fix the target quantile level
$
\tau=0.9.
$
The classical empirical quantile estimator therefore corresponds to the empirical quantile of order \(0.9\), associated with the upper \(10\%\) tail of the return distribution.
For the quadratic interpolation, we consider the penalized estimator
\[
\hat q(\tau,h)
=
\arg\min_{q\in\mathbb R}
\left\{
\sum_{i=1}^n
\rho_\tau(y_i-q)
+
\frac h2
\sum_{i=1}^n(y_i-q)^2
\right\}.
\]
The corresponding first-order condition is
\[
F_n(\hat q(\tau,h))
=
\tau
+
h\bigl(\hat q(\tau,h)-\bar Y\bigr).
\]
For each value of the regularization parameter \(h\), we compute the effective empirical quantile order
\[
p(\tau,h)
=
F_n(\hat q(\tau,h))
=
\frac1n
\sum_{i=1}^n
\mathbf 1_{\{y_i\le \hat q(\tau,h)\}},
\]
together with the associated classical empirical quantile
$
y_{(np(\tau,h))}.
$
According to the interpretation developed in Subsection~\ref{subsec:inter_estim}, the quantity \(p(\tau,h)\) represents the empirical probability level naturally associated with the interpolated estimator.
Figure~\ref{fig:generalized_quantile_quadratic} displays the interpolated estimator \(\hat q(\tau,h)\), the effective empirical order \(F_n(\hat q(\tau,h))\), and the difference $\hat q(\tau,h)-y_{(np(\tau,h))}.$
For \(h=0\), the estimator coincides with the classical empirical quantile estimator and therefore satisfies
$
F_n(\hat q_0)
\approx
\tau=0.9.
$
As the regularization parameter increases, the estimator moves continuously toward the sample mean. Consequently, the associated empirical order decreases progressively toward
$
F_n(\bar Y),
$
which corresponds to the empirical order of the sample mean.
Numerically, the experiment shows a smooth transition from
$
F_n(\hat q_0)
\approx
0.90
$
toward
$
F_n(\bar Y)
\approx
0.61.
$
An important observation is that the interpolated estimator \(\hat q(\tau,h)\) and the associated empirical quantile \(y_{(np(\tau,h))}\) are visually almost indistinguishable. The third panel therefore displays their difference
$
\hat q(\tau,h)-y_{(np(\tau,h))},
$
which remains extremely small throughout the interpolation path, typically of magnitude between \(10^{-6}\) and \(10^{-5}\). This confirms that the interpolated estimator behaves numerically as a smooth generalized empirical quantile, while still differing slightly from the discrete empirical quantile induced by the order statistic structure.
We next consider the Huber interpolation defined by
$
(1-\alpha)\rho_\tau(u)
+
\alpha\rho_H(u),
$
or equivalently, for \(\alpha<1\),
\[
\rho_\tau(u)
+
h\rho_H(u),
\qquad
h=\frac{\alpha}{1-\alpha}.
\]
where \(\rho_\tau\) denotes the quantile check loss and \(\rho_H\) denotes the Huber loss.
For each interpolation parameter \(\alpha\in[0,1]\), we compute the corresponding estimator \(\hat q_\alpha\), its effective empirical order
$
p(\alpha)
=
F_n(\hat q_\alpha),
$
and the associated classical empirical quantile
$
y_{(np(\alpha))}.
$
Figure~\ref{fig:generalized_quantile_huber} displays the interpolated estimator \(\hat q_\alpha\), the effective empirical order \(F_n(\hat q_\alpha)\), and the difference $\hat q_\alpha-y_{(np(\alpha))}.$
For \(\alpha=0\), the estimator reduces to the classical empirical quantile estimator and therefore satisfies
$
F_n(\hat q_0)
\approx
0.9.
$
As \(\alpha\) increases, the estimator becomes progressively more influenced by the Huber regularization and moves toward the Huber location estimator. Consequently, the associated empirical order converges toward
$
F_n(\hat\theta_H).
$
Numerically, we obtain
$
F_n(\hat\theta_H)
\approx
0.485,
$
which is close to the empirical median level.
As in the quadratic case, the interpolated estimator and the associated empirical quantile are nearly identical visually, and the third panel reveals only very small deviations between them. The observed differences again remain of very small magnitude, confirming that the Huber interpolation also generates a smooth family of generalized empirical quantiles.
The transition observed in Figure~\ref{fig:generalized_quantile_huber} is substantially more pronounced than in the quadratic case. For moderate values of \(\alpha\), the estimator remains close to the target quantile level \(\tau=0.9\), whereas for values of \(\alpha\) close to \(1\) the empirical order decreases rapidly toward the order associated with the Huber estimator.

Overall, these experiments provide direct empirical support for the interpretation proposed in Subsection~\ref{subsec:inter_estim}. The interpolated estimators introduced in this paper may therefore be viewed as generalized empirical quantiles whose effective order evolves continuously with the regularization parameter. The quadratic interpolation generates a continuous path between the empirical quantile and the sample mean, whereas the Huber interpolation generates a robustified path between the empirical quantile and the Huber M-estimator.

\begin{figure}[!htbp]
	\centering
	\includegraphics[width=\textwidth]{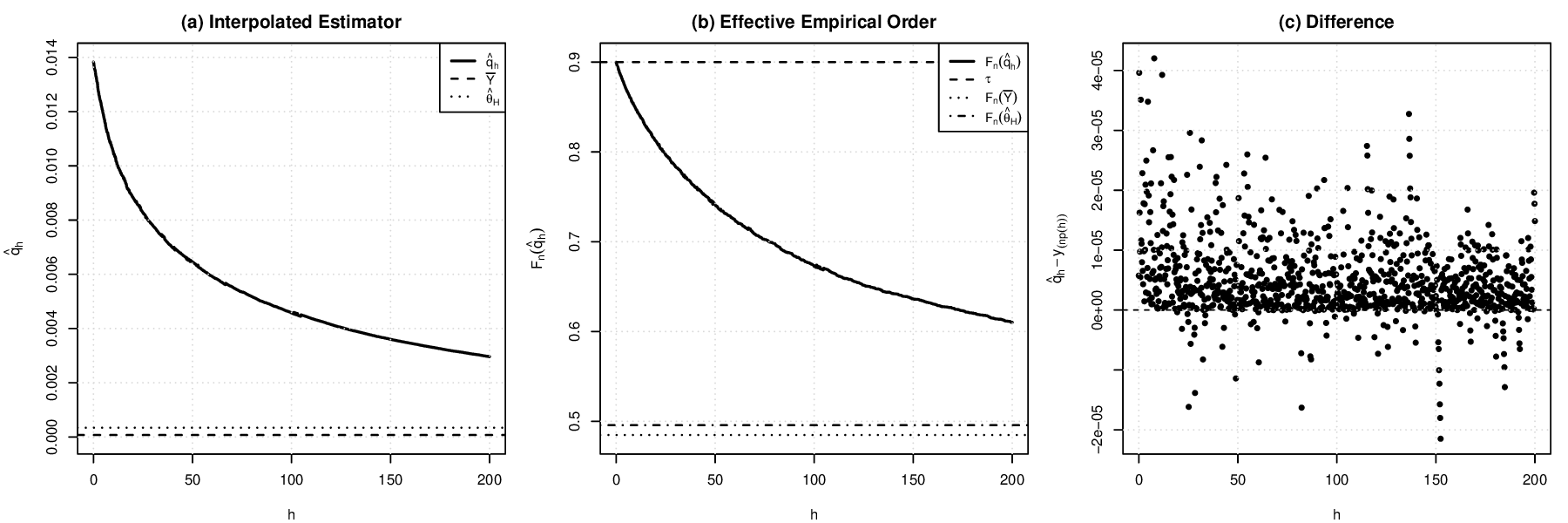}
	\caption{
		Quadratic interpolation for CAC-40 daily log-returns with target quantile level \(\tau=0.9\).
		(a) Evolution of the interpolated estimator \(\hat q(\tau,h)\) as a function of the regularization parameter \(h\).
		The estimator moves continuously from the empirical quantile toward the sample mean.
		(b) Evolution of the effective empirical quantile order \(F_n(\hat q(\tau,h))\).
		The empirical order decreases continuously from the target level \(\tau=0.9\) toward the empirical order of the sample mean.
		(c) Difference between the interpolated estimator and the associated classical empirical quantile,
		$
		\hat q(\tau,h)-y_{(np(\tau,h))}.
		$
		The differences remain extremely small throughout the interpolation path.}
	\label{fig:generalized_quantile_quadratic}
\end{figure}

\begin{figure}[!htbp]
	\centering
	\includegraphics[width=\textwidth]{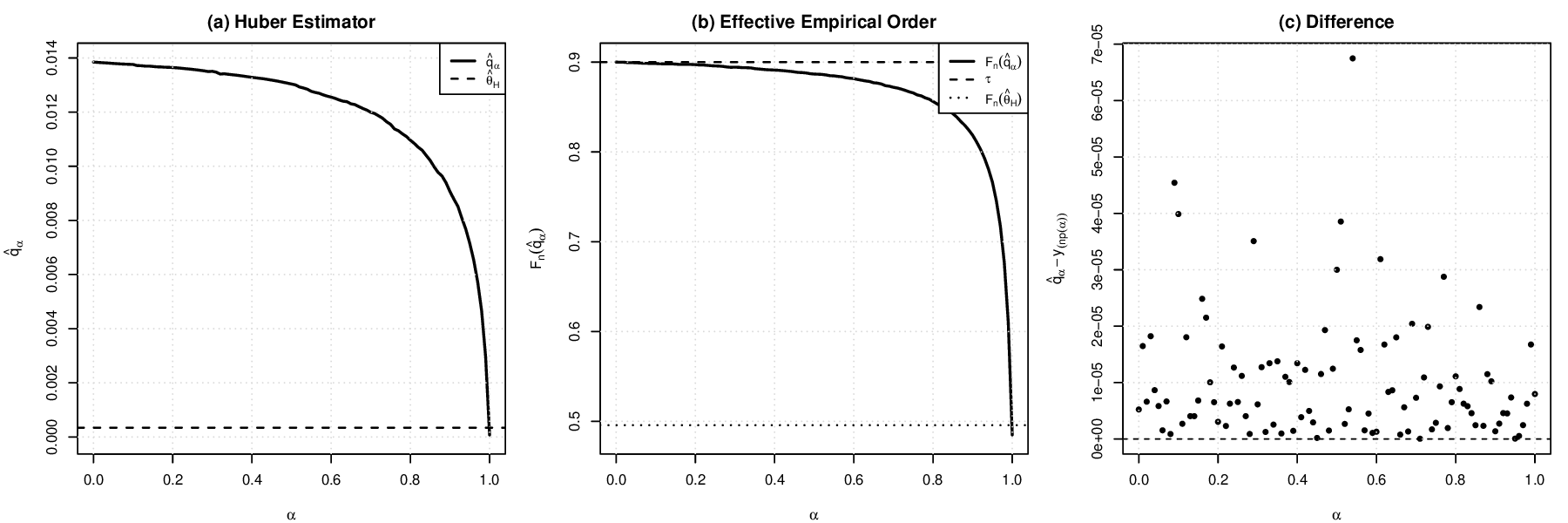}
	\caption{
		Huber interpolation for CAC-40 daily log-returns with target quantile level \(\tau=0.9\).
		(a) Evolution of the Huber-interpolated estimator \(\hat q_\alpha\) as a function of the interpolation parameter \(\alpha\).
		(b) Evolution of the effective empirical quantile order \(F_n(\hat q_\alpha)\).
		The estimator evolves continuously from the empirical quantile toward the Huber location estimator, while the associated empirical order moves from \(\tau=0.9\) toward the empirical order of the Huber estimator.
		(c) Difference between the interpolated estimator and the associated classical empirical quantile,
		$
		\hat q_\alpha-y_{(np(\alpha))}.
		$
		The deviations remain very small along the interpolation path.}
	\label{fig:generalized_quantile_huber}
\end{figure}

\paragraph{Competing interests}
The authors declare that they have no competing interests.

\paragraph{Use of AI tools}
During the preparation of this manuscript, the authors used AI tools to assist with language editing, improving the clarity and presentation of the manuscript. The authors reviewed and edited all AI-generated content, take full responsibility for the final version of the manuscript, and confirm that the scientific content, mathematical results, proofs, simulations, and conclusions are their own.

\bibliographystyle{apalike}
\bibliography{references}

\clearpage
\appendix
\section{Asymptotic efficiency of the quadratic interpolation}
\label{sec:quadratic-efficiency}

Let
\[
q(\tau)=F^{-1}(\tau)
\]
denote the theoretical quantile of order \(\tau\), and let
\[
f_0=f(q(\tau))
\]
be the density evaluated at the quantile.
For the classical empirical quantile estimator \(\hat q(\tau)\), corresponding to \(h=0\), the asymptotic variance is
\[
V(0)
=
\frac{\tau(1-\tau)}{f_0^2}.
\]
Consider now the quadratic interpolated estimator
\[
\hat q(\tau,h)
=
\arg\min_q
\left\{
\sum_{i=1}^n
\rho_\tau(y_i-q)
+
\frac{h}{2}
\sum_{i=1}^n
(y_i-q)^2
\right\}.
\]
At the population level, the minimizer \(q_h\) satisfies
\[
F(q_h)-\tau+h(q_h-\mathbb E Y)=0.
\]
In particular,
\[
q_h=q(\tau)+O(h),
\qquad
(h\to0).
\]
Using standard M-estimation theory, the asymptotic variance of \(\hat q(\tau,h)\) may be written as
\[
V(h)
=
\frac{A(h)}{B(h)^2},
\]
where
\[
A(h)
=
\operatorname{Var}
\Bigl(
\psi_\tau(Y-q_h)
+
h(Y-q_h)
\Bigr),
\]
and
\[
B(h)
=
f(q_h)+h.
\]
For small values of \(h\), replacing \(q_h\) by \(q(\tau)\) yields the second-order approximation
\[
V(h)
\approx
\frac{
	\tau(1-\tau)
	+
	2h\mu_\tau
	+
	h^2\sigma^2
}{
	(f_0+h)^2
},
\]
where
\[
\mu_\tau
=
\mathbb E\!\left[\rho_\tau(Y-q(\tau))\right],
\]
and
\[
\sigma^2
=
\mathbb E\!\left[(Y-q(\tau))^2\right].
\]
Differentiating with respect to \(h\), we obtain
\[
\frac{dV}{dh}
=
\frac{
	2\left[
	(\mu_\tau f_0-\tau(1-\tau))
	+
	h(\sigma^2 f_0-\mu_\tau)
	\right]
}{
	(f_0+h)^3
}.
\]
Consequently, the sign of \(dV(h)/dh\) is determined by the affine function
\[
g(h)
=
(\mu_\tau f_0-\tau(1-\tau))
+
h(\sigma^2 f_0-\mu_\tau).
\]
In particular,
\[
\frac{dV}{dh}\Big|_{h=0}<0
\]
if and only if
\[
\mu_\tau f_0
<
\tau(1-\tau).
\]
Therefore, for sufficiently small \(h>0\),
\[
V(h)<V(0)
\]
whenever the above inequality holds.
This condition is naturally satisfied when the density at the target quantile is sufficiently small. Such situations occur for heavy-tailed distributions and for extreme quantiles located in the tails of the distribution.
Table~\ref{tab:variance_reduction_condition} provides a numerical verification of the condition
\[
\mu_\tau f_0
<
\tau(1-\tau)
\]
for several distributions and quantile levels.
For each model, the table reports the theoretical quantile \(q(\tau)\), the density value
\[
f_0=f(q(\tau)),
\]
the quantity
\[
\mu_\tau
=
\mathbb E[\rho_\tau(Y-q(\tau))],
\]
and the product \(\mu_\tau f_0\).
The numerical results confirm that
\[
\mu_\tau f_0
<
\tau(1-\tau)
\]
for Gaussian, Laplace, and Student distributions, particularly at extreme quantile levels where the density at the target quantile becomes small.
Consequently, for sufficiently small \(h>0\), the quadratic interpolation may reduce the asymptotic variance relative to the classical empirical quantile estimator.
\begin{table}[!htbp]
	\centering
	\begin{tabular}{lcccccc}
		\toprule
		Distribution & $\tau$ & $q(\tau)$ & $f_0$ & $\mu_\tau$ & $\mu_\tau f_0$ & $\tau(1-\tau)$ \\
		\midrule
		\multirow{3}{*}{Gaussian}
		& 0.50 & 0.0000 & 0.3989 & 0.3989 & 0.1591 & 0.2500 \\
		& 0.90 & 1.2816 & 0.1755 & 0.1755 & 0.0308 & 0.0900 \\
		& 0.99 & 2.3263 & 0.0267 & 0.0266 & 0.0007 & 0.0099 \\
		\midrule
		\multirow{3}{*}{Laplace}
		& 0.50 & 0.0000 & 0.5000 & 0.4999 & 0.2499 & 0.2500 \\
		& 0.90 & 1.6094 & 0.1000 & 0.2612 & 0.0261 & 0.0900 \\
		& 0.99 & 3.9120 & 0.0100 & 0.0491 & 0.0005 & 0.0099 \\
		\midrule
		\multirow{3}{*}{Student $t_3$}
		& 0.50 & 0.0000 & 0.3676 & 0.5517 & 0.2028 & 0.2500 \\
		& 0.90 & 1.6377 & 0.1025 & 0.2920 & 0.0299 & 0.0900 \\
		& 0.99 & 4.5407 & 0.0059 & 0.0706 & 0.0004 & 0.0099 \\
		\midrule
		\multirow{3}{*}{Student $t_5$}
		& 0.50 & 0.0000 & 0.3796 & 0.4751 & 0.1804 & 0.2500 \\
		& 0.90 & 1.4759 & 0.1283 & 0.2300 & 0.0295 & 0.0900 \\
		& 0.99 & 3.3649 & 0.0109 & 0.0443 & 0.0005 & 0.0099 \\
		\bottomrule
	\end{tabular}
	\caption{
		Numerical verification of the condition
		\(
		\mu_\tau f_0 < \tau(1-\tau)
		\)
		for several distributions and quantile levels.
		The results show that the sufficient condition for local variance reduction is satisfied for heavy-tailed distributions and for extreme quantiles where the density at the target quantile becomes small.
	}
	\label{tab:variance_reduction_condition}
\end{table}
By contrast, under the asymmetric Laplace distribution associated with classical quantile regression likelihood theory, the standard quantile estimator is already asymptotically efficient. In this case, the quadratic interpolation does not improve the asymptotic variance, which is consistent with the numerical experiments reported below.
Moreover, when
\[
\sigma^2 f_0>\mu_\tau,
\]
the function \(V(h)\) possesses a unique minimum located at
\[
h^\star
=
\frac{
	\tau(1-\tau)-\mu_\tau f_0
}{
	\sigma^2 f_0-\mu_\tau
}.
\]
Thus, the interpolation parameter may be interpreted as a smoothing parameter balancing robustness and asymptotic efficiency.

\section{Proofs of asymptotic normality results}

\subsection{Proof of Theorem \ref{thm:quadratic-clt}}
\label{app:proof-quadratic-clt}

Let $q_0 = q_Q(\tau,h)$ denote the unique solution of the
interpolation equation~\eqref{eq:interpolation-equation}, whose
existence and uniqueness follow from
Proposition~\ref{prop:parametrization}.
Since the indicator function
$
\mathbf 1_{\{y<q\}}
$
is discontinuous in \(q\) at \(y=q\), the score function
\[
\Psi(y,q)
=
\tau-\mathbf 1_{\{y<q\}}+h(y-q)
\]
is not of class \(C^1\) in \(q\).
Consequently, the standard asymptotic normality theorem for smooth
M-estimating equations does not apply directly.
The proof below follows the empirical-process approach for
M-estimators developed in Chapters~5 and~19 of
\citet{Vaart1998}.
Define the empirical and population mean functions
\[
M_n(q) = \frac{1}{n}\sum_{i=1}^n \Psi(Y_i,q),
\qquad
M(q) = \mathbb{E}[\Psi(Y,q)] = \tau - F(q) + h(m-q).
\]
The estimator $\hat{q}_Q$ satisfies $M_n(\hat{q}_Q) = 0$
and $q_0$ satisfies $M(q_0) = 0$.

\paragraph{Step 1: Differentiability of the population criterion.}

The map $q \mapsto M(q) = \tau - F(q) + h(m-q)$ is differentiable
with derivative
\[
M'(q) = -f(q) - h.
\]
This derivative is obtained by differentiating the
\emph{expectation} $M(q)$ with respect to $q$, not by
differentiating the discontinuous integrand $\Psi(y,q)$
pointwise. Specifically, $\tfrac{d}{dq}F(q) = f(q)$ holds at
every point where $F$ admits a continuous density, by the
definition of density; no differentiability of $\psi_\tau$ is
required. Since $f(q_0) > 0$ and $h \ge 0$,
\[
|M'(q_0)| = f(q_0) + h > 0,
\]
so $q_0$ is a locally identifiable zero of $M$.

\paragraph{Step 2: Consistency.}

The class of functions
$\mathcal{F} = \{y \mapsto \mathbf{1}_{\{y<q\}} : q \in \mathbb{R}\}$
is a Glivenko--Cantelli class
\citep[Chapter~19]{Vaart1998},
so $\sup_{q \in \mathbb{R}} |F_n(q) - F(q)| \to 0$
almost surely. The assumption $\mathbb{E}[Y^2] < \infty$ ensures
that the strong law of large numbers applies to the linear term
$h(Y-q)$ uniformly on compact sets. Together these give
\[
\sup_{q \in K} |M_n(q) - M(q)| \xrightarrow{\mathrm{a.s.}} 0
\]
for every compact $K \subset \mathbb{R}$. Since $q_0$ is the
unique zero of $M$ and $|M'(q_0)| > 0$, standard M-estimator
consistency arguments give
\[
\hat{q}_Q \xrightarrow{p} q_0;
\]
see \citet[Theorem~5.7]{Vaart1998}.

\paragraph{Step 3: Stochastic linearization.}

Since $M_n(\hat{q}_Q) = 0$ and $M(q_0) = 0$, write
\begin{equation}
	\label{eq:proof1_decomp}
	0
	= \bigl(M_n(\hat{q}_Q) - M_n(q_0)\bigr)
	+ \bigl(M_n(q_0) - M(q_0)\bigr).
\end{equation}
For any $q$,
\[
M_n(q) - M_n(q_0)
= -\bigl(F_n(q) - F_n(q_0)\bigr) - h(q - q_0).
\]
Decompose the increment of $F_n$ as
\[
F_n(q) - F_n(q_0)
= \bigl(F(q) - F(q_0)\bigr)
+ \bigl\{(F_n(q) - F(q)) - (F_n(q_0) - F(q_0))\bigr\}.
\]
\textit{Deterministic part.}
Since $f$ is continuous at $q_0$,
\[
F(q) - F(q_0) = f(q_0)(q - q_0) + o(|q - q_0|)
\quad (q \to q_0).
\]
\textit{Stochastic part.}
The class $\mathcal{F}$ is also a Donsker class
\citep[Chapter~19]{Vaart1998},
so the empirical process
$\mathbb{G}_n = \sqrt{n}(F_n - F)$
converges weakly in $\ell^\infty(\mathbb{R})$.
In particular, $\mathbb{G}_n$ is asymptotically
equicontinuous at $q_0$: for every $\varepsilon > 0$,
\[
\lim_{\delta \to 0}
\limsup_{n \to \infty}
P\!\Bigl(
\sup_{|q - q_0| \le \delta}
|\mathbb{G}_n(q) - \mathbb{G}_n(q_0)|
> \varepsilon
\Bigr) = 0.
\]
Since $\hat{q}_Q \xrightarrow{p} q_0$, this equicontinuity
implies $\mathbb{G}_n(\hat{q}_Q) - \mathbb{G}_n(q_0) = o_p(1)$,
and therefore
\[
(F_n(\hat{q}_Q) - F(\hat{q}_Q))
- (F_n(q_0) - F(q_0))
= n^{-1/2}\bigl(\mathbb{G}_n(\hat{q}_Q) - \mathbb{G}_n(q_0)\bigr)
= o_p(n^{-1/2}).
\]
Combining the deterministic and stochastic parts and evaluating
at $q = \hat{q}_Q$,
\[
M_n(\hat{q}_Q) - M_n(q_0)
= -(f(q_0) + h)(\hat{q}_Q - q_0)
+ o_p(|\hat{q}_Q - q_0|)
+ o_p(n^{-1/2}).
\]
Substituting into~\eqref{eq:proof1_decomp},
\begin{equation}
	\label{eq:proof1_expansion}
	(f(q_0) + h)(\hat{q}_Q - q_0)
	= (M_n(q_0) - M(q_0))
	+ o_p(|\hat{q}_Q - q_0|)
	+ o_p(n^{-1/2}).
\end{equation}

\paragraph{Step 4: $\sqrt{n}$-consistency and asymptotic
	normality.}

The summands $\Psi(Y_i, q_0)$ are i.i.d.\ with mean
$M(q_0) = 0$ and finite variance
\[
\sigma_\Psi^2
:= \mathbb{E}[\Psi(Y,q_0)^2]
= \mathbb{E}\bigl[
(\psi_\tau(Y-q_0) + h(Y-q_0))^2
\bigr]
< \infty,
\]
where finiteness follows from $|\psi_\tau| \le 1$ and
$\mathbb{E}[Y^2] < \infty$. By the classical central limit
theorem,
\[
\sqrt{n}(M_n(q_0) - M(q_0))
= \frac{1}{\sqrt{n}}\sum_{i=1}^n \Psi(Y_i,q_0)
\xrightarrow{d} \mathcal{N}(0, \sigma_\Psi^2),
\]
so in particular
$\sqrt{n}(M_n(q_0) - M(q_0)) = O_p(1)$.
Multiply~\eqref{eq:proof1_expansion} by $\sqrt{n}$ and let
$Z_n = \sqrt{n}(\hat{q}_Q - q_0)$:
\begin{equation}
	\label{eq:proof1_sqrt_n}
	(f(q_0) + h)\,Z_n
	= \sqrt{n}(M_n(q_0) - M(q_0))
	+ o_p(|Z_n|) + o_p(1).
\end{equation}
Taking absolute values,
$(f(q_0)+h)|Z_n| \le O_p(1) + o_p(|Z_n|) + o_p(1)$,
which gives
$(f(q_0)+h - o_p(1))|Z_n| = O_p(1)$.
Since $f(q_0)+h > 0$, the factor $f(q_0)+h-o_p(1)$ is bounded
away from zero with probability tending to one, and therefore
$Z_n = O_p(1)$, i.e.,
$\sqrt{n}(\hat{q}_Q - q_0) = O_p(1)$.
It follows that
$o_p(|Z_n|) = o_p(O_p(1)) = o_p(1)$,
and~\eqref{eq:proof1_sqrt_n} reduces to
\[
\sqrt{n}(\hat{q}_Q - q_0)
= \frac{\sqrt{n}(M_n(q_0) - M(q_0))}{f(q_0)+h}
+ o_p(1).
\]
By Slutsky's theorem,
\[
\sqrt{n}(\hat{q}_Q - q_0)
\xrightarrow{d}
\mathcal{N}\!\left(
0,\, \frac{\sigma_\Psi^2}{(f(q_0)+h)^2}
\right).
\]
Substituting the expression of $\sigma_\Psi^2$ and
$q_0 = q_Q(\tau,h)$ yields the stated formula
$\sigma_Q^2(\tau,h)$.

\subsection{Proof of Theorem \ref{thm:huber-clt}}
\label{app:proof-huber-clt}

Let
$
q_0=q_H(\tau,h)
$
denote the unique solution of
\[
F(q)-h\,\mathbb E[\psi_H(Y-q)]
=
\tau,
\]
whose existence and uniqueness follow from
Proposition~\ref{prop:huber-existence}.
Since the indicator function
$
\mathbf 1_{\{y<q\}}
$
is discontinuous in \(q\) at \(y=q\), the score function
\[
\Psi_H(y,q)
=
\tau
-
\mathbf 1_{\{y<q\}}
+
h\,\psi_H(y-q)
\]
is not of class \(C^1\) in \(q\).
Consequently, the standard asymptotic normality theorem for smooth
M-estimators does not apply directly.
The proof is based on the general M-estimation and empirical-process
theory of \citet[Chapters~5 and~19]{Vaart1998}.
Define the empirical and population mean functions
\[
M_n(q)
=
\frac1n
\sum_{i=1}^{n}
\Psi_H(Y_i,q),
\qquad
M(q)
=
\mathbb E[\Psi_H(Y,q)].
\]
Then
\[
M(q)
=
\tau
-
F(q)
+
h\,\mathbb E[\psi_H(Y-q)].
\]
The estimator \(\hat q_H(\tau,h)\) satisfies
$
M_n(\hat q_H)=0,
$
whereas \(q_0\) satisfies
$
M(q_0)=0.
$

\paragraph{Step 1: Differentiability of the population criterion.}

Although the score function itself is not differentiable,
the population function \(M(q)\) is differentiable.
Since \(F\) admits a continuous density \(f\),
$
\frac d{dq}\bigl[-F(q)\bigr]
=
-f(q).
$
Furthermore, the derivative of the Huber score is
$
\psi_H'(u)
=
\mathbf 1_{\{|u|\le k\}}
$
for every \(u\neq\pm k\).
Therefore
\[
\frac{\partial}{\partial q}
\psi_H(Y-q)
=
-
\mathbf 1_{\{|Y-q|\le k\}}
\]
for almost every \(Y\).
Since
\[
\left|
\frac{\partial}{\partial q}
\psi_H(Y-q)
\right|
\le 1,
\]
the dominated convergence theorem implies
\[
\frac d{dq}
\mathbb E[\psi_H(Y-q)]
=
-
\mathbb P(|Y-q|\le k).
\]
Consequently,
\[
M'(q)
=
-f(q)
-
h\,\mathbb P(|Y-q|\le k).
\]
Evaluating at \(q=q_0\) gives
$
M'(q_0)
=
-
A_H(\tau,h),
$
where
$
A_H(\tau,h)
=
f(q_0)
+
h\,\mathbb P(|Y-q_0|\le k).
$
Since \(f(q_0)>0\),
$
A_H(\tau,h)>0.
$
Hence \(q_0\) is a locally identifiable zero of \(M\).

\paragraph{Step 2: Consistency.}

The class
\[
\mathcal F_1
=
\{
y\mapsto \mathbf 1_{\{y<q\}}
:
q\in\mathbb R
\}
\]
is a Glivenko--Cantelli class
\citep[Chapter~19]{Vaart1998}.
Moreover, the family
\[
\mathcal F_2
=
\{
y\mapsto \psi_H(y-q)
:
q\in\mathbb R
\}
\]
is uniformly bounded and Lipschitz in the parameter \(q\).
Therefore it is also Glivenko--Cantelli.
Combining these facts yields
\[
\sup_{q\in K}
|M_n(q)-M(q)|
\xrightarrow{\mathrm{a.s.}}
0
\]
for every compact set \(K\subset\mathbb R\).
Since \(q_0\) is the unique zero of \(M\) and
\[
|M'(q_0)|=A_H(\tau,h)>0,
\]
standard M-estimator consistency arguments
\citep[Theorem~5.7]{Vaart1998}
imply
$
\hat q_H
\xrightarrow{p}
q_0.
$

\paragraph{Step 3: Stochastic linearization.}

Since
\[
M_n(\hat q_H)=0,
\qquad
M(q_0)=0,
\]
we write
\begin{equation}
	\label{eq:huber_decomp}
	0
	=
	\bigl(M_n(\hat q_H)-M_n(q_0)\bigr)
	+
	\bigl(M_n(q_0)-M(q_0)\bigr).
\end{equation}
For any \(q\),
\[
M_n(q)-M_n(q_0)
=
-\bigl(F_n(q)-F_n(q_0)\bigr)
+
h\,R_n(q),
\]
where
\[
R_n(q)
=
\frac1n
\sum_{i=1}^{n}
\Bigl(
\psi_H(Y_i-q)
-
\psi_H(Y_i-q_0)
\Bigr).
\]
For the empirical distribution term,
the same argument as in the proof of
Theorem~\ref{thm:quadratic-clt}
gives
\[
F_n(q)-F_n(q_0)
=
f(q_0)(q-q_0)
+
o_p(|q-q_0|)
+
o_p(n^{-1/2}).
\]
For the Huber term,
the Lipschitz continuity of \(\psi_H\) implies
that
\[
\mathbb E[R_n(q)]
=
-
\mathbb P(|Y-q_0|\le k)
(q-q_0)
+
o(|q-q_0|).
\]
The stochastic fluctuation of \(R_n(\hat q_H)\) around its
expectation is \(o_p(n^{-1/2})\) by the Donsker theorem applied
to the class \(\mathcal F_2\), since
\(\hat q_H\xrightarrow{p}q_0\).
Therefore
\[
R_n(q)
=
-
\mathbb P(|Y-q_0|\le k)
(q-q_0)
+
o_p(|q-q_0|)
+
o_p(n^{-1/2}),
\]
uniformly in a neighborhood of \(q_0\).
Combining both expansions and evaluating at
\(q=\hat q_H\) yields
\[
M_n(\hat q_H)-M_n(q_0)
=
-
A_H(\tau,h)
(\hat q_H-q_0)
+
o_p(|\hat q_H-q_0|)
+
o_p(n^{-1/2}).
\]
Substituting into
\eqref{eq:huber_decomp}
gives
\begin{equation}
	\label{eq:huber_expansion}
	A_H(\tau,h)
	(\hat q_H-q_0)
	=
	(M_n(q_0)-M(q_0))
	+
	o_p(|\hat q_H-q_0|)
	+
	o_p(n^{-1/2}).
\end{equation}

\paragraph{Step 4: \(\sqrt n\)-consistency and asymptotic normality.}

The random variables
\[
\Psi_H(Y_i,q_0)
=
\tau
-
\mathbf 1_{\{Y_i<q_0\}}
+
h\,\psi_H(Y_i-q_0)
\]
are i.i.d. with mean zero and variance
$
B_H(\tau,h)
=
\mathbb E
\Bigl[
\Psi_H(Y,q_0)^2
\Bigr].
$
Since \(\psi_H\) is bounded,
$
B_H(\tau,h)<\infty.
$
Therefore, by the classical central limit theorem,
\[
\sqrt n
\bigl(
M_n(q_0)-M(q_0)
\bigr)
=
\frac1{\sqrt n}
\sum_{i=1}^{n}
\Psi_H(Y_i,q_0)
\xrightarrow d
\mathcal N
\!\left(
0,
B_H(\tau,h)
\right).
\]
In particular,
$
\sqrt n
\bigl(
M_n(q_0)-M(q_0)
\bigr)
=
O_p(1).
$
Multiplying
\eqref{eq:huber_expansion}
by \(\sqrt n\) and defining
$
Z_n
=
\sqrt n
(\hat q_H-q_0),
$
we obtain
\[
A_H(\tau,h)\,Z_n
=
\sqrt n
\bigl(
M_n(q_0)-M(q_0)
\bigr)
+
o_p(|Z_n|)
+
o_p(1).
\]
Arguing exactly as in the proof of
Theorem~\ref{thm:quadratic-clt},
one obtains
$
Z_n=O_p(1),
$
and therefore
$
o_p(|Z_n|)
=
o_p(1).
$
Hence
\[
\sqrt n
(\hat q_H-q_0)
=
\frac{
	\sqrt n
	(M_n(q_0)-M(q_0))
}
{A_H(\tau,h)}
+
o_p(1).
\]
Applying Slutsky's theorem yields
\[
\sqrt n
(\hat q_H-q_0)
\xrightarrow d
\mathcal N
\!\left(
0,
\frac{
	B_H(\tau,h)
}
{
	A_H(\tau,h)^2
}
\right).
\]
Finally, recalling that
$
q_0=q_H(\tau,h),
$
we obtain
\[
\sqrt n
\Bigl(
\hat q_H(\tau,h)
-
q_H(\tau,h)
\Bigr)
\xrightarrow d
\mathcal N
\!\left(
0,
\sigma_H^2(\tau,h)
\right),
\]
with
\[
\sigma_H^2(\tau,h)
=
\frac{
	B_H(\tau,h)
}
{
	A_H(\tau,h)^2
}.
\]
This completes the proof.
\qed

\subsection{Proof of Theorem \ref{thm:bisquare-clt}}
\label{app:proof-bisquare-clt}

Let
$
q_0=q_{\mathrm B}(\tau,h)
$
denote the unique solution of
$
\mathbb E[\Psi_h(Y-q)]
=
0,
$
where
\[
\Psi_h(u)
=
\psi_\tau(u)
+
h\psi_B(u).
\]
Since the indicator function appearing in
$
\psi_\tau(u)
=
\tau-\mathbf 1_{\{u<0\}}
$
is discontinuous at \(u=0\), the score function
$\Psi_h(u)$
is not of class \(C^1\).
Consequently, the standard asymptotic normality theorem for smooth
M-estimators does not apply directly.
The proof below follows the same empirical-process strategy as the
proofs of Theorems~\ref{thm:quadratic-clt} and
\ref{thm:huber-clt}.
Define
\[
M_n(q)
=
\frac1n
\sum_{i=1}^{n}
\Psi_h(Y_i-q),
\qquad
M(q)
=
\mathbb E[\Psi_h(Y-q)].
\]
Then
\[
M_n(\hat q_{\mathrm B})=0,
\qquad
M(q_0)=0.
\]

\paragraph{Step 1: Differentiability of the population criterion.}

Although $\Psi_h$ is not differentiable because of the quantile
component, the population function $M$ is differentiable.
Since
$
\mathbb E[\psi_\tau(Y-q)]
=
\tau-F(q),
$
we obtain
\[
\frac d{dq}
\mathbb E[\psi_\tau(Y-q)]
=
-f(q).
\]
Since $\psi_B$ is bounded and continuously differentiable,
dominated convergence yields
\[
\frac d{dq}
\mathbb E[\psi_B(Y-q)]
=
-
\mathbb E[\psi_B'(Y-q)].
\]
Therefore
$
M'(q)
=
-f(q)
-
h\,
\mathbb E[\psi_B'(Y-q)].
$
Evaluating at $q=q_0$ gives
$
M'(q_0)
=
-A_B(\tau,h),
$
where
\[
A_B(\tau,h)
=
f(q_0)
+
h
\mathbb E[\psi_B'(Y-q_0)].
\]
By assumption,
$
A_B(\tau,h)\neq0.
$
Hence $q_0$ is a locally identifiable zero of $M$.

\paragraph{Step 2: Consistency.}

The class
$
\mathcal F_1
=
\{
y\mapsto \mathbf 1_{\{y<q\}}
:
q\in\mathbb R
\}
$
is Glivenko--Cantelli.
Since $\psi_B$ is bounded and Lipschitz,
the class
\[
\mathcal F_2
=
\{
y\mapsto \psi_B(y-q)
:
q\in\mathbb R
\}
\]
is also Glivenko--Cantelli.
Consequently,
\[
\sup_{q\in K}
|M_n(q)-M(q)|
\xrightarrow{\mathrm{a.s.}}
0
\]
for every compact set $K$.
Since $q_0$ is the unique zero of $M$ and
$A_B(\tau,h)\neq0$,
standard M-estimator consistency arguments imply
$
\hat q_{\mathrm B}
\xrightarrow p
q_0.
$

\paragraph{Step 3: Stochastic linearization.}

Since
\[
M_n(\hat q_{\mathrm B})=0,
\qquad
M(q_0)=0,
\]
we write
\[
0
=
(M_n(\hat q_{\mathrm B})-M_n(q_0))
+
(M_n(q_0)-M(q_0)).
\]
For any $q$,
\[
M_n(q)-M_n(q_0)
=
-(F_n(q)-F_n(q_0))
+
hR_n(q),
\]
where
\[
R_n(q)
=
\frac1n
\sum_{i=1}^{n}
\bigl(
\psi_B(Y_i-q)
-
\psi_B(Y_i-q_0)
\bigr).
\]
The empirical-process argument used in the proofs of
Theorems~\ref{thm:quadratic-clt} and
\ref{thm:huber-clt} yields
\[
F_n(q)-F_n(q_0)
=
f(q_0)(q-q_0)
+
o_p(|q-q_0|)
+
o_p(n^{-1/2}).
\]
Since $\psi_B$ is continuously differentiable,
\[
\mathbb E[R_n(q)]
=
-
\mathbb E[\psi_B'(Y-q_0)]
(q-q_0)
+
o(|q-q_0|).
\]
The stochastic fluctuation of \(R_n(\hat q_{\mathrm B})\)
around its expectation is \(o_p(n^{-1/2})\) by the Donsker theorem
applied to the class \(\mathcal F_2\), since
\(\hat q_{\mathrm B}\xrightarrow{p}q_0\).
Therefore
\[
R_n(q)
=
-
\mathbb E[\psi_B'(Y-q_0)]
(q-q_0)
+
o_p(|q-q_0|)
+
o_p(n^{-1/2}),
\]
uniformly in a neighborhood of \(q_0\).
Hence
\[
M_n(\hat q_{\mathrm B})
-
M_n(q_0)
=
-
A_B(\tau,h)
(\hat q_{\mathrm B}-q_0)
+
o_p(|\hat q_{\mathrm B}-q_0|)
+
o_p(n^{-1/2}).
\]
Therefore
\[
A_B(\tau,h)
(\hat q_{\mathrm B}-q_0)
=
(M_n(q_0)-M(q_0))
+
o_p(|\hat q_{\mathrm B}-q_0|)
+
o_p(n^{-1/2}).
\]

\paragraph{Step 4: $\sqrt n$-consistency and asymptotic normality.}

The random variables
\[
\Psi_h(Y_i-q_0)
=
\psi_\tau(Y_i-q_0)
+
h\psi_B(Y_i-q_0)
\]
are i.i.d. with mean zero and variance
$
B_B(\tau,h)
=
\mathbb E
\!\left[
\Psi_h(Y-q_0)^2
\right].
$
Since $\psi_\tau$ and $\psi_B$ are bounded,
$
B_B(\tau,h)<\infty.
$
The classical central limit theorem yields
\[
\sqrt n
(M_n(q_0)-M(q_0))
=
\frac1{\sqrt n}
\sum_{i=1}^{n}
\Psi_h(Y_i-q_0)
\xrightarrow d
\mathcal N
\!\left(
0,
B_B(\tau,h)
\right).
\]
Arguing exactly as in the proofs of
Theorems~\ref{thm:quadratic-clt}
and
\ref{thm:huber-clt},
one obtains
\[
\sqrt n
(\hat q_{\mathrm B}-q_0)
=
\frac{
	\sqrt n(M_n(q_0)-M(q_0))
}
{
	A_B(\tau,h)
}
+
o_p(1).
\]
By Slutsky's theorem,
\[
\sqrt n
(\hat q_{\mathrm B}-q_0)
\xrightarrow d
\mathcal N
\!\left(
0,
\frac{
	B_B(\tau,h)
}{
	A_B(\tau,h)^2
}
\right).
\]
Since
\[
q_0=q_{\mathrm B}(\tau,h),
\]
the stated variance formula follows.
\qed

\end{document}